\documentclass[11pt,  oneside, a4paper]{article}
\usepackage[utf8]{inputenc}
\usepackage[a4paper, total={6.5in,9in}]{geometry}

\usepackage{graphicx} 
\usepackage{bbm}
\usepackage{amsfonts}
\usepackage{enumitem}
\usepackage{fullpage}
\usepackage{amsmath}
\usepackage{hyperref}
\usepackage{mathtools}
\usepackage{lipsum}
\usepackage{blindtext}
\usepackage{titlesec}
\usepackage{amssymb}
\usepackage{amsthm}
\usepackage{rotating}
\usepackage{subcaption}
\usepackage{caption}
\usepackage{svg}
\usepackage{xcolor}
\usepackage{verbatim}
\usepackage{natbib}
\usepackage{cleveref}

\newtheorem{corollary}{Corollary}
\newtheorem{aspt}{Assumption}
\newtheorem{theorem}{Theorem}
\newtheorem{claim}{Claim}
\newtheorem{proposition}{Proposition}
\newtheorem{remark}{Remark}

\title{The Impact of Competition on Outcomes of Score-Based College Admissions}

\usepackage{authblk}

\author[*]{George Bentley}
\author[*]{Diptangshu Sen}
\author[*]{Juba Ziani}

\affil[*]{Georgia Institute of Technology}
\affil[ ]{\texttt{\{gbentley6,dsen30,jziani3\}@gatech.edu}}

\date{\today}

\begin{document}

\maketitle

\begin{abstract}
We study how the design of admissions policies affects the ability of students admitted to universities. In our model, applicants have a multi-dimensional ability from the point of view of the university, which is a combination of a “type” and of “soft skills.” Universities may differ in how they evaluate quality and have differing preferences on type and soft skills. Then, university admissions rely on a single noisy aggregate signal, such as a test score, that may not fully align with the university's preferences, and a university evaluates applicants through the posterior expectations of their preference metric given the observed signal.

Our main results highlight that the design of good admission policies can be counter-intuitive. Under a single university, when holding the number of qualified applicants constant, increasing the usefulness of the signal (by aligning it more closely with the university preferences) leads to a worse type and soft skill for admitted students. Further, a university cannot affect the composition of students that are strong on type versus soft skills by changing their preferences. The picture becomes even more complicated under competition between as few as two universities: self-selection effects among students admitted to both universities can lead to part of the applicant pool switching which university they prefer, even under small changes in the design of the noisy signal. This can, in particular, lead to sudden and non-monotonic loss in the quality of admitted students when changing the alignment between signal and university preferences. Further, a university can get more students by increasing their selectivity. Finally, when admissions rely on separate noisy scores for type and for soft skills, we show that universities that put more emphasis on type (respectively soft skills) end up, counter-intuitively, admitting students with higher soft skills (respectively type).
\end{abstract}


\section{Introduction}\label{sec:intro}

Universities routinely make high-stakes selection decisions using signals that are only partially informative: they do not fully capture students' underlying abilities and may not fully align with a university's preferences. Standardized tests, grades, essays, recommendations, interviews, and extracurricular records all provide partial information about an applicant; they also often do not directly reveal the multidimensional qualities that admissions committees may value. Some universities may value students with stronger technical or mathematical backgrounds, some may prefer students with strong communication or leadership skills, and some may be interested in a balance of both. I.e., different universities may place different weights on different skills.

This leads to the following question: how should a university use admission signals that are only imperfectly aligned with its own preferences? Intuitively, better alignment between the admission metric and the university preferences should improve their outcomes. A university that values academic preparation should benefit from an admission metric that places greater weight on it. While this intuition is useful heuristically, we show that under uncertainty, and when several universities compete for the same students, such simple design principles often break down.

We study these issues through a stylized but informative model of score-based college admissions. Each applicant has two attributes: a ``type'', which represents their level of academic preparation or technical skill, and a ``soft skill'', which captures non-technical attributes such as communication or leadership. A university evaluates applicants according to a weighted combination of these two attributes, which we call a student's ``quality'', but it does not directly observe them. Instead, admissions decisions are based on a noisy signal, or an ``admissions score'', that combines information from both (e.g., a standardized test or an interview score). Each university then forms a posterior expectation of its own quality metric conditional on the observed admissions score and admits applicants whose expected quality exceeds its threshold, following the Bayesian updating framework used in related models of admissions and downstream selection~\citep{corbett1994knowledge,vanlehn1998student,kannan2018downstream,acharya2022wealth}. 

We highlight a number of counterintuitive and non-monotone effects in admissions parameters and preferences. A university may obtain stronger admitted students from a signal that is less aligned with its own preferences; changing a university's preferences may fail to steer the admitted class towards the kind of skills that the university values more; and, under competition, a university may improve both student quality and enrollment by becoming more selective. These effects show that local and heuristic changes to admissions policies can have the opposite of their intended effect once they are considered in their proper context. Our goal is not to prescribe a single optimal admissions policy, but rather to show that the design landscape is delicate and cannot be reduced to high-level intuitive heuristics.

\paragraph{Summary of contributions.}
\begin{enumerate}
\item In Section~\ref{sec:model}, we introduce a Bayesian model of score-based admissions in which students have both a type and soft skill indicator. Universities are Bayesian decision-makers: they make admissions decisions based on the posterior expectation of their own quality metric induced by the noisy signal they observe, such as a test score.
\item In Section~\ref{sec:smsu}, we analyze the case of students applying to a single university. We characterize the average type, average soft skill, and average quality of admitted students as functions of the admissions metric and the university's preferences. We show that, when the count of qualified applicants is held fixed, a university can counterintuitively admit stronger students when the admissions signal is less aligned with its preferences. We also show that changing the university's preference parameter does not change the ratio of average type to average soft skill among admitted students; that is, the university has little leeway to control how balanced admitted students are.
\item We consider competition in Section~\ref{sec:smmu}. Competition between as few as two universities introduces new effects: small changes in the admissions metric or university preferences can shift which university is more selective with respect to test scores. This leads to non-monotonic, discontinuous changes in a university's average ability and enrollment, due to self-selection effects: a university can suddenly become preferred by applicants if it is perceived as more selective, and hence more prestigious. Then, a university can simultaneously improve student quality and enrollment by becoming more selective than its peers, contrary to the common wisdom that greater selectivity leads to smaller pools of qualified admits.
\item Finally, in Section~\ref{sec:mm}, we show that, under multiple admission metrics, the picture becomes even more complicated. When universities use separate noisy metrics for type and soft skill, threshold-based admissions rules can lead to reversals: a university that gives more weight to type may admit students with higher soft skills. In comparison, one that gives more weight to soft skill may admit students with higher types.
\end{enumerate}

\paragraph{Related Work} 
In recent years, a lot of work has focused on how to make high-stakes admissions decisions \textit{efficiently}, \textit{fairly} and \textit{at scale}. A key aspect of this literature involves a mechanism design approach to admissions to schools and colleges, where the focus is on key aspects like efficient matching/assignment with nice properties like respecting preferences, monotonicity, strategy-proofness etc.~\citep{holster2019college,abdulkadirouglu2003school,gong2025dynamic,jiang2019stable,larroucau2022dynamic,biro2022large,rios2021improving,abizada2015stability,kara1997implementation,pathak2017really,feigenbaum2020dynamic,leshno2021cutoff,allman2022designing}.

Our focus, however, is on college admissions through the use of student scores on some well-defined admission metrics. It is relatively well-understood that admission metrics are often imperfect~\citep{minaya2018labor}, different metrics have different levels of screening or sorting power~\citep{gandil2022college,verostek2021analyzing} and use of such metrics can have disparate impacts on under-privileged populations~\citep{green1980impact,fleming1998standardized,au2022unequal,bajwa2023test}. However, despite the scrutiny, such metrics continue to endure as useful evaluation tools for universities because of their simplicity and scalability.   

The main objective of our work is to study the impact of competition in score-based college admissions where universities compete with each other to recruit the best students. Prior work has studied different aspects of how universities may try to earn a competitive edge in this game, for example, through appropriate design of tuition and admission policies~\citep{fu2014equilibrium,shulruf2015admission} or by offering early admits~\citep{antecol2012early}. While these works study interventions for universities, we try to understand how within the existing framework of score-based admission policies. university-side competition may lead to fundamentally different admission outcomes (in terms of the average type and soft skill of admitted students). Existing work also considers competition on the student-side as they compete to enroll into selective universities with limited capacities~\citep{jurajda2011gender,pastine2012student,macleod2015reputation}, but this line of work is not directly related to our paper. 

We study the social impact of admission decisions, drawing a parallel with the literature on fairness impacts of selection or admission rules on process outcomes---e.g.,test-optional admissions~\citep{liu2021test,zhang2025rethinking,sirolly2025impact} or affirmative action~\cite{garg2020dropping,garg2021standardized,lee2024ending}. Key similarities include: i) presence of information gaps for the decision-maker and imperfect evaluations~\citep{liu2021test,garg2020dropping,fish2026}; and ii) using Bayesian updates to estimate applicant's true ``quality" given noisy/imperfect scores~\citep{emelianov2022fairness,acharya2022wealth,zhang2025rethinking,kannan2018downstream,liu2021test}. The fairness literature has also considered different kinds of composition effects and how they can lead to disparate outcomes at the individual and group level~\citep{dwork2018individual,dwork2018group,dwork2020individual,arunachaleswaran2022pipeline,gradwohl2026fairness}. Our results are similar in flavor, but in the context of university-side competition and how admission outcomes can be highly counterintuitive.

\section{Model}\label{sec:model}
We consider a setting where there are $N$ universities $\{\mathcal{U}_i \}_{i \in [N]}$ and a large population of students who apply to get admitted to these universities. Throughout the paper, we consider either $N=1$ (single university case) or $N=2$ (multi-university case) --- in fact, we show that competition effects are already non-trivial with just $2$ universities. 

A student is characterized by a tuple $(t, s)$---$t$ is their ``type" (indicating academic prowess) and $s$ is their ``soft skill" level. We make the following assumption on $t$ and $s$: 

\begin{aspt}\label{aspt:soft_type_normal}
Each student's type $t$ and soft skill $s$ are drawn IID from two independent standard normal distributions, i.e., $s \sim \mathcal{N}(0,1)$, $t \sim \mathcal{N}(0, 1)$, and $s \perp t$.  
\end{aspt}
\begin{remark}
We note that Gaussian assumptions to model populations are fairly common in the literature~\citep{kannan2018downstream,garg2020dropping,liu2021test,fish2026}. While the distributions of $t$ and $s$ in the population can potentially have different spreads (different variances), they can always be normalized and therefore, we use a variance of $1$ for each without loss of generality. 
\end{remark}

\paragraph{Admission Process with a Single Metric.} Students are admitted based on their scores on a \textit{noisy} and \textit{imperfect} admission metric $A$ (used by all universities) where $A \triangleq \alpha t + (1-\alpha)s + \sigma_Z Z$. Here, $\alpha \in (0, 1)$, therefore $A$ measures a student's preparedness for college using a convex combination of their true type and soft skill. $Z \sim \mathcal{N}(0, 1)$ captures the random noise in each measurement with $\sigma_Z$ being the level of noise. Note that $\sigma_Z$ determines how informative the signal is, i.e., a higher level of noise diminishes the usefulness of the signal in accurately evaluating students. As $\sigma_Z \to 0$, the signal becomes an exact measure of the target admission metric $\alpha t + (1-\alpha)s$, while when $\sigma_Z \to +\infty$, the signal becomes completely uninformative. Such noisy observation models have been used extensively in the literature~\citep{acharya2022wealth,kannan2018downstream}. 

Each university $\mathcal{U}_i$ measures student quality using their own metric $Q_i \triangleq \beta_i t + (1-\beta_i)s$ where $\beta_i \in (0, 1)$ is their preference parameter capturing how much they prioritize student type compared to soft skill. Note that $\beta_i$ may be very different from $\alpha$ (how the admission metric $A$ weights type and soft skill) and this drives one of the fundamental tensions in our paper --- each university has a fundamentally different notion of student quality which may be misaligned with the admission metric they need to use for admission. University $\mathcal{U}_i$ has a \textit{quality threshold} $\tau_{Q,i}$, i.e., a student with score $a$ on the admission metric $A$ is considered \emph{qualified} with respect to $\mathcal{U}_i$ if and only if:
\begin{align}
Q_i \geq \tau_{Q,i}.
\end{align}
However, the university can only observe $A$, and cannot observe $Q$ directly. Instead, the university makes decisions on their posterior distribution on the quality of a student given $A$. I.e., the university \emph{admits} a student if and only if:
\begin{align}\label{eq:quality_threshold}
    \mathbb{E}[Q_i ~|~A = a] \geq \tau_{Q,i}. 
\end{align}

\begin{remark}
In much of the paper, we focus on the case where $\tau_{Q,i} > 0$. The average quality of a student in our model is (without loss of generality) $0$. This means that our assumption corresponds, in practice, to a selective university that only is interested in admitting above-average students. 
\end{remark}

We now show that each university's quality threshold $\tau_{Q,i}$ (which is not enforceable because it is not directly observed) always maps to a threshold $\tau_{A, i}$ on the admission metric $A$. We call $\tau_{A, i}$ university $\mathcal{U}_i$'s \textit{admission threshold}. Formally, 
\begin{claim}\label{clm:adm_threshold}
Let $\sigma_{\alpha} = \sqrt{\alpha^2 + (1-\alpha)^2 + \sigma_Z^2}$. Then for university $\mathcal{U}_i$ with preference parameter $\beta_i$ and quality threshold $\tau_{Q,i}$, setting an admission threshold of $\tau_{A, i} = \frac{\tau_{Q,i} \sigma_{\alpha}^2 }{\alpha \beta_i + (1-\alpha)(1-\beta_i)}$ guarantees that each admitted student with score $a \geq \tau_{A_i}$ on metric $A$ must also have expected quality $Q_i$ at least as large as $\tau_{Q,i}$.  
\end{claim}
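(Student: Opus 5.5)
The plan is to compute the posterior mean $\mathbb{E}[Q_i \mid A = a]$ explicitly, show that it is a linear function of $a$ with positive slope, and then invert the inequality \eqref{eq:quality_threshold} to obtain a threshold on $a$.

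Under Assumption~\ref{aspt:soft_type_normal}, and with $Z \sim \mathcal{N}(0,1)$ independent of $(t,s)$, the triple $(t, s, Z)$ is a standard Gaussian vector. Both $Q_i = \beta_i t + (1-\beta_i)s$ and $A = \alpha t + (1-\alpha)s + \sigma_Z Z$ are linear in this vector, so $(Q_i, A)$ is jointly Gaussian with mean zero. We can therefore use the standard conditional-expectation formula for bivariate normals:
\begin{align*}
\mathbb{E}[Q_i \mid A = a] = \frac{\mathrm{Cov}(Q_i, A)}{\mathrm{Var}(A)}\, a .
\end{align*}

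Next we compute the two moments. By independence and unit variances, $\mathrm{Var}(A) = \alpha^2 + (1-\alpha)^2 + \sigma_Z^2 = \sigma_\alpha^2$. The noise $Z$ is uncorrelated with $Q_i$, so $\mathrm{Cov}(Q_i, A) = \alpha\beta_i + (1-\alpha)(1-\beta_i)$. Hence
\begin{align*}
\mathbb{E}[Q_i \mid A = a] = \frac{\alpha\beta_i + (1-\alpha)(1-\beta_i)}{\sigma_\alpha^2}\, a .
\end{align*}

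The step that needs care is the sign of the slope, since the direction of the inequality after dividing depends on it. Because $\alpha, \beta_i \in (0,1)$, both products $\alpha\beta_i$ and $(1-\alpha)(1-\beta_i)$ are strictly positive, so the covariance $c_i := \alpha\beta_i + (1-\alpha)(1-\beta_i)$ is strictly positive, and $\sigma_\alpha^2 > 0$. The posterior mean is therefore strictly increasing in $a$. It follows that $\mathbb{E}[Q_i \mid A = a] \ge \tau_{Q,i}$ holds if and only if $a \ge \tau_{Q,i}\sigma_\alpha^2 / c_i = \tau_{A,i}$.

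In particular, every student with $a \ge \tau_{A,i}$ has expected quality at least $\tau_{Q,i}$, which is the claim. The same computation also gives the converse, so admission by the posterior rule is exactly the threshold rule on $A$. This characterization does not depend on the sign of $\tau_{Q,i}$.
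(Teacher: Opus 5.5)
Your proof is correct and follows essentially the same route as the paper: both use Gaussian conditioning to get $\mathbb{E}[Q_i \mid A = a] = \frac{\alpha\beta_i + (1-\alpha)(1-\beta_i)}{\sigma_\alpha^2}\, a$ and then invert. The differences are cosmetic: the paper conditions $t$ and $s$ separately, combines them by linearity, and solves for the ``borderline'' score $\tau_{A,i}$, whereas you condition $Q_i$ directly and state explicitly that the slope is positive, which is what makes the inversion an equivalence and which the paper uses only implicitly.
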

The proof can be found in Appendix~\ref{app:adm_threshold}. The main idea here is that given a student's score on admission metric $A$, the university can perform a Bayesian update and determine the expected type and soft skill of the student as follows:
\[
    t~|~(A = a) \sim \mathcal{N}\left(\frac{\alpha a}{\sigma_{\alpha}^2}, \frac{(1-\alpha)^2 + \sigma_Z^2}{\sigma_{\alpha}^2} \right);
\]
\[
    s~|~(A = a) \sim \mathcal{N}\left( \frac{(1-\alpha) a}{\sigma_{\alpha}^2}, \frac{\alpha^2 + \sigma_Z^2}{\sigma_{\alpha}^2}\right), 
\]
thereby also enabling them to estimate the ``quality" of the said student (according to their own definition of quality).

\begin{remark}
Importantly, $Q_i$ is a metric that captures the university's preferences on students. The admission score $A$ is the \emph{physical object} on which the university makes decisions, in order to satisfy their preference $Q_i$. Our result shows how to interchangeably map one to the other.   
\end{remark}

\paragraph{Renormalized quality} It is often useful to think about the renormalized quality when thinking of the impact of $\beta_i$ on the parameters of the problem. In particular, note that changing $\beta_i$ in fact changes the variance of the non-normalized metric $Q_i$; then, the same level of quality across different $\beta_i$ can lead to different ranking or quantile that any given student corresponds to within the population. In turn, it is natural to define the normalized quality
\begin{align}
Q_{norm,i} = \frac{\beta_i t + (1-\beta_i) s}{\sigma_\beta},
\end{align}
where $\sigma_\beta = \sqrt{\beta_i^2 + (1-\beta_i)^2}$. We note that quality and renormalized quality are interchangeable when $\beta_i$ is constant, and refer to ``quality'' for simplicity of exposition; results when $\beta_i$ vary are for \emph{renormalized} quality. 

\paragraph{Multiple Metrics.} Some admission processes may use multiple noisy metrics $A_1, A_2,...A_k$. This introduces new complexities into the design of the process because each university now has to interpret multiple signals about a student (with potential overlap) to make decisions. In particular, the conditional distribution of $Q_i~|~(A_1 = a_1,..A_k = a_k)$ may be complex to characterize, even under Gaussian assumptions. 
We expand on some tractable special cases of this model in Section~\ref{sec:mm}.

\section{The Case of a Single University}\label{sec:smsu}

We first consider a simple setting where there is a single university that uses metric $A$ to make admission decisions on students. In this setting, we assume that all admitted students will choose to attend the university.
Note that we slightly abuse notation here and drop all subscripts $i$ (because there is one university) for simplicity of exposition. This is a warm-up section where we study admission outcomes in the absence of competition, which will subsequently become our baseline of comparison in Sections~\ref{sec:smmu} and~\ref{sec:mm}.

Our first result characterizes the admission outcomes in the single university case in terms of the average type and average soft skill of admitted students: 

\begin{proposition} \label{prop:meantypesoft}
The average type and average soft skill level of an admitted student are given by $\frac{\alpha}{\sigma_{\alpha}}\cdot H(\tau_A/\sigma_{\alpha})$ and $\frac{(1-\alpha)}{\sigma_{\alpha}}\cdot H(\tau_A/\sigma_{\alpha})$, respectively, where $H(\cdot)$ represents the hazard rate of the standard normal random variable. Further, the average quality of an admitted student is given by 
\begin{align*}
\frac{\beta \alpha + (1-\beta) (1-\alpha) }{\sigma_\alpha} \cdot H(\tau_A/\sigma_{\alpha}).
\end{align*}
\end{proposition}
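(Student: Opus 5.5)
By Claim~\ref{clm:adm_threshold}, the admitted set is exactly $\{A \geq \tau_A\}$. The admission rule $\mathbb{E}[Q \mid A=a] \geq \tau_Q$ is monotone increasing in $a$, because the posterior mean is linear in $a$ with a positive coefficient. So the average type of an admitted student is $\mathbb{E}[t \mid A \geq \tau_A]$, and similarly for the average soft skill. I would compute these by the tower property, conditioning first on the exact value of $A$:
\[
\mathbb{E}[t \mid A \geq \tau_A] = \mathbb{E}\big[\,\mathbb{E}[t \mid A] \;\big|\; A \geq \tau_A\big] = \frac{\alpha}{\sigma_\alpha^2}\,\mathbb{E}[A \mid A \geq \tau_A].
\]
The inner expectation uses the posterior mean $\mathbb{E}[t\mid A=a]=\alpha a/\sigma_\alpha^2$ stated after Claim~\ref{clm:adm_threshold}. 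In the same way, $\mathbb{E}[s \mid A \geq \tau_A] = \frac{1-\alpha}{\sigma_\alpha^2}\,\mathbb{E}[A \mid A \geq \tau_A]$.

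The remaining step is the truncated-normal mean. Under Assumption~\ref{aspt:soft_type_normal}, $A$ is a sum of independent centered Gaussians, so $A \sim \mathcal{N}(0,\sigma_\alpha^2)$. Writing $A = \sigma_\alpha W$ with $W$ standard normal gives
\[
\mathbb{E}[A \mid A \geq \tau_A] = \sigma_\alpha\, \mathbb{E}[W \mid W \geq \tau_A/\sigma_\alpha] = \sigma_\alpha \frac{\phi(\tau_A/\sigma_\alpha)}{1-\Phi(\tau_A/\sigma_\alpha)} = \sigma_\alpha H(\tau_A/\sigma_\alpha).
\]
The middle equality uses $\int_x^\infty w\phi(w)\,dw = \phi(x)$, which holds because $\phi'(w) = -w\phi(w)$. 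Substituting this into the previous display gives $\frac{\alpha}{\sigma_\alpha}H(\tau_A/\sigma_\alpha)$ for the average type and $\frac{1-\alpha}{\sigma_\alpha}H(\tau_A/\sigma_\alpha)$ for the average soft skill.

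Finally, the average quality follows from linearity of conditional expectation:
\[
\mathbb{E}[Q \mid A\ge\tau_A] = \beta\,\mathbb{E}[t\mid A\ge\tau_A] + (1-\beta)\,\mathbb{E}[s\mid A\ge\tau_A],
\]
which yields the stated expression $\frac{\beta\alpha+(1-\beta)(1-\alpha)}{\sigma_\alpha}H(\tau_A/\sigma_\alpha)$.

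There is no real obstacle here. The only points needing care are, first, justifying that admission coincides with the event $\{A\ge\tau_A\}$, which requires the coefficient $\alpha\beta+(1-\alpha)(1-\beta)$ to be positive (true since $\alpha,\beta\in(0,1)$). Second, one should note that the tower-property step is valid because conditioning on the event $\{A\ge \tau_A\}$ is coarser than conditioning on $A$ itself.
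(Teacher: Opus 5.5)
Your proposal is correct and follows essentially the same route as the paper's proof: you use the linear posterior mean $\mathbb{E}[t \mid A=a]=\alpha a/\sigma_\alpha^2$, the truncated-normal identity $\mathbb{E}[A \mid A\ge\tau_A]=\sigma_\alpha H(\tau_A/\sigma_\alpha)$, and linearity for the quality. Your added justifications fill in details the paper leaves implicit: that admission coincides with $\{A\ge\tau_A\}$ because $\alpha\beta+(1-\alpha)(1-\beta)>0$, the tower-property step, and the derivation of the hazard-rate formula.
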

\noindent 
The proof can be found in Appendix~\ref{app:meantypesoft}. Further, we capture the number of students, or \emph{fraction of the total student pool of applicants} that are qualified below: 

\begin{proposition}\label{prop:quantile}
    Let $\sigma_\beta=\sqrt{\beta^2+(1-\beta)^2}$ be the standard deviation of the quality $Q$, and $\Phi(\cdot)$ be the cdf of a standard Normal variable. The fraction of students that satisfy a threshold on preferences $\tau_Q$ is given by $1-\Phi(\tau_Q/\sigma_\beta)$. In turn, a university who wants to set $\tau_Q$ such that only the top $\delta$-fraction of students is qualified, according to their preferences, must set $\tau_Q=(\Phi^{-1}(1-\delta))\cdot \sigma_\beta$.
\end{proposition}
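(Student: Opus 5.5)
The plan is to compute the distribution of the quality $Q = \beta t + (1-\beta)s$ directly and read off its tail probability. By Assumption~\ref{aspt:soft_type_normal}, $t$ and $s$ are independent standard normals. Hence $Q$, being a linear combination of independent Gaussians, is Gaussian. Its mean is $\beta\cdot 0 + (1-\beta)\cdot 0 = 0$. Its variance is $\beta^2 \mathrm{Var}(t) + (1-\beta)^2 \mathrm{Var}(s) = \beta^2 + (1-\beta)^2 = \sigma_\beta^2$, where independence kills the cross term. So $Q \sim \mathcal{N}(0,\sigma_\beta^2)$, and $Q/\sigma_\beta = Q_{norm}$ is standard normal.

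With this in hand, the fraction of the (large) population that is qualified is the probability that a random student satisfies $Q \geq \tau_Q$. We compute
\[
\Pr[Q \geq \tau_Q] = \Pr\left[\frac{Q}{\sigma_\beta} \geq \frac{\tau_Q}{\sigma_\beta}\right] = 1 - \Phi(\tau_Q/\sigma_\beta).
\]
The first step is a rescaling by $\sigma_\beta > 0$; note that $\sigma_\beta > 0$ since $\beta \in (0,1)$. The second step uses continuity of the normal distribution, so strict versus weak inequality does not matter. Identifying this population fraction with a probability is justified by the paper's continuum or large-population convention.

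For the second part, we require $1 - \Phi(\tau_Q/\sigma_\beta) = \delta$ for $\delta \in (0,1)$. This gives $\Phi(\tau_Q/\sigma_\beta) = 1-\delta$. Since $\Phi$ is a continuous, strictly increasing bijection from $\mathbb{R}$ onto $(0,1)$, it is invertible. Therefore $\tau_Q = \sigma_\beta\, \Phi^{-1}(1-\delta)$, and this choice is the unique one.

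There is no real obstacle here. The only point to be careful about is that qualification is defined by the true quality $Q$ and not by the posterior $\mathbb{E}[Q \mid A]$. The admission metric $A$, the noise $\sigma_Z$, and Claim~\ref{clm:adm_threshold} therefore play no role, and the result depends only on the prior distribution of $Q$.
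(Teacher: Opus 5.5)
Your proof is correct and is essentially the paper's argument. The paper gives no separate proof and only calls the result a direct corollary of Proposition~\ref{prop:meantypesoft}; the Gaussian fact behind that proposition is exactly what you establish here, $Q\sim\mathcal{N}(0,\sigma_\beta^2)$, after which the tail probability and the inversion of $\Phi$ are immediate (your reading that qualification refers to the true $Q$ rather than $\mathbb{E}[Q\mid A]$ matches the paper's remark that this threshold is independent of $\alpha$).
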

\noindent 
This result is a direct corollary of Proposition~\ref{prop:meantypesoft}. Note that the threshold to keep the fraction of the qualified population constant is affected by $\sigma_\beta$ hence $\beta$, but is independent of $\alpha$.

The rest of the section parses this technical result to provide insights on how the admission outcomes depend on key parameters of interest like $\alpha$, $\beta$, $\tau_A$ and $\sigma_Z$. 

\paragraph{Impact of the admission thresholds $\tau_Q$, $\tau_A$:} As per intuition, as the university increases their standards for admission through a higher threshold $\tau_Q$ (thereby, by \Cref{clm:adm_threshold}, linearly increasing the threshold $\tau_A$ along metric $A$), the expected quality of admitted students also increases. 

\begin{corollary}\label{corr:smsu_tau}
    At fixed $\alpha$, the average type and soft skill of students admitted to the university are monotonically increasing functions of $\tau_Q$ and $\tau_A$.  Therefore, the average quality of admitted students is also increasing in $\tau_Q$ and $\tau_A$.  
    \label{monohaz}
\end{corollary}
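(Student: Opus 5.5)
Starting from Proposition~\ref{prop:meantypesoft}, for fixed $\alpha$ (and hence fixed $\sigma_\alpha$) the average type and average soft skill of admitted students are $\frac{\alpha}{\sigma_\alpha} H(\tau_A/\sigma_\alpha)$ and $\frac{1-\alpha}{\sigma_\alpha} H(\tau_A/\sigma_\alpha)$. Both prefactors are strictly positive because $\alpha \in (0,1)$. So the monotonicity claim in $\tau_A$ reduces to showing that the standard normal hazard rate $H(x) = \phi(x)/(1-\Phi(x))$ is strictly increasing in $x$, composed with the increasing map $\tau_A \mapsto \tau_A/\sigma_\alpha$.

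To transfer the claim to $\tau_Q$, I will invoke Claim~\ref{clm:adm_threshold}, which gives $\tau_A = \tau_Q \sigma_\alpha^2/(\alpha\beta + (1-\alpha)(1-\beta))$. For $\alpha,\beta\in(0,1)$ the denominator is strictly positive, so $\tau_A$ is a strictly increasing affine function of $\tau_Q$. Monotonicity in $\tau_Q$ then follows by composition. For the average quality, the expression in Proposition~\ref{prop:meantypesoft} is the positive constant $(\beta\alpha + (1-\beta)(1-\alpha))/\sigma_\alpha$ times $H(\tau_A/\sigma_\alpha)$, so the same argument applies. Equivalently, it is $\beta$ times the average type plus $(1-\beta)$ times the average soft skill, a positive combination of increasing functions.

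The only substantive step is the monotonicity of $H$. I would compute
\[
H'(x) = H(x)\bigl(H(x) - x\bigr)
\]
using $\phi'(x) = -x\phi(x)$. Since $H>0$, it suffices to show $H(x) > x$ for all $x$, i.e., the Mills-ratio bound $1-\Phi(x) < \phi(x)/x$ for $x>0$; for $x\le 0$ it is trivial since $H>0\ge x$. For $x>0$ I would write
\[
1-\Phi(x) = \int_x^\infty \phi(u)\,du < \int_x^\infty \frac{u}{x}\phi(u)\,du = \frac{\phi(x)}{x},
\]
which gives strict inequality and completes the argument. An alternative is to cite log-concavity of the normal density, which implies an increasing hazard rate. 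I expect no real obstacle; the one point to be careful about is the sign of the coefficients, which is guaranteed by $\alpha,\beta\in(0,1)$.
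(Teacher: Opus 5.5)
Your proposal is correct and follows the paper's route: the paper calls this a direct consequence of Proposition~\ref{prop:meantypesoft} together with the monotonicity of the standard normal hazard rate, with Claim~\ref{clm:adm_threshold} carrying the argument from $\tau_A$ over to $\tau_Q$. Your self-contained proof that $H$ is increasing, via $H'(x)=H(x)\bigl(H(x)-x\bigr)$ and the Mills-ratio bound, fills in the one fact the paper takes as known.
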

\noindent
This is a direct consequence of Proposition~\ref{prop:meantypesoft} and the monotonicity of the hazard rate function of the standard normal.  However, increasing the admission threshold does adversely affect the number of students who can be admitted:

\begin{corollary}\label{corr:mononumstu}
    As the admission thresholds $\tau_Q$ and $\tau_A$ increase, the number of students admitted to the university decreases.
\end{corollary}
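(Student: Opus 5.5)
The plan is to compute the number (fraction) of admitted students explicitly and read off its monotonicity. A student is admitted exactly when $A \geq \tau_A$. By Assumption~\ref{aspt:soft_type_normal}, $t$, $s$ and $Z$ are independent standard normals. Therefore $A = \alpha t + (1-\alpha)s + \sigma_Z Z$ is Gaussian with mean $0$ and variance $\alpha^2 + (1-\alpha)^2 + \sigma_Z^2 = \sigma_\alpha^2$. Standardizing gives the admitted fraction
\[
\Pr[A \geq \tau_A] = 1 - \Phi\left(\frac{\tau_A}{\sigma_\alpha}\right).
\]
This is the same computation that underlies Proposition~\ref{prop:meantypesoft}, whose conditioning event is $\{A \geq \tau_A\}$, and Proposition~\ref{prop:quantile}.

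The first monotonicity step is in $\tau_A$. For fixed $\alpha$ and $\sigma_Z$, the quantity $\sigma_\alpha > 0$ is a constant. Since $\Phi$ is strictly increasing, $1-\Phi(\tau_A/\sigma_\alpha)$ is strictly decreasing in $\tau_A$. Equivalently, its derivative is $-\phi(\tau_A/\sigma_\alpha)/\sigma_\alpha < 0$. Also, the admission sets $\{A \geq \tau_A\}$ are nested, shrinking as $\tau_A$ grows.

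The second step transfers this to $\tau_Q$ via Claim~\ref{clm:adm_threshold}. That claim gives
\[
\tau_A = \frac{\tau_Q \sigma_\alpha^2}{\alpha\beta + (1-\alpha)(1-\beta)}.
\]
Because $\alpha, \beta \in (0,1)$, the denominator is strictly positive, so $\tau_A$ is a strictly increasing linear function of $\tau_Q$. Composing with the decreasing map above shows that the admitted fraction is strictly decreasing in $\tau_Q$ as well.

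The only point needing care is positivity of the slope $\sigma_\alpha^2/(\alpha\beta+(1-\alpha)(1-\beta))$. Without it, the direction of monotonicity in $\tau_Q$ could flip. Here it follows immediately from $\alpha, \beta \in (0,1)$, so I do not expect a real obstacle: the result is a direct consequence of the explicit Gaussian tail formula.
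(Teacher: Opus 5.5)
Your proposal is correct and follows the paper's own reasoning: the admitted set is $\{A \geq \tau_A\}$, whose mass $1-\Phi(\tau_A/\sigma_\alpha)$ strictly decreases in $\tau_A$, and Claim~\ref{clm:adm_threshold} makes $\tau_A$ a positive multiple of $\tau_Q$. The paper treats the corollary as immediate and gives no separate proof, so your explicit tail formula and the positivity check on $\alpha\beta+(1-\alpha)(1-\beta)$ are the details it leaves implicit.
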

\noindent 
This highlights an interesting trade-off introduced by selectivity: a high admission threshold means that only a limited number of students are sufficiently ``qualified" to be admitted, but those who get admitted are of high average quality.

\paragraph{Impact of the noise level $\sigma_Z$:} 
We now examine how the average type and soft skill of admitted students are influenced by the level of noise $\sigma_Z$. 

\begin{proposition}\label{prop:noiselim}
As $\sigma_Z$ increases, for $\tau_Q > 0,$ the mean type and mean soft skill at a university monotonically decrease, and the number of students also monotonically decreases.  Furthermore, as $\sigma_Z\rightarrow\infty$, the mean type approaches $\frac{\alpha\tau_Q}{\alpha\beta+(1-\alpha)(1-\beta)}$, and the mean soft skill level approaches $\frac{(1-\alpha)\tau_Q}{\alpha\beta+(1-\alpha)(1-\beta)}$.
\end{proposition}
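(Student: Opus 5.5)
Everything is explicit once we substitute Claim~\ref{clm:adm_threshold} into Proposition~\ref{prop:meantypesoft}. Write $c = \alpha\beta + (1-\alpha)(1-\beta) > 0$, so $\tau_A = \tau_Q\sigma_\alpha^2/c$. The normalized threshold is then
\begin{align*}
x := \tau_A/\sigma_\alpha = \tau_Q\sigma_\alpha/c .
\end{align*}
The mean type becomes $\frac{\alpha}{\sigma_\alpha}H(x)$ and the mean soft skill becomes $\frac{1-\alpha}{\sigma_\alpha}H(x)$. The admitted fraction is $\Pr[A\ge \tau_A] = 1-\Phi(x)$, since $A\sim\mathcal{N}(0,\sigma_\alpha^2)$.

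The only place $\sigma_Z$ enters is $\sigma_\alpha = \sqrt{\alpha^2+(1-\alpha)^2+\sigma_Z^2}$, which is strictly increasing in $\sigma_Z\ge 0$. So I reparametrize by $\sigma_\alpha$ (equivalently by $x$). For $\tau_Q>0$, $x$ is strictly increasing in $\sigma_\alpha$, so the admitted fraction $1-\Phi(x)$ is strictly decreasing. This settles the claim about the number of students.

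For the means, the common factor is $g(\sigma_\alpha) = H(\tau_Q\sigma_\alpha/c)/\sigma_\alpha$. Writing $k=\tau_Q/c>0$, we have $g = k\,H(x)/x$ with $x = k\sigma_\alpha$. It therefore suffices to show that $x\mapsto H(x)/x$ is strictly decreasing on $x>0$. I expect this to be the main (mild) obstacle. Differentiating, the sign of the derivative is that of $xH'(x)-H(x)$. Using $H' = H(H-x)$, this equals $H\,(xH - x^2 - 1)$. So I need the Mills-ratio bound $H(x) < x + 1/x$ for $x>0$, i.e. $1-\Phi(x) > \frac{x}{1+x^2}\phi(x)$, which is the classical Gordon lower bound. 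If needed, I would prove it by checking that $\frac{x}{1+x^2}\phi(x) - (1-\Phi(x))$ tends to $0$ at infinity and has derivative $\frac{-2\phi(x)}{(1+x^2)^2}<0$. Given the bound, $g$ is strictly decreasing in $\sigma_\alpha$, hence in $\sigma_Z$. Since $\alpha,1-\alpha>0$, both the mean type and the mean soft skill decrease monotonically.

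For the limit as $\sigma_Z\to\infty$, we have $x\to\infty$. I use the asymptotic $H(x) = x + 1/x + o(1/x)$, or simply $x < H(x) < x + 1/x$. Then
\begin{align*}
g(\sigma_\alpha) = \frac{k\,H(x)}{x} \to k = \frac{\tau_Q}{c}.
\end{align*}
Hence the mean type tends to $\alpha\tau_Q/c$ and the mean soft skill tends to $(1-\alpha)\tau_Q/c$, as stated. As a sanity check, this limit corresponds to admitted students sitting just above the threshold, where the posterior quality equals $\tau_Q$.
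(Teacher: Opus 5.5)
Your proof is correct and is essentially the paper's: the paper writes the mean type as the conditional mean of $\alpha A/\sigma_\alpha^2$ (a centered Gaussian whose standard deviation $\alpha/\sigma_\alpha$ shrinks as $\sigma_Z$ grows) above the $\sigma_Z$-independent cutoff $\alpha\tau_Q/c$, which is exactly your $\frac{\alpha\tau_Q}{c}\cdot\frac{H(x)}{x}$ with $x=\tau_Q\sigma_\alpha/c$. The paper then simply cites as well known that $H(z)/z$ decreases to $1$ on $z>0$, whereas you prove it via Gordon's bound, and you also handle the admitted fraction $1-\Phi(x)$, which the paper's proof does not address explicitly.

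There is one sign slip in your optional check of Gordon's bound. The derivative of $\frac{x}{1+x^2}\phi(x)-(1-\Phi(x))$ is $+\frac{2\phi(x)}{(1+x^2)^2}>0$, not negative. It is exactly this positivity (an increasing function tending to $0$ at infinity is negative everywhere) that yields $1-\Phi(x)>\frac{x}{1+x^2}\phi(x)$; with the sign you wrote, the same argument would give the reverse inequality.
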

\noindent 
The proof is in Appendix~\ref{app:noiselim}. Unsurprisingly, an increased signal quality leads to better outcomes for the university. 

\paragraph{Impact of $\alpha$:} Here, we consider the impact of $\alpha$ on the average quality of admitted students: as $\alpha$ increases, and the admission metric $A$ puts more importance on type compared to soft skill, the expected type of admitted students increases while at the same time the expected soft skill of admitted students decreases. Formally:

\begin{proposition}\label{prop:alphamon}
    The average type of admitted students monotonically increases in $\alpha$, while the average soft skill of admitted students monotonically decreases in $\alpha$.
\end{proposition}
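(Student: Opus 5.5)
The plan is to use Proposition~\ref{prop:meantypesoft} together with Claim~\ref{clm:adm_threshold}, holding the preference parameter $\beta$, the quality threshold $\tau_Q>0$ and the noise level $\sigma_Z$ fixed. Write $c(\alpha)=\alpha\beta+(1-\alpha)(1-\beta)>0$. By Claim~\ref{clm:adm_threshold}, $\tau_A=\tau_Q\sigma_\alpha^2/c(\alpha)$, so the argument of the hazard rate is
\[
x(\alpha)=\frac{\tau_A}{\sigma_\alpha}=\frac{\tau_Q\,\sigma_\alpha}{c(\alpha)}>0.
\]
The average type is then $f(\alpha)=\frac{\alpha}{\sigma_\alpha}H(x(\alpha))$. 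I will show that $\frac{d}{d\alpha}\log f>0$. The soft-skill statement should follow from the symmetry $(t,\alpha,\beta)\leftrightarrow(s,1-\alpha,1-\beta)$: it swaps the two attributes, leaves $\sigma_\alpha$ and $c$ invariant, and reverses the direction of $\alpha$.

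\textbf{Step 1 (log-derivative).} Using $H'(x)=H(x)\bigl(H(x)-x\bigr)$, I get
\[
\frac{d}{d\alpha}\log f=\Bigl(\frac1\alpha-\frac{\sigma_\alpha'}{\sigma_\alpha}\Bigr)+\theta\Bigl(\frac{\sigma_\alpha'}{\sigma_\alpha}-\frac{c'}{c}\Bigr),
\qquad \theta:=x\bigl(H(x)-x\bigr),
\]
where the second term uses $x'/x=\sigma_\alpha'/\sigma_\alpha-c'/c$. Regrouping gives the convex-combination form
\[
\frac{d}{d\alpha}\log f=(1-\theta)\Bigl(\frac1\alpha-\frac{\sigma_\alpha'}{\sigma_\alpha}\Bigr)+\theta\Bigl(\frac1\alpha-\frac{c'}{c}\Bigr).
\]

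\textbf{Step 2 (both brackets are positive).} With $\sigma_\alpha'=(2\alpha-1)/\sigma_\alpha$, a short computation gives
\[
\frac1\alpha-\frac{\sigma_\alpha'}{\sigma_\alpha}=\frac{1-\alpha+\sigma_Z^2}{\alpha\sigma_\alpha^2}>0.
\]
With $c'=2\beta-1$, similarly
\[
\frac1\alpha-\frac{c'}{c}=\frac{1-\beta}{\alpha c}>0.
\]

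\textbf{Step 3 (main obstacle: $\theta\in(0,1)$).} It remains to show $\theta\in(0,1)$, so that the expression is a genuine convex combination of two positive quantities. Since $x>0$ (this is exactly where $\tau_Q>0$ is used), this needs two classical Mills-ratio facts:
\begin{itemize}
\item $H(x)>x$, which gives $\theta>0$;
\item $H(x)<x+1/x$ for $x>0$, which gives $\theta<1$. This follows from the Birnbaum-type bound $H(x)\le\tfrac12\bigl(x+\sqrt{x^2+4}\bigr)$, or equivalently from the known fact $0<H'<1$ combined with a comparison argument.
\end{itemize}
I would cite or briefly derive these. This is the only non-routine ingredient. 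If it proved delicate, a fallback is to handle the case $\sigma_\alpha'/\sigma_\alpha-c'/c\ge0$ trivially (there $\theta>0$ suffices) and use $\theta<1$ only in the opposite case.

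With $\theta\in(0,1)$, both terms in the convex combination are positive, so $\frac{d}{d\alpha}\log f>0$ and the average type is strictly increasing in $\alpha$. Applying the symmetry noted above gives that the average soft skill is strictly decreasing in $\alpha$.
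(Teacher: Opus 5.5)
Your proof is correct, and it is essentially the paper's argument in differential form. The paper writes the average type as $\mathbb{E}[X\mid X\ge x]$ with $X\sim\mathcal{N}(0,(\alpha/\sigma_\alpha)^2)$ and cut-off $x=\tau_Q\alpha/c(\alpha)$, and then shows separately that raising the cut-off and raising the scale $\alpha/\sigma_\alpha$ (Claim~\ref{cosmon}) each increase the truncated mean. Your two positive brackets are exactly the log-derivatives of that scale and that cut-off, with weights $1-\theta$ and $\theta$, and your key bound $\theta<1$ is equivalent to the fact that $H(z)/z$ is decreasing, which is what the paper invokes in Claim~\ref{limbell}. The difference is that your version makes the weights and the strictness of the monotonicity explicit.
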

\noindent 
The proof can be found in Appendix~\ref{app:alphamon}. However, the average quality of an admitted student depends on both type and soft skill which (as we show above) move in opposite directions as $\alpha$ changes. Our next result shows that the average quality of admits is no longer monotonic in $\alpha$; in fact, when there is no noise and $\sigma_Z = 0$, it achieves its maximum value when $\alpha = \beta$; i.e., when the university's measure of quality and the admission metric are perfectly aligned. The same holds true for the number of admits.
\begin{proposition}
    Assuming $\tau_Q > 0$, as we vary $\alpha$ (fixing all other parameters), the average quality of students and the number of students admitted to university $\mathcal{U}$ are both highest when $\alpha=\beta$ under $\sigma_Z = 0$.  Both quantities monotonically increase as $\alpha$ increases from $0$ to $\beta$ and monotonically decrease as $\alpha$ increases from $\beta$ to $1$.
    \label{alphamax}
\end{proposition}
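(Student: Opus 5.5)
With $\sigma_Z=0$ everything reduces to a single scalar function of $\alpha$, and both claims follow from monotonicity in that scalar. Write $c(\alpha)=\alpha\beta+(1-\alpha)(1-\beta)$ for the covariance between $A$ and $Q$. Then $\sigma_\alpha=\sqrt{\alpha^2+(1-\alpha)^2}$, and Claim~\ref{clm:adm_threshold} gives $\tau_A=\tau_Q\sigma_\alpha^2/c(\alpha)$. The number of admitted students is the fraction with $A\ge\tau_A$, namely
\[
1-\Phi(\tau_A/\sigma_\alpha)=1-\Phi\bigl(\tau_Q\,\sigma_\alpha/c(\alpha)\bigr).
\]
By Proposition~\ref{prop:meantypesoft}, the average quality of admits is
\[
\frac{c(\alpha)}{\sigma_\alpha}\,H\!\left(\frac{\tau_Q\,\sigma_\alpha}{c(\alpha)}\right).
\]

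Introduce the correlation $\rho(\alpha)=c(\alpha)/(\sigma_\alpha\sigma_\beta)\in(0,1]$ between $A$ and $Q$; it is positive because $\alpha,\beta\in(0,1)$. Then both quantities depend on $\alpha$ only through $\rho$:
\begin{itemize}
\item the number of admits is $1-\Phi\bigl(\tau_Q/(\sigma_\beta\rho)\bigr)$;
\item the average quality is $\sigma_\beta\,\rho\,H\bigl(\tau_Q/(\sigma_\beta\rho)\bigr)$.
\end{itemize}
With $\tau_Q>0$, the argument $\tau_Q/(\sigma_\beta\rho)$ is decreasing in $\rho$. Hence the number of admits is increasing in $\rho$.

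For the average quality, set $x=\tau_Q/(\sigma_\beta\rho)>0$, so the quantity equals $\tau_Q\,H(x)/x$. It therefore suffices to show $H(x)/x$ is decreasing on $(0,\infty)$. Its derivative has the sign of $xH'(x)-H(x)$. Using $H'=H(H-x)$, this sign equals that of $H\,(xH-x^2-1)$. So I need $xH(x)<1+x^2$, which is the classical Mills-ratio lower bound $\bar\Phi(x)>\frac{x}{1+x^2}\phi(x)$ for $x>0$. I would cite or quickly derive this bound: differentiate $g(x)=\bar\Phi(x)-\frac{x}{1+x^2}\phi(x)$, get $g'(x)=-2\phi(x)/(1+x^2)^2<0$, and note $g\to0$ as $x\to\infty$. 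I expect this inequality to be the main obstacle. Everything else is algebra.

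It remains to show that $\rho(\alpha)$ is maximized uniquely at $\alpha=\beta$, increases on $(0,\beta]$, and decreases on $[\beta,1)$. Geometrically, $\rho$ is the cosine of the angle between the vectors $(\alpha,1-\alpha)$ and $(\beta,1-\beta)$ in the positive quadrant. As $\alpha$ moves from $0$ to $1$, the direction of $(\alpha,1-\alpha)$ rotates monotonically through that quadrant. So the angle to $(\beta,1-\beta)$ decreases to $0$ at $\alpha=\beta$ and then increases. By Cauchy--Schwarz, $\rho=1$ exactly when $\alpha=\beta$.

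To make this rigorous, I would differentiate $\log\rho$, which gives
\[
\frac{d}{d\alpha}\log\rho=\frac{2\beta-1}{c}-\frac{2\alpha-1}{\sigma_\alpha^2}.
\]
The sign is that of $(2\beta-1)\sigma_\alpha^2-(2\alpha-1)c$. Expanding, this simplifies to $\beta-\alpha$, since all quadratic terms cancel. So $\rho$ increases for $\alpha<\beta$ and decreases for $\alpha>\beta$.

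Composing this with the monotone dependence on $\rho$ established above gives both monotonicity claims and the maximum at $\alpha=\beta$.
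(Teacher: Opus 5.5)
Your proof is correct and follows essentially the same route as the paper. The paper likewise reduces both quantities to a single scalar: the standard deviation $c(\alpha)/\sigma_\alpha$ of the posterior quality $\mathbb{E}[Q\mid A]$, which is your $\sigma_\beta\rho$. It then uses that $H(z)/z$ is decreasing (via its auxiliary claim on $\mathbb{E}[X\mid X\ge x]$ shrinking with the standard deviation) and computes the same derivative numerator $(2\beta-1)\sigma_\alpha^2-(2\alpha-1)c(\alpha)=\beta-\alpha$. The only differences are that you actually derive the decrease of $H(z)/z$ from the Mills-ratio lower bound, where the paper just cites it as well known, and you write out the admit count $1-\Phi\bigl(\tau_Q/(\sigma_\beta\rho)\bigr)$ explicitly, where the paper asserts it in passing.
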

\noindent 
The proof is relegated to Appendix~\ref{app:alphamax}. The intuition is that, as $\beta$ moves away from $\alpha$, finding qualified students becomes more difficult due to the loss of informativeness of the signal\footnote{Similar effects have been observed previously in~\citet{fish2026} in the context of hiring processes where finding applicants qualified along two orthogonal dimensions (in their case, ability and fit) is more difficult.}. In turn, the university overcompensates by hiring more selectively and increase the threshold $\tau_A$ on admission scores in order to balance out the increased risk in decision-making in a less aligned and less informative signal.  Furthermore, because the expected quality of students is Gaussian, and thus the probability density decays super-exponentially, the university setting a threshold farther in the right tail means that most of the students who \textit{do} meet the increased threshold will be close to the threshold. 

\paragraph{Impact of university's preference parameter $\beta$:} The impact of $\beta$ on university admissions is where the picture starts becoming more surprising. Remember that as per \Cref{prop:quantile}, a university interested in improving the quality of admitted students \emph{while keeping the number of qualified students constant}, should set their threshold $\tau_Q =(\Phi^{-1}(1-\delta))\cdot \sigma_\beta$, where $\delta$ is the fraction of the population they want to be qualified\footnote{This can be seen as a university deciding their admission threshold based on known population statistics, and making sure that at least a certain fraction of the population satisfies their desired qualifications.}. This has important implications. Since selectivity corresponds to the proportion of students who meet the university's criterion, we will assume in the forthcoming analysis that $\delta$ is a constant, and thus hold $\tau_Q/\sigma_\beta$ constant.

There, we see that perhaps unsurprisingly, universities get higher quality students when $\beta=\alpha$, i.e. when the signal $A$ is perfectly aligned with the \emph{renormalized} quality metric $Q_{norm}$. As before, and as $\beta$ moves away from $\alpha$, the quality of admitted students with respect to $\beta$ decreases:
\begin{proposition}\label{prop:opdir}
    For fixed $\tau_Q/\sigma_\beta (> 0)$, when varying $\beta$, the average normalized quality $Q_{norm}$ of admitted students, as well as and the number of admitted students, are both highest when $\beta=\alpha$.  Both quantities monotonically increase as $\beta$ increases from $0$ to $\alpha$ and monotonically decrease as $\beta$ increases from $\alpha$ to $1$.
\end{proposition}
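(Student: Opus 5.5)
The plan is to reduce both quantities to monotone functions of a single scalar, namely the correlation between the admission score $A$ and the university's quality $Q$, and then show that this correlation is unimodal in $\beta$ with its peak at $\beta=\alpha$.

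\textbf{Step 1: rewrite the threshold in terms of a correlation.} Write $c=\alpha\beta+(1-\alpha)(1-\beta)>0$ and $k=\tau_Q/\sigma_\beta>0$, where $k$ is held fixed. By Claim~\ref{clm:adm_threshold},
\[
\frac{\tau_A}{\sigma_\alpha}=\frac{\tau_Q\,\sigma_\alpha}{c}=\frac{k}{\rho}, \qquad \rho:=\frac{c}{\sigma_\alpha\sigma_\beta}.
\]
The quantity $\rho$ is exactly $\mathrm{Corr}(A,Q)$, and $\rho\in(0,1)$.

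\textbf{Step 2: express both quantities as functions of $\rho$.} By Proposition~\ref{prop:meantypesoft}, dividing the average quality by $\sigma_\beta$ gives the average normalized quality $\rho\, H(k/\rho)$. Since $A\sim\mathcal N(0,\sigma_\alpha^2)$, the fraction of admitted students is $1-\Phi(\tau_A/\sigma_\alpha)=1-\Phi(k/\rho)$.

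\textbf{Step 3: show both are increasing in $\rho$.} Because $k>0$, the map $\rho\mapsto k/\rho$ is decreasing, so the admitted fraction $1-\Phi(k/\rho)$ is increasing in $\rho$.

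For the average normalized quality, set $x=k/\rho>0$ and use the standard identity $H'(x)=H(x)\bigl(H(x)-x\bigr)$. Then
\[
\frac{d}{d\rho}\,\rho H(k/\rho)=H(x)-xH'(x)=H(x)\bigl(1+x^2-xH(x)\bigr).
\]
This is positive provided $H(x)<x+1/x$ for $x>0$. That inequality is the classical Mills-ratio upper bound, which also follows from the sharper bound $H(x)<\bigl(x+\sqrt{x^2+4}\bigr)/2$.

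This inequality is the only genuinely analytic step, and I expect it to be the main obstacle. If a self-contained argument is needed, I would prove $H(x)<x+1/x$ directly from the tail bound $1-\Phi(x)>\frac{x}{1+x^2}\phi(x)$. That tail bound is obtained by differentiating $\frac{x}{1+x^2}\phi(x)-(1-\Phi(x))$ and checking its sign together with its limit at $\infty$.

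\textbf{Step 4: show $\rho$ is unimodal in $\beta$ with peak at $\beta=\alpha$.} The factor $\sigma_\alpha$ does not depend on $\beta$. The remaining factor $c/\sigma_\beta$, divided by $\sqrt{\alpha^2+(1-\alpha)^2}$, is the cosine of the angle between the vectors $(\alpha,1-\alpha)$ and $(\beta,1-\beta)$.

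As $\beta$ ranges over $(0,1)$, the polar angle of $(\beta,1-\beta)$ decreases strictly and continuously within the first quadrant. Hence the angle between the two vectors decreases to $0$ as $\beta\uparrow\alpha$ and increases afterward, staying below $\pi/2$ throughout. Since cosine is decreasing on $[0,\pi/2]$, $\rho$ is strictly increasing on $(0,\alpha]$, strictly decreasing on $[\alpha,1)$, and maximized at $\beta=\alpha$. Alternatively, one can differentiate $c/\sigma_\beta$ directly: its derivative has the sign of $\alpha-\beta$.

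\textbf{Conclusion.} Composing Step 4 with the monotonicity in $\rho$ from Step 3 yields the claimed behaviour for both the average normalized quality and the number of admitted students.
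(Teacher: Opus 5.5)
Your proof is correct and is essentially the paper's argument. The paper also rewrites the admission rule as a threshold $\tau_Q/\sigma_\beta$ on the posterior normalized quality, a centered Gaussian with standard deviation $\frac{\alpha\beta+(1-\alpha)(1-\beta)}{\sigma_\alpha\sigma_\beta}$ (your $\rho$); it then uses its lemma that the truncated mean $\sigma H(x/\sigma)$ increases in $\sigma$, and shows this standard deviation peaks at $\beta=\alpha$. Your derivative computation with the Mills-ratio bound $H(x)<x+1/x$ proves the ``well-known'' fact that $H(z)/z$ is decreasing, on which that lemma rests, and your cosine-of-angle argument replaces the paper's derivative computation (you also treat the admitted fraction $1-\Phi(k/\rho)$ explicitly, which the paper leaves implicit).
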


The proof is given in Appendix~\ref{app:opdir}. We note that this result is \emph{different} from Proposition~\ref{prop:alphamon}, which holds $\beta$ fixed while varying $\alpha$---in our work, the roles of $\alpha$ and $\beta$ are asymmetric, as made evident by Proposition~\ref{prop:meantypesoft}.  In fact, this distinction creates a crucial difference: as $\beta$ changes, the shape of the distribution of admitted student quality also changes. This, in turn, leads to a decrease, at a desired quality level $\tau_Q$, in the number of admitted students. 

However, a central counter-intuitive insight is that while the university prefers signals $A$ that align \emph{more} with their own preferences $Q$, the average type and average soft skills of students \emph{both} decrease. Formally:

\begin{theorem}\label{thm:betacurve}
For fixed $\tau_Q/\sigma_\beta (> 0)$ and $\alpha$, the average type and soft skill for admitted students are lowest when $\beta=\alpha$ and monotonically increase the farther $\beta$ gets from $\alpha$.
\end{theorem}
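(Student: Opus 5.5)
By Proposition~\ref{prop:meantypesoft}, the average type of admitted students is $\frac{\alpha}{\sigma_\alpha} H(\tau_A/\sigma_\alpha)$ and the average soft skill is $\frac{1-\alpha}{\sigma_\alpha} H(\tau_A/\sigma_\alpha)$. With $\alpha$ (hence $\sigma_\alpha$ and $\sigma_Z$) fixed, both quantities depend on $\beta$ only through the argument $x \triangleq \tau_A/\sigma_\alpha$, and with the same positive prefactor structure. The hazard rate $H$ of the standard normal is strictly increasing (this is the fact already used in Corollary~\ref{corr:smsu_tau}). It therefore suffices to show that $x$, viewed as a function of $\beta$, is minimized at $\beta=\alpha$ and increases monotonically as $\beta$ moves away from $\alpha$ on either side.

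Using Claim~\ref{clm:adm_threshold} and writing $\tau_Q = c\,\sigma_\beta$ with $c = \tau_Q/\sigma_\beta > 0$ held constant, I would compute
\begin{align*}
x = \frac{\tau_A}{\sigma_\alpha} = \frac{c\,\sigma_\alpha\,\sigma_\beta}{\alpha\beta + (1-\alpha)(1-\beta)}.
\end{align*}
Since $c\sigma_\alpha>0$ does not depend on $\beta$, the problem reduces to the $\beta$-dependence of $g(\beta) \triangleq \frac{\sqrt{\beta^2+(1-\beta)^2}}{\alpha\beta+(1-\alpha)(1-\beta)}$, whose denominator is positive for $\alpha,\beta\in(0,1)$. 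Interpret $g$ geometrically: with $u=(\alpha,1-\alpha)$ and $v=(\beta,1-\beta)$, we have $g(\beta) = \|v\|/\langle u,v\rangle = 1/(\|u\|\cos\theta)$, where $\theta$ is the angle between $u$ and $v$. Both vectors lie in the open positive quadrant, so $\theta\in[0,\pi/2)$, and $\cos\theta$ is maximized (equal to $1$) exactly when $v\parallel u$, i.e.\ $\beta=\alpha$. The angle of $v$ with the horizontal axis, namely $\arctan((1-\beta)/\beta)$, is strictly decreasing in $\beta$. Hence $\theta$ grows monotonically as $\beta$ moves away from $\alpha$ in either direction, $\cos\theta$ decreases, and $g$ increases. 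Consequently $x$ is minimized at $\beta=\alpha$ and increases monotonically away from it, and composing with the increasing $H$ yields the theorem for both average type and average soft skill.

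The main obstacle is only the monotonicity of $g$. Besides the angle argument, this can be checked by differentiation: $\frac{d}{d\beta}\log g = \frac{2\beta-1}{\sigma_\beta^2} - \frac{2\alpha-1}{\alpha\beta+(1-\alpha)(1-\beta)}$. Putting this over a common denominator, the numerator simplifies to a constant multiple of $(\beta-\alpha)$ with a positive factor. Specifically, it equals $\beta-\alpha$ times a positive constant: expanding $(2\beta-1)(\alpha\beta+(1-\alpha)(1-\beta)) - (2\alpha-1)(\beta^2+(1-\beta)^2)$ gives $\beta-\alpha$. So the derivative has the sign of $\beta-\alpha$, which confirms the claimed shape.
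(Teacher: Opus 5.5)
Your proof is correct and follows essentially the same route as the paper: use the monotonicity of the hazard rate $H$ to reduce to the $\beta$-dependence of $\tau_A$, i.e.\ of $\sigma_\beta/(\alpha\beta+(1-\alpha)(1-\beta))$, and then study its derivative. Your explicit computation that the numerator of the log-derivative equals $\beta-\alpha$ (and the angle interpretation) in fact supplies the monotonicity on each side of $\beta=\alpha$. The paper only asserts this by analogy with the proof of Proposition~\ref{alphamax}, noting merely a critical point (local minimum) at $\beta=\alpha$.
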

\noindent 
The proof is provided in Appendix~\ref{app:betacurve}. The counter-intuitive result is due to the following effect: while the average type and soft skills both decrease with a more informative signal as shown by Theorem~\ref{thm:betacurve}, Proposition~\ref{prop:opdir} still holds because the average quality of admitted students itself increases when $\beta$ is closer to $\alpha$, as the university works with a more informative signal and is able to make better admission decisions.

\paragraph{A university cannot steer their application pool by changing their admission metric} Finally, another counter-intuitive insight comes when universities may change their preferences on type versus soft skill, controlled by $\beta$. A naive designer may assume that changing $\beta$ would steer the population of admitted students towards being more type- versus soft-skills- competent, and $\beta$ could be used as a design parameter used to balance how good students are across these two different dimensions. However, we show that changing $\beta$ has \emph{no effect on how admitted students are balanced across type and soft skill competency.} Formally: 

\begin{corollary}\label{corr:constantrat}
For a fixed metric $A$ (i.e., fixed $\alpha$) used for admission decisions, the ratio of the average type to the average soft skill level of an admitted student depends on $\alpha$, but is completely independent of the university's quality preference $\beta$ or the admission threshold they use.  
\end{corollary}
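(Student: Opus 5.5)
The plan is to read the corollary straight off Proposition~\ref{prop:meantypesoft}, which gives both admitted-student averages in closed form: the average type is $\frac{\alpha}{\sigma_\alpha} H(\tau_A/\sigma_\alpha)$ and the average soft skill is $\frac{1-\alpha}{\sigma_\alpha} H(\tau_A/\sigma_\alpha)$. Both expressions share the common factor $\frac{1}{\sigma_\alpha}H(\tau_A/\sigma_\alpha)$. This factor is the only place where the threshold $\tau_A$ enters. Through Claim~\ref{clm:adm_threshold}, $\tau_A = \tau_Q\sigma_\alpha^2/(\alpha\beta+(1-\alpha)(1-\beta))$, so this same factor is also the only place where $\beta$ and $\tau_Q$ enter.

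The first step is to check that the common factor is strictly positive, so the ratio is well defined. The hazard rate $H(x)=\phi(x)/(1-\Phi(x))$ of the standard normal is positive for every real $x$. We also have $\sigma_\alpha>0$ and $\alpha\in(0,1)$, so both averages are strictly positive. The second step is to divide the two expressions, which gives
\begin{align*}
\frac{\text{average type}}{\text{average soft skill}} = \frac{\frac{\alpha}{\sigma_\alpha}H(\tau_A/\sigma_\alpha)}{\frac{1-\alpha}{\sigma_\alpha}H(\tau_A/\sigma_\alpha)} = \frac{\alpha}{1-\alpha}.
\end{align*}
This depends only on $\alpha$. It is therefore invariant to $\beta$, to $\tau_Q$, to $\tau_A$, and incidentally to $\sigma_Z$ as well.

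There is essentially no obstacle here. The only point needing care is positivity of the denominator, which holds for any threshold. I would also note that the statement remains true under the renormalized convention, where $\tau_Q/\sigma_\beta$ is held fixed as $\beta$ varies. Whatever rule links $\tau_Q$ to $\beta$, it only changes the value of the cancelled factor $H(\tau_A/\sigma_\alpha)/\sigma_\alpha$, so the ratio $\alpha/(1-\alpha)$ is unaffected.

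Finally, I would give a structural explanation that does not depend on the explicit formulas. Admission is the event $\{A\ge\tau_A\}$, which is measurable with respect to $A$. By the tower property,
\begin{align*}
\mathbb{E}[t\mid A\ge\tau_A]=\frac{\alpha}{\sigma_\alpha^2}\,\mathbb{E}[A\mid A\ge \tau_A], \qquad \mathbb{E}[s\mid A\ge\tau_A]=\frac{1-\alpha}{\sigma_\alpha^2}\,\mathbb{E}[A\mid A\ge \tau_A].
\end{align*}
This follows from the posterior means of $t$ and $s$ given $A=a$ derived after Claim~\ref{clm:adm_threshold}. So the ratio is $\alpha/(1-\alpha)$ for any admission rule that selects on $A$ alone, which makes clear why $\beta$ cannot steer the balance between type and soft skill.
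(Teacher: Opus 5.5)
Your proposal is correct and follows the paper's route: the corollary comes straight from Proposition~\ref{prop:meantypesoft} by cancelling the common positive factor $H(\tau_A/\sigma_\alpha)/\sigma_\alpha$, which leaves $\alpha/(1-\alpha)$, and your tower-property remark is the paper's own proof of that proposition. For your closing extension to arbitrary $A$-measurable admission rules, you also need $\mathbb{E}[A\mid\text{admitted}]\neq 0$ so the ratio is defined; threshold rules always satisfy this.
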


\section{How Does Competition between Universities Change the Picture?}\label{sec:smmu}

We now consider a setting with two universities, i.e., $N = 2$. Universities $\mathcal{U}_1$ and $\mathcal{U}_2$ compete to admit students from a common applicant pool. Each university uses a similar admission policy as described previously in Section~\ref{sec:model}, where they evaluate applicants on the basis of their scores on noisy metric $A = \alpha t + (1-\alpha)s + \sigma_Z Z$. We remind the reader that for each $i \in \{1, 2\}$, university $\mathcal{U}_i$ has a preference parameter $\beta_i$, which determines their quality metric $Q_i$, and a quality threshold $\tau_{Q,i}$. A student with score $a$ on metric $A$ is offered an admission slot by university $\mathcal{U}_i$ if and only if $\mathbb{E}[Q_i ~|~A = a] \geq \tau_{Q,i}$. 
In this section, we demonstrate that when universities are competing for the same students, admission outcomes across universities can become more complex due to student self-selection. 

\paragraph{University Selectivity:} Recall that as per Claim~\ref{clm:adm_threshold}, each university $\mathcal{U}_i$'s quality threshold $\tau_{Q,i}$ maps to a unique admission threshold $\tau_{A, i}$ on the metric $A$---we showed in particular that $\tau_{A, i} = \frac{\tau_{Q,i} \sigma_{\alpha}^2}{\alpha \beta_i + (1-\alpha)(1-\beta_i)}$. Therefore, we call university $\mathcal{U}_1$ \textit{more selective} if and only if $\tau_{A, 1} > \tau_{A, 2}$ and vice-versa. 
In the special case where $\tau_{A, 1} = \tau_{A, 2}$, both universities are \textit{equally selective}. 

\paragraph{Which University is More Selective?} We now characterize when $\mathcal{U}_1$ is more selective versus when $\mathcal{U}_2$ is more selective, as a function of the admission rule (specifically $\alpha$), the university preference parameters $\beta_1$, $\beta_2$ and their quality thresholds $\tau_{Q,1}, \tau_{Q,2}$. A core insight is that which university is more selective can vary discontinuously with small changes in the above parameters.

To do so, we first tackle some trivial special cases where the more selective university can be determined easily: 

\begin{enumerate}[label=\roman*)]
\item When $\beta_1 = \beta_2$, the order of selectivity is decided directly by the order of the thresholds $\tau_{Q,1}, \tau_{Q,2}$. 
\item If $\tau_{Q,1} = \tau_{Q,2} = 0$, both universities are equally selective, independently of everything else.  
\end{enumerate}
Our next result characterizes the \textit{more selective} university in the general case where the universities have different preference parameters $\beta_1, \beta_2$ and the quality thresholds $\tau_{Q,1}, \tau_{Q,2}$ are both not zero: 

\begin{theorem}\label{thm:more_selective}
WLOG, suppose that $\beta_1 > \beta_2$. Additionally, consider the setting where $\tau_{Q,1},\tau_{Q,2} > 0$ and define $\alpha_{th} = \frac{\tau_{Q,2}-\tau_{Q,1}+\tau_{Q,1}\beta_2-\tau_{Q,2}\beta_1}{\tau_{Q,2}-\tau_{Q,1}+2\tau_{Q,1}\beta_2-2\tau_{Q,2}\beta_1}$. Then,
  \begin{itemize}
      \item if $\frac{\tau_{Q,1} }{\tau_{Q,2} } > \frac{\beta_1}{\beta_2}$, $\mathcal{U}_1$ is more selective (for all $\alpha$);
      \item if $\frac{\tau_{Q,1} }{\tau_{Q,2} } < \frac{1-\beta_1}{1-\beta_2}$, $\mathcal{U}_2$ is more selective (for all $\alpha$);
      \item if $\frac{1-\beta_1}{1-\beta_2}\leq\frac{\tau_{Q,1} }{\tau_{Q,2} }\leq\frac{\beta_1}{\beta_2},$ then $\mathcal{U}_1$ is more selective when $\alpha < \alpha_{th}$, $\mathcal{U}_2$ is more selective when $\alpha > \alpha_{th}$ and both universities are equally selective when $\alpha = \alpha_{th}$. 
  \end{itemize}  
\end{theorem}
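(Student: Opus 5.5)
We compare the two admission thresholds from Claim~\ref{clm:adm_threshold}. Both share the positive factor $\sigma_\alpha^2$, so $\mathcal{U}_1$ is more selective exactly when
\[
\frac{\tau_{Q,1}}{\alpha\beta_1+(1-\alpha)(1-\beta_1)} > \frac{\tau_{Q,2}}{\alpha\beta_2+(1-\alpha)(1-\beta_2)}.
\]
Write $D_i(\alpha)=\alpha\beta_i+(1-\alpha)(1-\beta_i)$. Each $D_i(\alpha)$ is a convex combination of $\beta_i$ and $1-\beta_i$, both in $(0,1)$, so $D_i(\alpha)>0$ for every $\alpha\in(0,1)$. We may therefore cross-multiply without changing the inequality. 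The comparison becomes the sign of
\[
g(\alpha)\triangleq \tau_{Q,1}D_2(\alpha)-\tau_{Q,2}D_1(\alpha).
\]
$\mathcal{U}_1$ is more selective iff $g(\alpha)>0$, $\mathcal{U}_2$ is more selective iff $g(\alpha)<0$, and they are equally selective iff $g(\alpha)=0$.

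The key observation is that $g$ is affine in $\alpha$. Its sign on $(0,1)$ is therefore governed by its endpoint values:
\[
g(0)=\tau_{Q,1}(1-\beta_2)-\tau_{Q,2}(1-\beta_1), \qquad g(1)=\tau_{Q,1}\beta_2-\tau_{Q,2}\beta_1.
\]
Thus $g(0)\ge 0$ iff $\tau_{Q,1}/\tau_{Q,2}\ge (1-\beta_1)/(1-\beta_2)$, and $g(1)\le 0$ iff $\tau_{Q,1}/\tau_{Q,2}\le \beta_1/\beta_2$. Since $\beta_1>\beta_2$, we have $(1-\beta_1)/(1-\beta_2)<1<\beta_1/\beta_2$, so the three regimes in the statement are disjoint and exhaust all cases. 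This yields the following case analysis.

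\begin{itemize}
\item If $\tau_{Q,1}/\tau_{Q,2}>\beta_1/\beta_2$, then $g(1)>0$ and also $g(0)>0$, because the ratio exceeds $1>(1-\beta_1)/(1-\beta_2)$. An affine function positive at both endpoints is positive on $(0,1)$, so $\mathcal{U}_1$ is more selective for all $\alpha$.
\item Symmetrically, if $\tau_{Q,1}/\tau_{Q,2}<(1-\beta_1)/(1-\beta_2)$, then both endpoint values are negative, so $\mathcal{U}_2$ is more selective for all $\alpha$.
\item In the middle regime $g(0)\ge 0\ge g(1)$, so $g$ is nonincreasing with a unique root. Setting $g(\alpha)=0$ gives
\[
\alpha\bigl[\tau_{Q,1}(2\beta_2-1)-\tau_{Q,2}(2\beta_1-1)\bigr] = -\bigl[\tau_{Q,1}(1-\beta_2)-\tau_{Q,2}(1-\beta_1)\bigr].
\]
Solving and simplifying recovers exactly the stated $\alpha_{th}$.
\end{itemize}

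The main obstacle is bookkeeping rather than conceptual, and it occurs in three places. First, the sign of the slope $g(1)-g(0)$ must be confirmed to be negative in the middle regime; this is what guarantees $\mathcal{U}_1$ wins for $\alpha<\alpha_{th}$ rather than the reverse. Second, the denominator of $\alpha_{th}$ equals $-(g(1)-g(0))$ and must be nonzero. It vanishes only if $g(0)=g(1)=0$, which would require $\tau_{Q,1}/\tau_{Q,2}=\beta_1/\beta_2=(1-\beta_1)/(1-\beta_2)$; that is impossible since $\beta_1\neq\beta_2$. Third, the boundary cases where the ratio equals one of the endpoints put the threshold at $\alpha_{th}\in\{0,1\}$. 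There one universe is strictly more selective on all of $(0,1)$, which is consistent with the statement's conventions.
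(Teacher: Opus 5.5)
Your proof is correct and follows essentially the same route as the paper. Both compare the thresholds from Claim~\ref{clm:adm_threshold}, use monotonicity in $\alpha$ together with the endpoint values at $\alpha=0,1$, and solve for the crossing point. The paper works with the ratio $\frac{\alpha\beta_2+(1-\alpha)(1-\beta_2)}{\alpha\beta_1+(1-\alpha)(1-\beta_1)}$ and asserts without proof that it decreases from $\frac{1-\beta_2}{1-\beta_1}$ to $\frac{\beta_2}{\beta_1}$, whereas your cross-multiplied affine $g$ makes the monotonicity automatic. One small sign slip: in the stated form of $\alpha_{th}$ the numerator is $-g(0)$ and the denominator is $g(1)-g(0)$, not its negative. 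This does not affect your argument that the denominator is nonzero in the middle regime.
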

The detailed proof of Theorem \ref{thm:more_selective} can be found in Appendix~\ref{app:more_selective}. Since we now have a clear understanding of the selectivity of universities, the rest of the section will be dedicated to studying the admission outcomes of the more selective and less selective universities under competition, where students admitted to both universities make decisions according to either of the two choice models described below.

\paragraph{Student Choice Models:} Note trivially that every student admitted to the more selective university also gets admitted to the less selective university and therefore, faces the choice of deciding which university to attend. We describe below two models of how students make this decision. 
\begin{itemize}
    \item \textit{Random Choice Model.} We consider that on getting admitted to both universities, a student chooses to attend the \textit{more selective} university with probability $q > 0.5$. Which university is more selective is determined in terms of admissions scores, as described above, and is \emph{affected by the choices of $\beta_i$ and $\tau_{Q,i}$}. 
    \item \textit{Self-Sorting Model.} In this model, we consider that on getting admits to both universities, a student always chooses to attend the university $i^*$ at which they are at a higher quantile in terms of quality, \emph{conditional on their type and soft skill}. I.e., a student with type $t$ and soft skill $s$ picks
    \[
        i^* = \arg\max_{i \in \{1, 2\} } \frac{\beta_i t + (1-\beta_i) s}{\sigma_{\beta_i}}.
    \] 
    This corresponds to selectivity not on the admission threshold, but instead in the \emph{number of students that are qualified for each university}, which is controlled by $\frac{\beta_i t + (1-\beta_i) s}{\sigma_{\beta_i}}$ as per Proposition~\ref{prop:quantile}. For simplicity, we assume that ties are broken arbitrarily.
\end{itemize}

If a student that is admitted at both universities decide to accept the offer from University $i$, we say that the student was \emph{admitted} at both universities and \emph{attended} or \emph{enrolled into} University $i$.

\subsection{Competition under the Random Choice Model} 

\paragraph{Impact of Competition on Student Quality} In the presence of a \textit{more selective} university, there are $3$ groups of students: i) a high-scoring group admitted to both universities, ii) a moderate-scoring group only admitted to the less selective university, and iii) a low-scoring group that gets no admits. The high-scoring group decides which university to attend, choosing the more selective university at rate $q$. Thus, the more selective university only receives a subset of the high-scoring students, while the less selective university receives both high and moderate-scoring students. In turn, the more selective university has higher average type and soft skill among attending students compared to the less selective university. We formalize this intuition below: 

\begin{proposition}
    \label{identmorsel}
    In the two-university case, under the random choice model, the more selective university always has the same average type and soft skill as it would in the single-university case.  The less selective university always has strictly lower average type and soft skill than it would in the single-university case.
\end{proposition}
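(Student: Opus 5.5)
Let $\mathcal{U}_h$ denote the more selective university with admission threshold $\tau_h$, and $\mathcal{U}_\ell$ the less selective one with $\tau_\ell < \tau_h$. Both thresholds are given by Claim~\ref{clm:adm_threshold}. The plan is to write each enrolled population as a mixture of score-truncated subpopulations and to express every conditional mean through the posterior means of type and soft skill given $A$.

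The key observation is that, conditional on $A=a$, we have $\mathbb{E}[t\mid A=a]=\alpha a/\sigma_\alpha^2$ and $\mathbb{E}[s\mid A=a]=(1-\alpha)a/\sigma_\alpha^2$. These are positive multiples of $a$, and the random choice does not depend on $(t,s)$ beyond the event of being admitted to both universities. So the average type of any enrolled group equals $\frac{\alpha}{\sigma_\alpha^2}$ times the average score $A$ in that group, and likewise for soft skill with $1-\alpha$. It therefore suffices to compare average scores. We may take $\alpha\in(0,1)$ so both prefactors are strictly positive.

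For the more selective university, the enrolled students are exactly the students with $A\ge \tau_h$, each kept independently with probability $q$, since every such student is admitted to both universities. Independent thinning by a coin whose probability does not depend on $A$ preserves the conditional distribution of $A$ given $A\ge\tau_h$. Hence the mean score equals $\mathbb{E}[A\mid A\ge \tau_h]$, and Proposition~\ref{prop:meantypesoft} gives the same average type and soft skill as in the single-university case.

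For the less selective university, the enrolled population is a mixture of two groups. The first is students with $\tau_\ell\le A<\tau_h$, all of whom enroll, with mass $p_1=\Phi(\tau_h/\sigma_\alpha)-\Phi(\tau_\ell/\sigma_\alpha)$. The second is students with $A\ge\tau_h$, thinned with probability $1-q$, with mass $(1-q)p_2$ where $p_2 = 1-\Phi(\tau_h/\sigma_\alpha)$. In the single-university case the corresponding mixture uses weights $p_1$ and $p_2$. Write $m_1=\mathbb{E}[A\mid \tau_\ell\le A<\tau_h]$ and $m_2=\mathbb{E}[A\mid A\ge\tau_h]$, so that $m_1<\tau_h\le m_2$ strictly. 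The competitive mean is then
\[
\frac{p_1 m_1+(1-q)p_2 m_2}{p_1+(1-q)p_2},
\]
and the single-university mean is the same expression with $1-q$ replaced by $1$. Because $q>0.5>0$ and $p_1>0$, the competitive mixture puts strictly less relative weight on the larger mean $m_2$. Since a weighted average of $m_1<m_2$ is strictly increasing in the relative weight on $m_2$, the competitive mean is strictly lower. Multiplying by the positive constants $\alpha/\sigma_\alpha^2$ and $(1-\alpha)/\sigma_\alpha^2$ gives strictly lower average type and soft skill.

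The main obstacle is a careful justification of the linearity step: that averaging $\mathbb{E}[t\mid A]$ over the enrolled group gives the group's average type. This needs the choice coin to be independent of $(t,s,Z)$ given admission, which is how the random choice model is defined. The strictness claim also requires $\tau_\ell<\tau_h$. If the universities are equally selective, the interval group is empty and $p_1 = 0$, and the proof would need a stated convention; here strictness holds because the statement assumes a strictly more selective university.
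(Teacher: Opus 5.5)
Your proof is correct and uses the same decomposition as the paper. The more selective university's enrollees are an independent $q$-thinning of $\{A\ge\tau_h\}$, while the less selective university receives the full band $\{\tau_\ell\le A<\tau_h\}$ plus a $(1-q)$-thinning of the top group, which puts less weight on the higher-mean component. Your reduction to average scores via $\mathbb{E}[t\mid A=a]=\alpha a/\sigma_\alpha^2$, together with the explicit mixture-weight comparison, makes rigorous the steps the paper only states informally, such as the claim that the high-scoring group has higher average type and soft skill.
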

\noindent 
The proof can be found in Appendix \ref{app:identmorsel}. This immediately leads to Corollary \ref{corr:seldom} which says that the more selective university (with the higher $\tau_{A,i}$) must also have higher average type and soft skill of attending students:  

\begin{corollary}\label{corr:seldom}
    Under the \textit{random choice} model, the more selective university always has students with the higher average type \textit{and} the higher average soft skill.
\end{corollary}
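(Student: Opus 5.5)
Corollary~\ref{corr:seldom} follows by chaining Proposition~\ref{identmorsel} with the monotonicity result of Corollary~\ref{corr:smsu_tau}. Without loss of generality let $\mathcal{U}_1$ be the more selective university, so $\tau_{A,1} > \tau_{A,2}$. Both universities use the same metric $A$, hence the same $\alpha$ and the same $\sigma_\alpha$. By Proposition~\ref{prop:meantypesoft}, in the single-university setting a university with admission threshold $\tau_A$ has average type $\frac{\alpha}{\sigma_\alpha}H(\tau_A/\sigma_\alpha)$ and average soft skill $\frac{1-\alpha}{\sigma_\alpha}H(\tau_A/\sigma_\alpha)$. These expressions depend on the university only through $\tau_A$; $\beta_i$ and $\tau_{Q,i}$ enter only via $\tau_{A,i}$.

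The first step compares the two universities in their single-university baselines. The standard normal hazard rate $H$ is strictly increasing, and $\alpha, 1-\alpha > 0$. So $\tau_{A,1} > \tau_{A,2}$ gives
\begin{align*}
\frac{\alpha}{\sigma_\alpha}H(\tau_{A,1}/\sigma_\alpha) > \frac{\alpha}{\sigma_\alpha}H(\tau_{A,2}/\sigma_\alpha),
\end{align*}
and the analogous strict inequality holds for soft skill. This is exactly the content of Corollary~\ref{corr:smsu_tau}, read as a comparison across thresholds at fixed $\alpha$.

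The second step brings in competition through Proposition~\ref{identmorsel}. Under the random choice model, the more selective $\mathcal{U}_1$ keeps its single-university averages. The reason is that a $q$-fraction of its admits attend, and each admitted student does so independently of $(t,s)$, so the attending population has the same conditional distribution as the admitted one. The less selective $\mathcal{U}_2$ ends up with strictly lower averages than its own single-university baseline. Chaining the two steps:
\begin{align*}
\text{avg type at } \mathcal{U}_1 = \tfrac{\alpha}{\sigma_\alpha}H(\tau_{A,1}/\sigma_\alpha) > \tfrac{\alpha}{\sigma_\alpha}H(\tau_{A,2}/\sigma_\alpha) > \text{avg type at } \mathcal{U}_2 .
\end{align*}
The same chain holds for soft skill. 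Either strict inequality alone already suffices for the conclusion.

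I do not expect a real obstacle here. The only point that needs care is checking that the single-university formula for $\mathcal{U}_2$ uses the same $\sigma_\alpha$ as for $\mathcal{U}_1$. It does, because $\sigma_\alpha$ depends only on $\alpha$ and $\sigma_Z$, which are shared across universities.
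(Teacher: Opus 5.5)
Your proposal is correct and follows the same route as the paper. The paper presents the corollary as an immediate consequence of Proposition~\ref{identmorsel}, with the comparison of the two single-university baselines coming from the monotonicity of $H$ in Corollary~\ref{corr:smsu_tau}; the paper leaves that comparison implicit, and your chain of strict inequalities (using the shared $\sigma_\alpha$) makes it explicit.
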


\paragraph{Impact of $\sigma_Z$:} We now consider the impact of the amount of noise in the admission signal $A$.
\begin{proposition}
    For fixed $\alpha$, $\beta_i$'s, and $\tau_{Q,i}$'s, which university is more selective (and thus has higher average type and soft skill of attending students) is invariant in the noise $\sigma_Z$.
    \label{kinv}
\end{proposition}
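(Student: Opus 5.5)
The plan is to reduce the comparison of selectivity to a comparison of the admission thresholds from Claim~\ref{clm:adm_threshold}, and then observe that the noise enters only through a common positive factor. By definition, $\mathcal{U}_1$ is more selective if and only if $\tau_{A,1} > \tau_{A,2}$, where
\[
\tau_{A,i} = \frac{\tau_{Q,i}\,\sigma_\alpha^2}{\alpha\beta_i + (1-\alpha)(1-\beta_i)}, \qquad \sigma_\alpha^2 = \alpha^2 + (1-\alpha)^2 + \sigma_Z^2 .
\]

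First, I check that each denominator $c_i := \alpha\beta_i + (1-\alpha)(1-\beta_i)$ is strictly positive and free of $\sigma_Z$. It is a convex combination of $\beta_i$ and $1-\beta_i$, both of which lie in $(0,1)$, so $c_i \in (0,1)$.

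Second, I write $\tau_{A,i} = \sigma_\alpha^2 \cdot (\tau_{Q,i}/c_i)$. The factor $\sigma_\alpha^2$ is the same for both universities, and it is strictly positive (at least $\alpha^2+(1-\alpha)^2 > 0$) for every $\sigma_Z \ge 0$. Multiplying both sides of an inequality by this common positive scalar preserves its direction. Hence the sign of $\tau_{A,1} - \tau_{A,2}$ equals the sign of $\tau_{Q,1}/c_1 - \tau_{Q,2}/c_2$, which depends only on $\alpha$, the $\beta_i$'s and the $\tau_{Q,i}$'s. The same holds for equality of the two thresholds. So the more selective university, or the tie case, is the same for every $\sigma_Z$. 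Equivalently, the case analysis and $\alpha_{th}$ in Theorem~\ref{thm:more_selective} contain no $\sigma_Z$, and that gives the same conclusion.

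Finally, for the parenthetical claim, I invoke Corollary~\ref{corr:seldom}: under the random choice model, the more selective university has the higher average type and the higher average soft skill. Since its identity does not change with $\sigma_Z$, neither does the identity of the university with higher averages. There is no real obstacle here. The only point needing care is positivity of $\sigma_\alpha^2$ and of the $c_i$, so that dividing and multiplying does not flip any inequality. This holds for arbitrary signs of the $\tau_{Q,i}$, because the common factor is positive regardless of those signs.
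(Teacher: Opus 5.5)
Your proof is correct and takes essentially the same route as the paper. The paper notes that $\sigma_Z$ enters both thresholds $\tau_{A,i}$ only through the common factor $\sigma_\alpha^2$, so the ratio of the thresholds, and hence which university is more selective, does not depend on the noise. Your version, which multiplies $\tau_{Q,i}/c_i$ by a common strictly positive factor, is slightly more careful than the paper's ratio argument: it also covers zero or mixed-sign $\tau_{Q,i}$ and states explicitly why the direction of the inequality is preserved.
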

\noindent 
We provide a short proof in Appendix~\ref{app:kinv}. This result should be intuitive; the noise in the admission scores affects both universities equally and does not impact selectivity. In particular, this will imply that none of the insights of the rest of this section are affected by the noise level.

\paragraph{Impact of thresholds $\tau_{Q,i}$ and $\tau_{A,i}$:} We now study the impact of changing the quality threshold $\tau_{Q,i}$ on admissions.
\begin{proposition}
    As $\tau_{Q,i}$ increases for university $\mathcal{U}_i$, the average type and soft skill of students at the university both increase monotonically.
    \label{monotau}
\end{proposition}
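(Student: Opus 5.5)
The plan is to reduce everything to the average admission score $A$ among the students who attend $\mathcal{U}_i$, and then treat separately the regimes where $\mathcal{U}_i$ is the more or the less selective university. First, by Claim~\ref{clm:adm_threshold}, $\tau_{A,i} = \tau_{Q,i}\sigma_\alpha^2/(\alpha\beta_i + (1-\alpha)(1-\beta_i))$. The denominator is positive because $\alpha,\beta_i\in(0,1)$. Hence $\tau_{A,i}$ is a strictly increasing linear function of $\tau_{Q,i}$, while $\tau_{A,j}$ for the other university $j$ stays fixed. It therefore suffices to prove monotonicity in $\tau_{A,i}$.

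Second, from the Bayesian update after Claim~\ref{clm:adm_threshold}, we have $\mathbb{E}[t \mid A=a] = \alpha a/\sigma_\alpha^2$ and $\mathbb{E}[s\mid A=a] = (1-\alpha)a/\sigma_\alpha^2$. Under the random choice model, whether a student attends $\mathcal{U}_i$ depends only on $A$ and on independent randomness. So the average type (resp.\ soft skill) of attendees equals $\alpha/\sigma_\alpha^2$ (resp.\ $(1-\alpha)/\sigma_\alpha^2$) times the average of $A$ over attendees. Both claims then follow from showing that this average score $M_i$ is increasing in $\tau_{A,i}$. Write $f$ for the $\mathcal{N}(0,\sigma_\alpha^2)$ density of $A$.

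\emph{Case 1: $\tau_{A,i} > \tau_{A,j}$.} Here $\mathcal{U}_i$ is the more selective university. By Proposition~\ref{identmorsel} its averages coincide with the single-university ones, namely $\frac{\alpha}{\sigma_\alpha}H(\tau_{A,i}/\sigma_\alpha)$ and its soft-skill analogue. These are increasing by Corollary~\ref{corr:smsu_tau}, i.e., by monotonicity of the normal hazard rate.

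\emph{Case 2: $\tau_{A,i} < \tau_{A,j}$.} Set $l=\tau_{A,i}$ and $h=\tau_{A,j}$. Then
\[
M_i(l) = \frac{\int_l^h a f(a)\,da + (1-q)\int_h^\infty a f(a)\,da}{\int_l^h f(a)\,da + (1-q)\int_h^\infty f(a)\,da} = \frac{N(l)}{D(l)}.
\]
Differentiating in $l$ gives $N' = -l f(l)$ and $D' = -f(l)$, so
\[
M_i'(l) = \frac{f(l)\,\bigl(M_i(l) - l\bigr)}{D(l)} .
\]
This is strictly positive because every attendee has $A\ge l$ with positive mass strictly above $l$, so $M_i(l)>l$. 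Intuitively, raising the threshold removes the lowest-scoring attendees.

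Third, I must glue the two regimes at $\tau_{A,i}=\tau_{A,j}$, where $\mathcal{U}_i$ switches from less to more selective. As $l\uparrow h$, the attendee set becomes a $(1-q)$-thinning of $\{A\ge h\}$. That thinning is independent of $A$, so its mean is the single-university mean at threshold $h$. This is exactly the Case~1 value at $\tau_{A,i}=h$, so $M_i$ is continuous there, and the two monotone pieces combine into a globally increasing function. This gluing step is the main point needing care. The number of attendees jumps at the switch (from $1-q$ to $q$ times the top mass), so one might fear a discontinuity in the averages. The argument above shows it occurs only in enrollment and not in the means. The tie case itself can be assigned either convention without affecting monotonicity.
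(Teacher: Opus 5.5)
Your proof is correct and follows essentially the paper's argument. In both regimes, raising $\tau_{A,i}$ only trims the lowest-scoring attendees; your derivative $M_i'=f(l)(M_i-l)/D$ and the hazard-rate monotonicity are the quantitative form of this. You then handle the switch point explicitly, which is more careful than the paper's one-line assertion that the switch ``cannot reduce the average.''

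One small correction concerns the case $q=1$, which the model allows and Theorem~\ref{discontalpha} uses. There the $(1-q)$-thinning is empty, so $\lim_{l\uparrow h}M_i(l)=h<\mathbb{E}[A\mid A\ge h]$, and the averages jump up rather than being continuous. Monotonicity still holds, so your gluing step should claim only that the left limit does not exceed the value on the right, not continuity.
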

\noindent 
The proof can be found in Appendix~\ref{app:monotau}. The result is intuitive: increased selectivity cannot hurt a university in terms of quality of attending students. One issue, however, is that this does not say anything about the number of attending students. In isolation, increasing selectivity decreases the pool of potential qualified candidates. However, and perhaps surprisingly, we show that this is not the case under competition and self-selection effects, as switching from being the less selective to the more selective university can vastly increase one's applicant pool. In particular:

\begin{figure*}[t]
    \centering
    \includegraphics[width=0.9\textwidth]{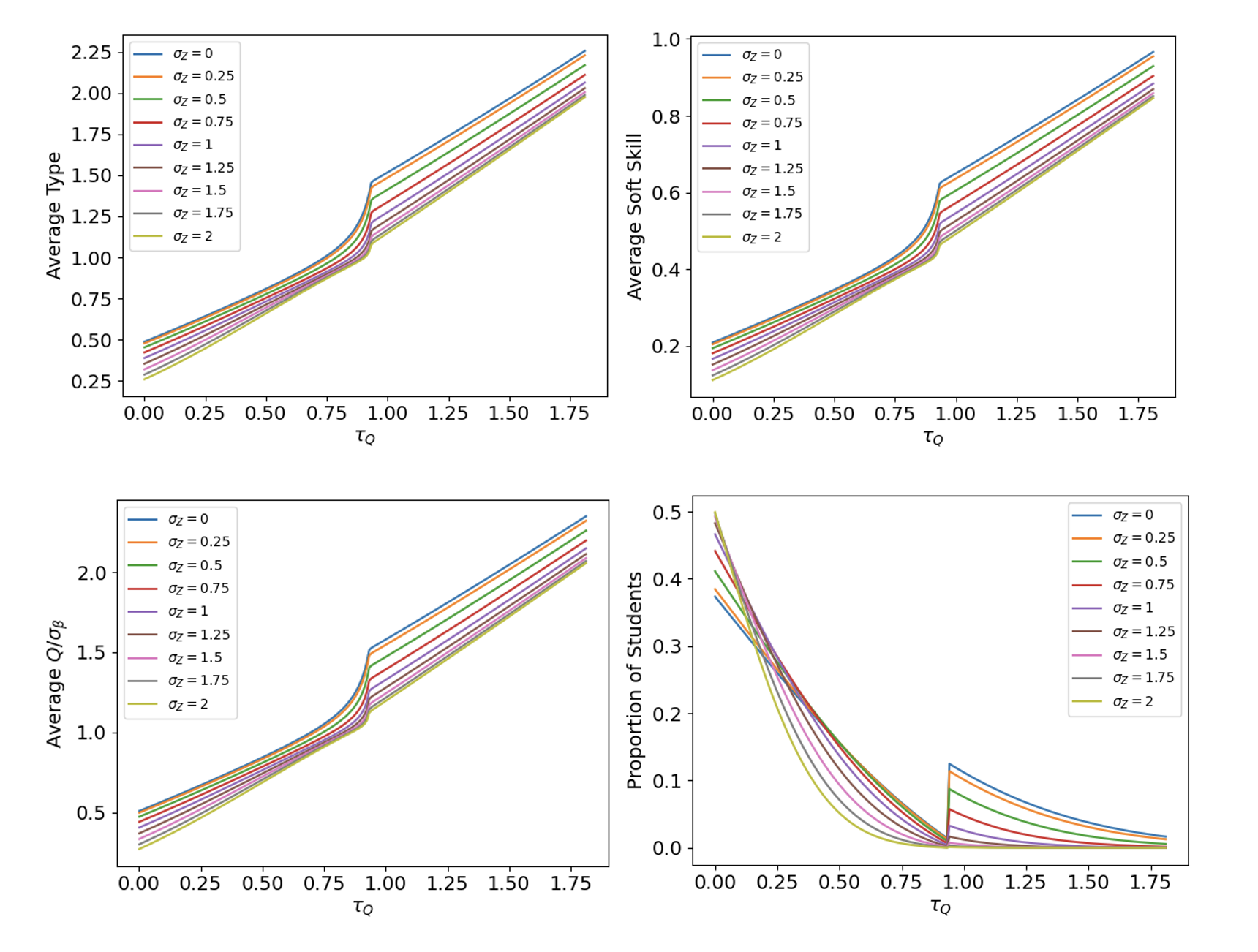}
    \caption{Parameter choice: $\beta_1=0.9$, $\beta_2=0.5$, $\alpha=0.7$, $q=0.9$, and $\tau_{Q,2}/\sigma_{\beta,2}=1$. Average type (top-left), average soft skill (top-right), average quality score (bottom-left), and total proportion of students attending (bottom-right) for University 1. Each colored line represents a different level of noise $\sigma_\alpha$ on the admission rule.}
    \label{fig:tauquad}
\end{figure*}

\begin{theorem}\label{thm:pareto} 
Consider $\tau_{A,1} = \tau_{A,2}-\epsilon$ for some sufficiently small $\epsilon$, i.e. University $\mathcal{U}_2$ is slightly more selective. Then, University $\mathcal{U}_1$ can simultaneously improve i) the average quality of students that choose to attend $\mathcal{U}_1$ and ii) the number of students that choose to attend $\mathcal{U}_1$, by setting $\tau_{A,1} = \tau_{A,2} + \epsilon$. 
\end{theorem}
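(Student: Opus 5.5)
We work under the random choice model and compare the two configurations directly. Write $\bar\tau = \tau_{A,2}$, and let $S(\tau) = \{a \ge \tau\}$ be the set of scores above a threshold. The pushforward density of $A$ is $\mathcal{N}(0,\sigma_\alpha^2)$. By Proposition~\ref{prop:meantypesoft}, the conditional means of $t$ and $s$ given $A=a$ are linear in $a$ with positive coefficients $\alpha/\sigma_\alpha^2$ and $(1-\alpha)/\sigma_\alpha^2$. It is therefore enough to track the mean score $\mathbb{E}[A \mid \text{attend } \mathcal{U}_1]$ and the attendance mass $P_1$. Average type, average soft skill, and average quality (for fixed $\beta_1$) are all positive multiples of this mean score. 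Because the quality $Q_1$ is a fixed linear combination, its average is monotone in the mean score.

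\textbf{Configuration before ($\tau_{A,1}=\bar\tau-\epsilon$, $\mathcal{U}_2$ more selective).} Students with $a\ge\bar\tau$ are admitted to both universities and attend $\mathcal{U}_1$ with probability $1-q$. Students with $a\in[\bar\tau-\epsilon,\bar\tau)$ are admitted only to $\mathcal{U}_1$ and attend it. This gives
\[
P_1^{\mathrm{before}} = (1-q)\,\bar\Phi(\bar\tau/\sigma_\alpha) + \big[\Phi(\bar\tau/\sigma_\alpha)-\Phi((\bar\tau-\epsilon)/\sigma_\alpha)\big],
\]
which tends to $(1-q)\bar\Phi(\bar\tau/\sigma_\alpha)$ as $\epsilon\to 0$. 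Its mean score is a mixture of the tail mean $\sigma_\alpha H(\bar\tau/\sigma_\alpha)$ and a sliver of mass $O(\epsilon)$ near $\bar\tau$. That sliver sits strictly below the tail mean, since $\sigma_\alpha H(x/\sigma_\alpha) > x$.

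\textbf{Configuration after ($\tau_{A,1}=\bar\tau+\epsilon$, $\mathcal{U}_1$ more selective).} By Proposition~\ref{identmorsel}, $\mathcal{U}_1$ keeps its single-university averages with threshold $\bar\tau+\epsilon$. Its mean score is $\sigma_\alpha H((\bar\tau+\epsilon)/\sigma_\alpha)$ and its mass is
\[
P_1^{\mathrm{after}} = q\,\bar\Phi((\bar\tau+\epsilon)/\sigma_\alpha),
\]
which tends to $q\,\bar\Phi(\bar\tau/\sigma_\alpha)$.

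\textbf{Comparing the two.} For enrollment, both expressions are continuous in $\epsilon$, and their limits satisfy $q\bar\Phi > (1-q)\bar\Phi$ because $q>1/2$ and $\bar\Phi(\bar\tau/\sigma_\alpha)>0$. A gap of order $(2q-1)\bar\Phi(\bar\tau/\sigma_\alpha)$ therefore persists for all sufficiently small $\epsilon$. For quality, the after-mean is $\sigma_\alpha H((\bar\tau+\epsilon)/\sigma_\alpha)$, which by Corollary~\ref{corr:smsu_tau} (monotonicity of $H$) strictly exceeds $\sigma_\alpha H(\bar\tau/\sigma_\alpha)$. The before-mean is a convex combination of $\sigma_\alpha H(\bar\tau/\sigma_\alpha)$ with a strictly smaller value, placed on positive weight, so it is strictly below $\sigma_\alpha H(\bar\tau/\sigma_\alpha)$. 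This comparison holds for every $\epsilon>0$, not just in the limit. Multiplying by the positive constant $(\beta_1\alpha+(1-\beta_1)(1-\alpha))/\sigma_\alpha^2$ transfers the inequality to average quality, and the same argument gives the improvement in average type and soft skill.

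\textbf{Main obstacle.} The main difficulty is bookkeeping rather than analysis. The mixture formula for the less selective university has to be written carefully: the high-scoring group contributes with weight $(1-q)$, while the moderate-scoring slice contributes with weight $1$. One must also confirm that the conditional-mean linearity justifies reducing quality to the mean score. For the enrollment claim, the argument should state explicitly that “sufficiently small” means $\epsilon$ is small enough that the $O(\epsilon)$ sliver mass and the loss $q[\bar\Phi(\bar\tau/\sigma_\alpha)-\bar\Phi((\bar\tau+\epsilon)/\sigma_\alpha)]$ together stay below $(2q-1)\bar\Phi(\bar\tau/\sigma_\alpha)$. This is immediate from continuity of $\Phi$.
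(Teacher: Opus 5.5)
Your proof is correct. The enrollment half follows the paper's argument: attendance mass is continuous in $\tau_{A,1}$ except at $\tau_{A,2}$, where the share of the group $\{A\ge\tau_{A,2}\}$ attending $\mathcal{U}_1$ jumps from $1-q$ to $q>1/2$. You write the two masses out explicitly; the paper only describes them.

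The quality half takes a different route. The paper invokes Proposition~\ref{monotau} (averages increase in the threshold). That proposition's own proof covers the moment a university switches from less to more selective only with the informal remark that the switch ``cannot reduce'' the averages. You instead prove the needed comparison directly through the sandwich
\[
\mathbb{E}[A\mid \text{attend } \mathcal{U}_1,\ \text{before}] < \sigma_\alpha H(\bar\tau/\sigma_\alpha) < \sigma_\alpha H((\bar\tau+\epsilon)/\sigma_\alpha) = \mathbb{E}[A\mid \text{attend } \mathcal{U}_1,\ \text{after}].
\]
This uses only two facts: a normal tail mean exceeds its cutoff, and $H$ is increasing. You then transfer the inequality to type, soft skill and $Q_1$ through the linear conditional means. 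Those come from Claim~\ref{condistri} rather than from Proposition~\ref{prop:meantypesoft} itself. The transfer also needs the choice to be independent of $(t,s)$ given $A$, which holds in the random choice model.

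Your version is self-contained and makes the passage from type and soft skill to quality explicit; the paper leaves that step implicit. It also shows the quality gain holds for every $\epsilon>0$, so ``sufficiently small'' is needed only for enrollment. The paper's proof is shorter, but it inherits the informality of Proposition~\ref{monotau} at exactly the selectivity switch this theorem is about.
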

\noindent 
The proof is given in Appendix~\ref{app:pareto}. Theorem~\ref{thm:pareto} is illustrated by Figure \ref{fig:tauquad}, which shows what happens to the average type, average soft skill, average quality, and proportion of total students at University $\mathcal{U}_1$ for $\beta_1=0.9$, $\beta_2=0.5$, $\alpha=0.7$, $q=0.9$, and $\tau_{Q,2}/\sigma_{\beta_2}=1$.

\begin{figure*}[t]
    \centering
    \includegraphics[width=0.9\textwidth]{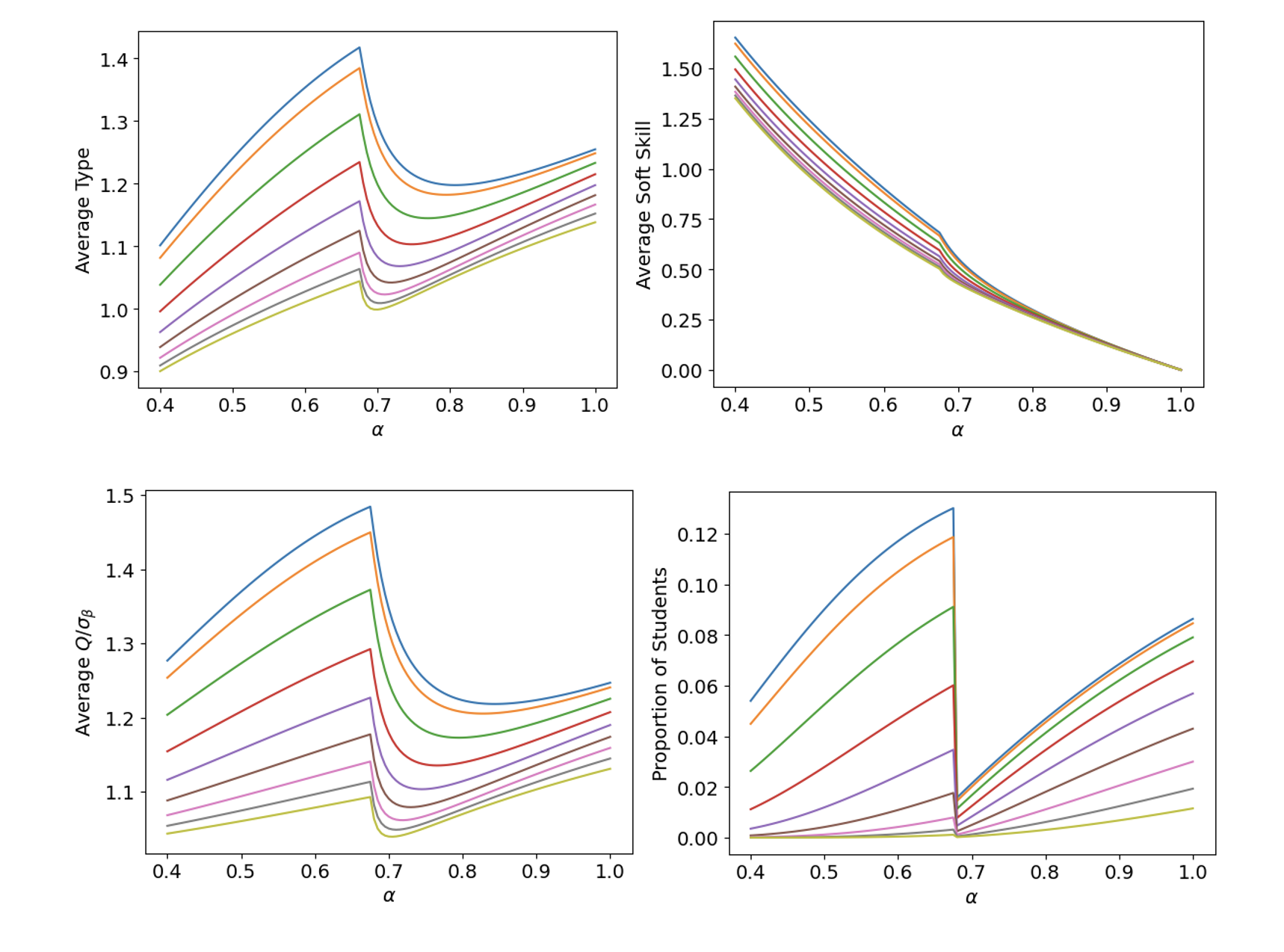}
    \caption{Parameter choice: $\beta_1=0.9$, $\beta_2=0.5$, $q=0.9$, and $\tau_{Q,1}/\sigma_{\beta,1}=\tau_{Q,2}/\sigma_{\beta,2}=1$. Average type (top-left), average soft skill (top-right), average quality score (bottom-left), and total proportion of students attending (bottom-right) for University 1. The lines represent, from top to bottom, $\sigma_Z=0, 0.25, 0.5, 0.75, 1, 1.25, 1.5, 1.75, 2$.}
    \label{fig:quadgraph_alpha}
\end{figure*}

\paragraph{Impact of $\alpha$:} Interestingly, the monotonicity of average type and soft skill of \emph{all} admitted students in $\alpha$ is still true when there are multiple universities:

\begin{proposition}\label{prop:monorange}
    For $\tau_{Q,1}, \tau_{Q,2} \geq 0$,
    the average type monotonically increases in $\alpha$ both for the high-scoring group of students accepted to both universities and the moderate-scoring group accepted to one university.
\end{proposition}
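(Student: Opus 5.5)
We write $c_i(\alpha) = \alpha\beta_i + (1-\alpha)(1-\beta_i)$ and use Claim~\ref{clm:adm_threshold}, which gives $\tau_{A,i} = \tau_{Q,i}\sigma_\alpha^2/c_i(\alpha)$. Since $t \mid A=a$ has mean $\alpha a/\sigma_\alpha^2$ and $A\sim\mathcal N(0,\sigma_\alpha^2)$, the average type of any group defined by an interval of scores $\{\ell\le A<h\}$ is
\[
\frac{\alpha}{\sigma_\alpha}\, m(\ell/\sigma_\alpha, h/\sigma_\alpha), \qquad m(x,y) := \mathbb{E}[U \mid x\le U<y] = \frac{\phi(x)-\phi(y)}{\Phi(y)-\Phi(x)}, \quad U\sim\mathcal N(0,1).
\]
The high-scoring group corresponds to $\ell=\max_i\tau_{A,i}$ and $h=\infty$. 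The moderate group corresponds to $\ell=\min_i\tau_{A,i}$ and $h=\max_i\tau_{A,i}$. The proof treats these two cases separately, with the common ingredient that $\alpha/\sigma_\alpha$ is increasing in $\alpha$. Indeed, a direct differentiation gives $\frac{d}{d\alpha}\frac{\alpha^2}{\sigma_\alpha^2} = \frac{2\alpha[(1-\alpha)^2+\alpha(1-\alpha)+\sigma_Z^2]}{\sigma_\alpha^4}>0$.

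\textbf{High-scoring group.} By Theorem~\ref{thm:more_selective}, the identity of the more selective university is constant on $(0,\alpha_{th})$ and on $(\alpha_{th},1)$; if $\alpha_{th}\notin(0,1)$, there is a single such interval. On each of these intervals the high group is exactly the admitted set of a single university with fixed $(\beta_i,\tau_{Q,i})$. Proposition~\ref{prop:alphamon} then gives monotonicity on each piece. Moreover, $\max_i \tau_{A,i}$ is continuous in $\alpha$, because each $\tau_{A,i}$ is continuous given that $c_i(\alpha)>0$. Hence the average type is continuous at $\alpha_{th}$, and piecewise monotonicity together with continuity yields monotonicity on all of $(0,1)$. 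The boundary case $\tau_{Q,i}=0$ is covered by the same argument, since the thresholds are then simply $0$.

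\textbf{Moderate group.} I will again split at $\alpha_{th}$ and use continuity, so it suffices to fix which university has the lower threshold. Write $x(\alpha)=\tau_{Q,\mathrm{low}}\sigma_\alpha/c_{\mathrm{low}}(\alpha)$ and $y(\alpha)=\tau_{Q,\mathrm{high}}\sigma_\alpha/c_{\mathrm{high}}(\alpha)$. I will show that the log-derivative
\[
\frac{d}{d\alpha}\log\frac{\alpha}{\sigma_\alpha} + \frac{\partial_x m\, x'(\alpha)+\partial_y m\, y'(\alpha)}{m}
\]
is positive. Two standard facts about the Gaussian will be used. First, by log-concavity, $\partial_x m,\partial_y m\in(0,1)$ and $\partial_x m+\partial_y m \le 1$. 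Second, $m>x\ge 0$ because $\tau_Q\ge0$. The first term is handled by the computation above. The derivatives $x'$ and $y'$ can be negative; for example, $x'<0$ happens when $\beta_{\mathrm{low}}>1/2$, since $c$ is then increasing in $\alpha$. In that case I will bound $|x'|/x$ via $\frac{d}{d\alpha}\log(\sigma_\alpha/c(\alpha))$ and use $\partial_x m\cdot x \le m$ to reduce the claim to the same kind of inequality that controls the single-university case, namely $\frac{d}{d\alpha}\log\frac{\alpha}{\sigma_\alpha} \ge \frac{d}{d\alpha}\log\frac{c(\alpha)}{\sigma_\alpha}$ weighted by a factor in $[0,1]$. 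This inequality reduces to $\frac{1}{\alpha}\ge \frac{\beta-(1-\beta)}{c(\alpha)}$, i.e.\ $c(\alpha)\ge\alpha(2\beta-1)$, which holds because $c(\alpha)-\alpha(2\beta-1)=1-\beta>0$.

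The main obstacle is this last step. The two endpoints move in possibly opposite directions, so it must be checked that the combined weight $\frac{\partial_x m\,x+\partial_y m\,y}{m}$ stays at most $1$. I would first try to prove this by writing $m(x,y)$ as a mixture of the upper tails $H(x)$ and $H(y)$ and applying the single-tail bound $H'(z)z\le H(z)$ to each term. If that fails, a fallback is to compare directly with the convex-combination identity $\mathbb{E}[t\mid A\ge\ell] = w\,\mathbb{E}[t\mid A\ge h] + (1-w)\,\mathbb{E}[t\mid \ell\le A<h]$.
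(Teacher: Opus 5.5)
Your high-scoring-group argument is fine and matches the paper's. The paper writes the group's average type as the maximum of the two single-university averages, each increasing by Proposition~\ref{prop:alphamon}; your piecewise-plus-continuity version is equivalent. Your continuity argument at $\alpha_{th}$ for the moderate group is also sound.

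\textbf{The gap: the moderate group.} The step you flag as the main obstacle carries the whole proof there, and it is not established. Write $g=\log(\alpha/\sigma_\alpha)$, $h_i=\log(\sigma_\alpha/c_i(\alpha))$, $w_x=x\,\partial_x m/m$ and $w_y=y\,\partial_y m/m$. Your log-derivative is exactly
\[
g'\,(1-w_x-w_y)+w_x\,(g'+h_{\mathrm{low}}')+w_y\,(g'+h_{\mathrm{high}}'),\qquad g'+h_i'=\frac{1-\beta_i}{\alpha\,c_i(\alpha)}>0,
\]
so everything hinges on $x\,\partial_x m+y\,\partial_y m\le m$ for $0\le x<y$.

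This does not follow from your two ``standard facts''. The bounds $\partial_x m+\partial_y m\le 1$ and $m>x$ give $w_x\le 1$. Because $y>m$, they do not even give $w_y\le 1$. So a falling upper cutoff alone is not handled. This occurs, e.g., for $\alpha$ near $1/2$ when the more selective university has $\beta>1/2$.

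Your suggested route also fails as stated. We have $m(x,y)=p\,H(x)-(p-1)\,H(y)$ with $p=(1-\Phi(x))/(\Phi(y)-\Phi(x))>1$. The coefficient on $H(y)$ is negative, so the termwise bound $zH'(z)\le H(z)$ points the wrong way, and $p$ itself changes under rescaling.

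\textbf{The missing idea: a scaling identity.} For $V\sim\mathcal N(0,\lambda^{-2})$ we have $m(\lambda x,\lambda y)/\lambda=\mathbb{E}[V\mid x\le V<y]$, and
\[
\frac{d}{d\lambda}\Big[m(\lambda x,\lambda y)/\lambda\Big]_{\lambda=1}=x\,\partial_x m+y\,\partial_y m-m .
\]
So your inequality says exactly that the mean of a centered normal, truncated to a fixed interval in $[0,\infty)$, is nondecreasing in its standard deviation. That follows from the monotone likelihood ratio $f_{\sigma_2}/f_{\sigma_1}\propto\exp\big(\tfrac{v^2}{2}(\sigma_1^{-2}-\sigma_2^{-2})\big)$ for $\sigma_2\ge\sigma_1$. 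This is precisely the paper's Claim~\ref{widened}.

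\textbf{The paper's simpler route.} The paper avoids your bookkeeping entirely by choosing a different scale. Measure scores by $X=\alpha A/\sigma_\alpha^2=\mathbb{E}[t\mid A]$. The cutoffs become $\tau_{Q,i}\,\alpha/c_i(\alpha)$, which are nondecreasing in $\alpha$: their derivative is $\tau_{Q,i}(1-\beta_i)/c_i(\alpha)^2$, which is your identity $c-\alpha(2\beta-1)=1-\beta$. Meanwhile the standard deviation $\alpha/\sigma_\alpha$ increases. Raising both endpoints raises any truncated mean, and widening raises it by Claim~\ref{widened}. So the endpoints never move in opposing directions. The non-monotone cutoffs you had to fight come only from standardizing by $\sigma_\alpha$.
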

\noindent 
The proof is provided in Appendix~\ref{app:monorange}. This implies that, for each university $\mathcal{U}_i$, any non-monotonicity in average type or soft skill as a function of $\alpha$ is purely the result of Simpson's paradox\footnote{Simpson's paradox is a phenomenon in probability and statistics in which a trend appears in several groups of data but disappears or reverses when the groups are combined.} and the self-selection effects described above.  In particular, for the less selective university, as the thresholds $\tau_{A,1}$ and $\tau_{A,2}$ for $A$ become further apart, the proportion of students who are only accepted to one university increases, putting downward pressure on both the average type and the average soft skill at the less selective university. 

Further, per Theorem~\ref{thm:more_selective}, changing $\alpha$ can also change which university is more selective. Which university is more selective shifts at $\alpha = \alpha_{th}$, which impacts the type, soft skill, and quality of students admitted to both universities.

\begin{theorem}
    Suppose University $\mathcal{U}_1$ is more selective for $\alpha < \alpha_{th}$ and less selective for $\alpha > \alpha_{th}$. When students pick the more selective university with probability $q = 1$, the average type and soft skill at $\mathcal{U}_1$ is discontinuous and non-monotone at $\alpha_{th}$.
    \label{discontalpha}
\end{theorem}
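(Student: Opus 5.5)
We compute the average type at $\mathcal{U}_1$ on each side of $\alpha_{th}$ using the results already established, and compare the one-sided limits at $\alpha_{th}$. For $\alpha<\alpha_{th}$, $\mathcal{U}_1$ is more selective. By Proposition~\ref{identmorsel}, its average type and soft skill equal the single-university values from Proposition~\ref{prop:meantypesoft}:
\[
\frac{\alpha}{\sigma_\alpha}H(\tau_{A,1}/\sigma_\alpha),\qquad \frac{1-\alpha}{\sigma_\alpha}H(\tau_{A,1}/\sigma_\alpha).
\]
Both expressions are continuous in $\alpha$, since $\tau_{A,1}$ from Claim~\ref{clm:adm_threshold} is continuous in $\alpha$. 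As $\alpha\uparrow\alpha_{th}$ we have $\tau_{A,1}\to\tau_{A,2}=:\tau^*$, so the left limit of the average type is $L_t=\frac{\alpha_{th}}{\sigma_{\alpha_{th}}}H(\tau^*/\sigma_{\alpha_{th}})$. The soft-skill limit $L_s$ is the same with $1-\alpha_{th}$ in place of $\alpha_{th}$.

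For $\alpha>\alpha_{th}$, $\mathcal{U}_1$ is less selective. With $q=1$, every student with $A\ge\tau_{A,2}$ attends $\mathcal{U}_2$. So $\mathcal{U}_1$ enrolls exactly the students with $\tau_{A,1}\le A<\tau_{A,2}$. Since $\mathbb{E}[t\mid A]=\alpha A/\sigma_\alpha^2$ and $A\sim\mathcal{N}(0,\sigma_\alpha^2)$, the average type on this band is
\[
\frac{\alpha}{\sigma_\alpha}\cdot\frac{\phi(\tau_{A,1}/\sigma_\alpha)-\phi(\tau_{A,2}/\sigma_\alpha)}{\Phi(\tau_{A,2}/\sigma_\alpha)-\Phi(\tau_{A,1}/\sigma_\alpha)},
\]
and analogously for soft skill. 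As $\alpha\downarrow\alpha_{th}$ the two thresholds merge at $\tau^*$. The ratio is a difference quotient of $\phi$ over $\Phi$, so it tends to $-\phi'(x)/\phi(x)=x$ at $x=\tau^*/\sigma_{\alpha_{th}}$. Equivalently, the band collapses to a point, so the average of $A$ on it tends to $\tau^*$. The right limit is therefore $R_t=\frac{\alpha_{th}}{\sigma_{\alpha_{th}}}\cdot\frac{\tau^*}{\sigma_{\alpha_{th}}}$.

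Discontinuity now follows from the strict inequality $H(x)>x$ for the standard normal hazard rate, valid for all $x$ (the Mills-ratio inequality). This gives $R_t<L_t$, and likewise $R_s<L_s$, so both averages jump down at $\alpha_{th}$. One further point: at $\alpha=\alpha_{th}$ itself the band is empty, so the value there is either undefined or, by tie-breaking, equals the left value. Either way the function is not continuous.

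For non-monotonicity, the left-side average type $\frac{\alpha}{\sigma_\alpha}H(\tau_{A,1}/\sigma_\alpha)$ increases on some left neighbourhood of $\alpha_{th}$ and then drops at $\alpha_{th}$. To see the increase, differentiate in $\alpha$: the factor $\alpha/\sigma_\alpha$ is increasing for $\alpha\ge 1/2$, and we must control how $\tau_{A,1}/\sigma_\alpha$ moves. I expect this derivative sign to be the main obstacle. A cleaner alternative avoids derivatives entirely: a downward jump is already non-monotone unless the function is nonincreasing on both sides. So it suffices to exhibit an increase on one side only. On the right, the band $[\tau_{A,1},\tau_{A,2}]$ widens from a point, and one can check the average rises back toward the single-university level away from $\alpha_{th}$ only if that is consistent with Proposition~\ref{prop:monorange}. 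That proposition states that average type is increasing in $\alpha$ within the fixed groups. Combining it with the group composition only changing continuously away from $\alpha_{th}$, I will argue that the average is increasing just left of $\alpha_{th}$. An increase on the left followed by a strict drop at $\alpha_{th}$ gives non-monotonicity.

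For soft skill, the roles are reversed. Proposition~\ref{prop:alphamon}-type monotonicity makes soft skill decreasing on the left, and the decrease continues through the downward jump. Non-monotonicity for soft skill must therefore come from the right side. There, as $\alpha$ moves past $\alpha_{th}$, the band widens and I will show the average initially increases from the collapsed value $R_s$, or at least is not globally decreasing. This right-hand behaviour is the step I expect to be delicate.
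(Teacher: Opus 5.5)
Your discontinuity argument is the paper's: it compares the left limit $\mathbb{E}[t\mid A\ge\tau^*]$ with the right limit $\mathbb{E}[t\mid A=\tau^*]$, and you make that comparison explicit through $H(x)>x$. The paper's proof stops there and says nothing about non-monotonicity, and that is where your proposal has a gap.

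\textbf{Type.} The gap is easily closed, and you do not need a derivative. For $\alpha<\alpha_{th}$ and $q=1$, $\mathcal{U}_1$ enrolls exactly $\{A\ge\tau_{A,1}\}$. So its average type is the single-university value, which is strictly increasing in $\alpha$ by Proposition~\ref{prop:alphamon}. The rewriting as $\mathbb{E}[X\mid X\ge x]$, with $X=\alpha A/\sigma_\alpha^2$ and $x=\alpha\tau_{Q,1}/(\alpha\beta_1+(1-\alpha)(1-\beta_1))$, absorbs the movement of $\tau_{A,1}/\sigma_\alpha$. Also, $\alpha/\sigma_\alpha$ is increasing on all of $(0,1)$ by Claim~\ref{cosmon}, not only for $\alpha\ge 1/2$. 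An increase on the left followed by a strict drop gives non-monotonicity. Your detour through Proposition~\ref{prop:monorange} starts to the right of $\alpha_{th}$ and then concludes something about the left, so it does not hold together; it is also not needed.

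\textbf{Soft skill.} The step you flag as delicate cannot be carried out, because it is false under the standing assumption $\tau_{Q,1},\tau_{Q,2}>0$ that defines $\alpha_{th}$. Let $Y=(1-\alpha)A/\sigma_\alpha^2$; its standard deviation $(1-\alpha)/\sigma_\alpha$ decreases in $\alpha$. Let $y_i=(1-\alpha)\tau_{A,i}/\sigma_\alpha^2=(1-\alpha)\tau_{Q,i}/(\alpha\beta_i+(1-\alpha)(1-\beta_i))$; its $\alpha$-derivative is $-\beta_i\tau_{Q,i}/(\alpha\beta_i+(1-\alpha)(1-\beta_i))^2<0$.

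For $\alpha>\alpha_{th}$ the average soft skill at $\mathcal{U}_1$ is $\mathbb{E}[Y\mid y_1\le Y<y_2]$. A truncated mean increases in both endpoints, and by Claim~\ref{widened} (valid since $y_1\ge 0$) it increases in the spread. Hence this quantity is strictly decreasing on $(\alpha_{th},1)$. It is also strictly below your $R_s=y^*$ from the start, since the whole band lies below $y_2(\alpha)<y_2(\alpha_{th})=y^*$.

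Combine this with the decrease on $(0,\alpha_{th})$ from Proposition~\ref{prop:alphamon} and the downward jump. The average soft skill at $\mathcal{U}_1$ is then monotone non-increasing in $\alpha$. For example, take $\beta_1=0.9$, $\beta_2=0.5$, $\sigma_Z=0$ and $\tau_{Q,i}=\sigma_{\beta_i}$, so $\alpha_{th}\approx 0.675$. The average soft skill is about $0.75$ at $\alpha=0.65$, $0.68$ just below $\alpha_{th}$, $0.46$ just above, and $0.42$ at $\alpha=0.7$.

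So for soft skill only the discontinuity part of the statement is provable; the non-monotonicity claim holds for type alone.
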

\noindent 
The proof is given in Appendix~\ref{app:discontalpha}. The sharp discontinuity is due to the fact that when University $1$ suddenly becomes less selective, it loses the entire pool of high-scoring students admitted to both universities. However, when $q < 1$, the university that becomes less selective beyond $\alpha = \alpha_{th}$ is still able to retain a fraction $(1-q)$ of the high-scoring students admitted to both universities, which leads to non-monotonicities but preserves continuity. Formally:

\begin{theorem}\label{prop:nonmonalpha}
    Suppose that high-scoring students who get admitted to both universities pick the more selective university with probability $q=1-\epsilon$ for small $\epsilon$. Further, suppose that University $\mathcal{U}_1$ is the more selective university when $\alpha < \alpha_{th}$ and the less selective university with $\alpha > \alpha_{th}$. Then the average type and average soft skill among attending students at $\mathcal{U}_1$ is continuous but non-monotone at $\alpha=\alpha_{th}$. 
\end{theorem}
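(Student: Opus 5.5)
The plan is to write the average type at $\mathcal{U}_1$ explicitly as a function of $\alpha$ near $\alpha_{th}$, in terms of the two admission thresholds $\tau_{A,1}(\alpha),\tau_{A,2}(\alpha)$ from Claim~\ref{clm:adm_threshold}. Conditional on $A=a$, the expected type is $\alpha a/\sigma_\alpha^2$ (the posterior given after Claim~\ref{clm:adm_threshold}), and $A\sim\mathcal{N}(0,\sigma_\alpha^2)$. Hence, for any score interval $I$, the mass and the total type of students with $A\in I$ are
\begin{align*}
M(I)=\int_I \tfrac{1}{\sigma_\alpha}\phi(a/\sigma_\alpha)\,da, \qquad T(I)=\int_I \tfrac{\alpha a}{\sigma_\alpha^3}\phi(a/\sigma_\alpha)\,da .
\end{align*}
Both are continuous in $\alpha$ and in the endpoints of $I$. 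Soft skill is handled the same way, with $\alpha$ replaced by $1-\alpha$.

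For $\alpha<\alpha_{th}$, $\mathcal{U}_1$ is more selective. It enrolls the fraction $q$ of the high group $[\tau_{A,1},\infty)$, so its average type equals its single-university value $\frac{\alpha}{\sigma_\alpha}H(\tau_{A,1}/\sigma_\alpha)$, as in Proposition~\ref{identmorsel}. For $\alpha>\alpha_{th}$, it enrolls $(1-q)$ of $[\tau_{A,2},\infty)$ plus all of $[\tau_{A,1},\tau_{A,2})$. Its average type is then
\begin{align*}
\frac{(1-q)T([\tau_{A,2},\infty))+T([\tau_{A,1},\tau_{A,2}))}{(1-q)M([\tau_{A,2},\infty))+M([\tau_{A,1},\tau_{A,2}))}.
\end{align*}

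\textbf{Continuity.} As $\alpha\to\alpha_{th}$, the thresholds coincide by Theorem~\ref{thm:more_selective}, since $\tau_{A,1}-\tau_{A,2}$ is continuous in $\alpha$ and vanishes at $\alpha_{th}$. So $M([\tau_{A,1},\tau_{A,2}))$ and $T([\tau_{A,1},\tau_{A,2}))$ tend to $0$. The right-hand expression then tends to $T/M$ of $[\tau_{A}(\alpha_{th}),\infty)$, because $1-q=\epsilon>0$ keeps the denominator bounded away from zero. This is exactly the left limit. At $\alpha_{th}$ itself both universities are equally selective, and any tie-breaking rule gives the same average. Hence the average type is continuous at $\alpha_{th}$; soft skill is identical.

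\textbf{Non-monotonicity.} This is the main step. I would show that the right derivative at $\alpha_{th}$ is of order $-1/\epsilon$, while the left derivative is bounded independently of $\epsilon$, so the function increases on the left and decreases on the right. On the left, the function $\frac{\alpha}{\sigma_\alpha}H(\tau_{A,1}(\alpha)/\sigma_\alpha)$ does not depend on $\epsilon$, and its derivative is finite. On the right, write $d(\alpha)=\tau_{A,2}-\tau_{A,1}$, which is positive for $\alpha>\alpha_{th}$ with slope $d'(\alpha_{th})>0$. Generically $d'(\alpha_{th})>0$; it is nonzero because $\alpha_{th}$ is a simple root of a linear-in-$\alpha$ numerator. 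Let $m=M([\tau,\infty))$ and let $\mu$ be its mean type, so $\mu=\frac{\alpha}{\sigma_\alpha}H(\tau/\sigma_\alpha)$.

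The band contributes mass approximately $f d$, where $f$ is the density at $\tau$, and its mean type is approximately $\alpha\tau/\sigma_\alpha^2<\mu$. The inequality holds because $H(x)>x$ and $\tau>0$. The right-hand average is therefore approximately
\begin{align*}
\frac{\epsilon m\mu+fd\,\alpha\tau/\sigma_\alpha^2}{\epsilon m+fd}.
\end{align*}
Its derivative in $\alpha$ at $d=0$ is $-\frac{f d'(\alpha_{th})}{\epsilon m}\left(\mu-\alpha\tau/\sigma_\alpha^2\right)$ plus $O(1)$ terms. This tends to $-\infty$ as $\epsilon\to0$. Therefore, for small $\epsilon$, the average type strictly decreases just to the right of $\alpha_{th}$. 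The analogous computation applies to soft skill, using $H(x)>x$ again.

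Finally, I need the left side not to also be decreasing at the same rate. Since the left derivative is $O(1)$, the function is not monotone decreasing across $\alpha_{th}$ in a uniform way. To get genuine non-monotonicity, I would show that the left derivative exceeds the right one in a way that produces a local maximum or a kink reversal. Concretely, I would compare values at $\alpha_{th}\pm\eta$ with $\eta$ small relative to $\epsilon$, or simply exhibit that for type the left derivative is positive. Type is increasing in $\alpha$ on each group by Proposition~\ref{prop:monorange}, and by Proposition~\ref{prop:alphamon} the left side is increasing for fixed normalized threshold.

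I expect the main obstacle to be this sign check on the left side for soft skill, where the single-university trend in $\alpha$ is decreasing. If it fails, I would instead argue non-monotonicity for soft skill by showing that the function first drops steeply right of $\alpha_{th}$ and then recovers as $d$ grows. The recovery argument is that once $fd\gg\epsilon m$, the average returns toward the less-selective single-university value, which is continuous and $O(1)$-sloped. This gives a local minimum to the right of $\alpha_{th}$.
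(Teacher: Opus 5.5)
\textbf{What works.} Your continuity argument is correct and matches the paper's. So is your treatment of the average type. The paper fixes $\delta$ and then takes $\epsilon$ small enough that the band $[\tau_{A,1},\tau_{A,2})$ dominates $\mathcal{U}_1$'s enrollment at $\alpha_{th}+\delta$, so the value there lies below the value at $\alpha_{th}-\delta$. Your right derivative $\mu'-f\,d'(\alpha_{th})\,(\mu-\alpha\tau/\sigma_\alpha^2)/(\epsilon m)$ is a sharper form of the same dilution effect. Pairing it with the strictly positive left derivative from Proposition~\ref{prop:alphamon} (which the paper uses only implicitly) gives a genuine local maximum of the average type at $\alpha_{th}$.

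\textbf{The gap: soft skill.} Your fallback for soft skill does not work, and in fact that half of the statement is false. To the right of $\alpha_{th}$ the average is $g=(1-w)\mu_h+w\mu_b$. Here $\mu_h$ and $\mu_b$ are the mean soft skills on $[\tau_{A,2},\infty)$ and on the band, and $w=B/(\epsilon m+B)$, with $B$ the band mass. So $g\geq\mu_b$: it relaxes toward the band mean, not toward $\mathcal{U}_1$'s single-university value. A recovery would need $\mu_h$ or $\mu_b$ to increase, or $w$ to decrease, and near $\alpha_{th}$ none of these happens.

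Write $D_i=\alpha\beta_i+(1-\alpha)(1-\beta_i)$. Then $d'(\alpha_{th})=\sigma_\alpha^2\tau_{Q,1}(\beta_1-\beta_2)/(D_1^2D_2)$. The band mean starts at the threshold value $\tau_{Q,1}(1-\alpha)/D_1$, so
\[
\lim_{\alpha\downarrow\alpha_{th}}\mu_b'(\alpha)=-\frac{\tau_{Q,1}\beta_1}{D_1^2}+\frac{(1-\alpha)\,d'(\alpha_{th})}{2\sigma_\alpha^2}=\frac{\tau_{Q,1}}{D_1^2}\left(-\beta_1+\frac{(1-\alpha)(\beta_1-\beta_2)}{2D_2}\right)<0,
\]
because $2\beta_1D_2\geq 2\beta_1(1-\alpha)(1-\beta_2)>(1-\alpha)(\beta_1-\beta_2)$. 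The other pieces also push $g$ down:
\begin{itemize}
\item $\mu_h'<0$, by Proposition~\ref{prop:alphamon} applied to $\mathcal{U}_2$;
\item $\mu_b<\mu_h$;
\item $w'=\epsilon(B'm-Bm')/(\epsilon m+B)^2>0$ near $\alpha_{th}$.
\end{itemize}
Hence $g'=(1-w)\mu_h'+w\mu_b'+w'(\mu_b-\mu_h)<0$ on a right neighbourhood of $\alpha_{th}$ whose length does not depend on $\epsilon$. On the left, the average soft skill is $\mathcal{U}_1$'s single-university value, which is strictly decreasing by Proposition~\ref{prop:alphamon}. With continuity, the average soft skill is therefore strictly decreasing through $\alpha_{th}$ for every $\epsilon$. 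For the parameters of Figure~\ref{fig:quadgraph_alpha}, $\alpha_{th}\approx 0.675$ and this limiting slope is about $-1.7$.

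The paper's proof does not cover this case either. It argues only about type, and ``lower at $\alpha_{th}+\delta$ than at $\alpha_{th}-\delta$'' is perfectly consistent with a decreasing function. What can be proved is continuity of both averages and non-monotonicity of the average type only. You should say so rather than try to rescue the soft-skill claim.
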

\noindent 
The proof is in Appendix~\ref{app:nonmonalpha}. For a graphical illustration of the impact of $\alpha$ on admission outcomes, see Figure~\ref{fig:quadgraph_alpha}.

This reversal can happen when changing $\beta$ as well as changing $\alpha$, so Proposition $\ref{prop:opdir}$ does not generalize to multiple universities.  Figure \ref{betavar} gives an example.  
\medskip

Although Figure \ref{betavar} primarily shows the effects on~~$\mathcal{U}_1$ when it is the less selective university (roughly corresponding to $0.5<\beta_1<0.97$), it is \textit{also} a good illustration of the interplay between \Cref{prop:opdir} and \Cref{thm:betacurve}.  When $\beta_1$ is higher or lower, $\mathcal{U}_1$ is the more selective university, so by \Cref{identmorsel}, the average type, soft skills, and quality in those portions of the graph match the single-university case.  Note that, in these sections of the graph average type and soft skill decrease (top row) but average quality increases (bottom-left) as $\beta_1$ gets closer to $\alpha=0.7$.

 \begin{figure*}[t]
     \centering
     \includegraphics[width=0.9\textwidth]{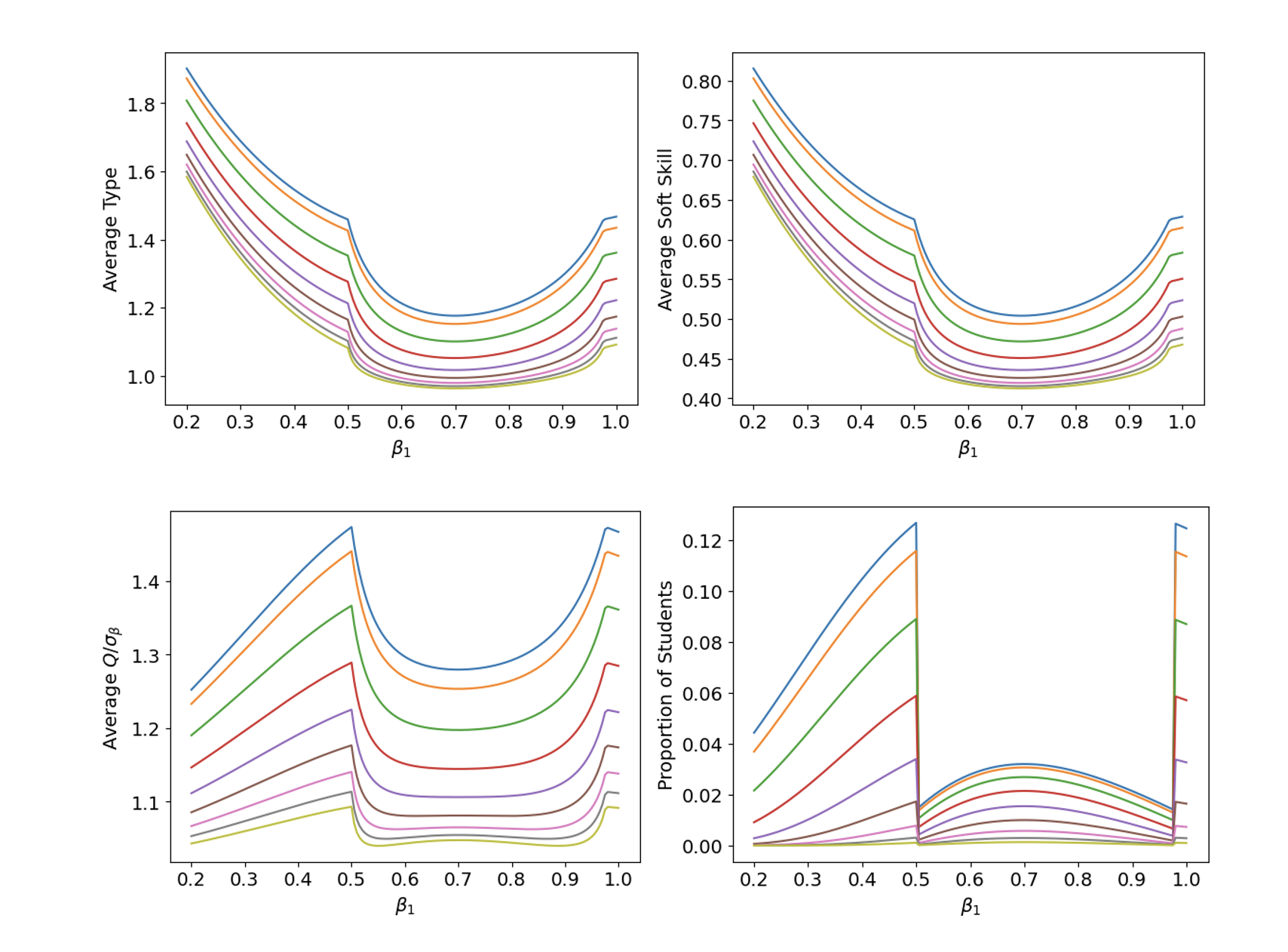}
     \caption{Parameter choice: $\alpha=0.7$, $\beta_2=0.5$, $q=0.9$, and $\tau_{Q,1}/\sigma_{\beta,1}=\tau_{Q,2}/\sigma_{\beta,2}=1$. Average type (top-left), average soft skill (top-right), average quality score (bottom-left), and total proportion of students attending (bottom-right) for University 1. The lines represent, from top to bottom, $\sigma_Z=0, 0.25, 0.5, 0.75, 1, 1.25, 1.5, 1.75, 2$.}
     \label{betavar}
 \end{figure*}

\subsection{Self-Sorting Model}

\begin{figure*}[t]
    \centering
    \includegraphics[width=0.9\textwidth]{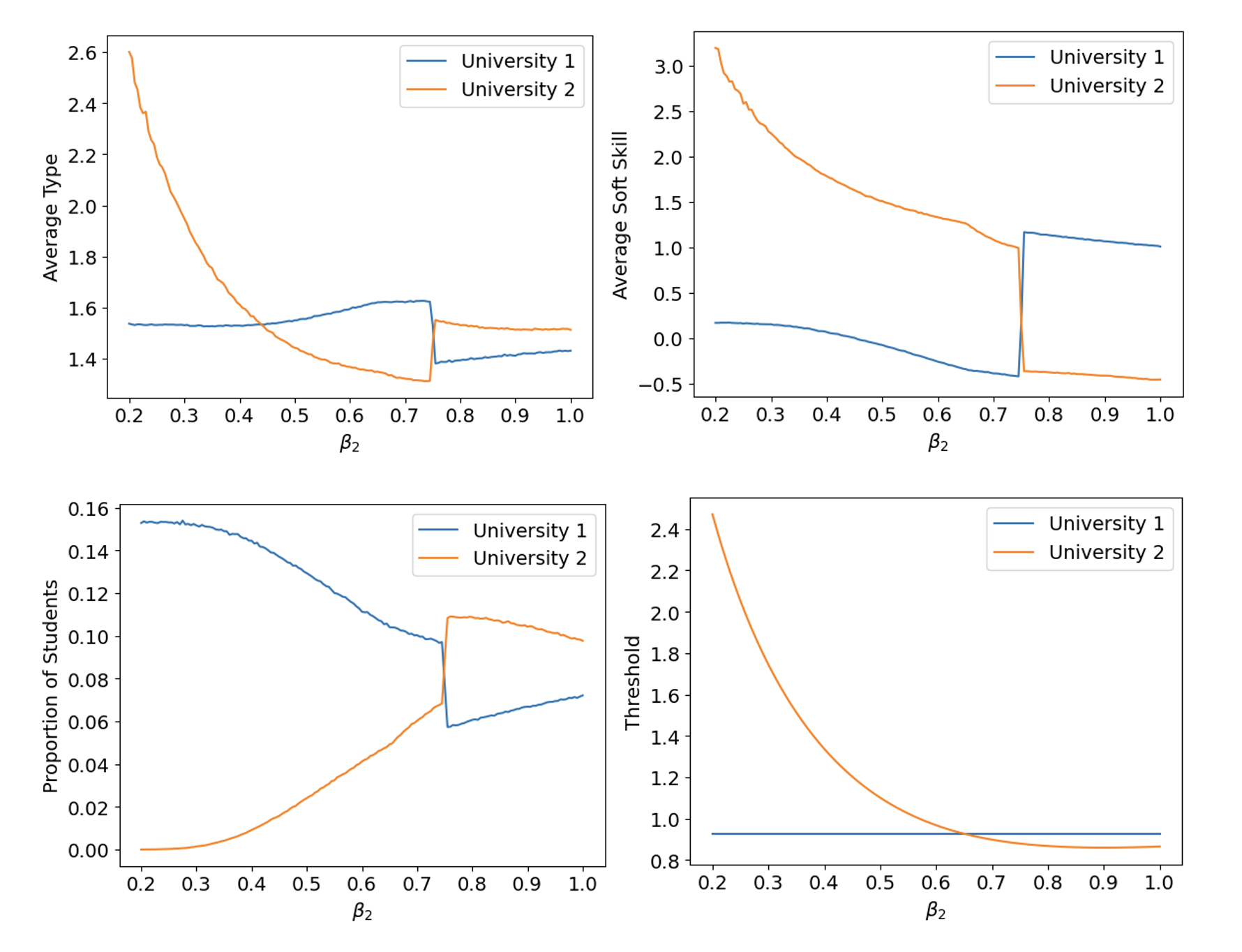}
    \caption{Parameter choice: $\alpha=0.9$, $\beta_1=0.75$, $\tau_{Q,1}/\sigma_{\beta,1}=1$, $\tau_{Q,2}/\sigma_{\beta,2}=0.95$, $\sigma_Z=0$.  Under self-sorting assumptions, the average type of attendees at each university as a function of $\beta_2$ (top-left), average soft skill of attendees at each university as a function of $\beta_2$ (top-right), proportion of students at each university as a function of $\beta_2$ (bottom-left), and threshold at each university as a function of $\beta_2$ (bottom-right).}
    \label{quadselfsort}
\end{figure*}

Here, if a student with type $t$ and soft skill $s$ is admitted to both universities, that student will choose the university $\mathcal{U}_i$ that optimizes $(\beta_it+(1-\beta_i)s)/\sigma_{\beta,i},$ where $\sigma_{\beta,i}=\sqrt{\beta_i^2+(1-\beta_i)^2}$.

Our simulations shows that insights are relatively similar to the previous case, where students admitted to both choose one university at random, with an inversion happening in which university students decide to go to. Here, remember that students chooses the university that gives them a higher quality score. This provides evidence that our results may remain robust to the specific form that our self-selection effects take. In particular, Figure~\ref{quadselfsort} provides an example where $\alpha=0.9$, $\sigma_Z=0$, $\tau_{Q,1}=1$, $\beta_1=0.75$, $\tau_{Q,2}=0.95$. This figure show the impact of varying the $\beta$ parameter of one university (here, we pick $U_2$) on the outcomes in \emph{both} universities. Figure \ref{quadselfsort}, as $\beta_2$ changes, still shows a sudden swap in the student body at each university when $\beta_2\approx 0.75=\beta_1$, as the entire segment of the student body admitted to both universities then swaps universities.

\section{Admissions through Multiple Metrics}\label{sec:mm}
We conclude with a setting where universities have access to two types of admission metrics $A_1$ and $A_2$ for each student, where $A_1 = \alpha_1 t + (1-\alpha_1)s + \sigma_{Z,1}Z_1$ and $A_2 = \alpha_2 t + (1-\alpha_2)s + \sigma_{Z,2}Z_2$. For example, the first metric $A_1$ could be standardized test scores like the SAT which are good at eliciting information about the student's type, while the second metric $A_2$ could measure extra-curricular achievements, a good indicator of the student's soft skill level. Intuitively, metrics which evaluate complementary skills (i.e., $\alpha_1$ and $\alpha_2$ are far apart), together, can provide universities a holistic overview of a student which can enable better admission decisions. Therefore, for simplicity, we specifically consider the setting with $\alpha_1=1$ and $\alpha_2 = 0$, i.e, $A_1$ offers a noisy signal about the student's type and $A_2$ offers a noisy signal about their soft-skill. We now investigate admission outcomes when universities use naive decision rules, both with and without competition. A student receives an admit from a university if and only if their score is good enough across both dimensions, i.e. iff $A_1 \geq \tau_{A_1}$ and $A_2 \geq \tau_{A_2}$. Such admission rules are quite common in practice, for example, in the case of standardized tests like the GRE or SAT, universities often impose category-wise thresholds (separate cut-offs on the verbal and quant sections of the test). 

Note that as before, there is a one-to-one mapping between the university's quality threshold and the metric-wise admission thresholds: given a desired threshold $\tau_Q$ on quality, there exist thresholds $\tau_{A,1}$ and $\tau_{A,2}$ on scores $A_1$ and $A_2$ respectively such that a student with realized scores $A_1 = a_1$ and $A_2 = a_2$ is admitted if and only if the expected student quality according to $\beta$ satisfy both: 
\begin{align}
&\mathbb{E}[\beta t+ (1-\beta)s~|~A_1 = a_1] \geq \tau_{Q},~\text{and}\\
&\mathbb{E}[\beta t+ (1-\beta)s~|~A_2 = a_2] \geq \tau_{Q}.
\end{align}

\paragraph{The single-university case} We first provide some insights about the single university setting where there is no competition. As described earlier, the university only admits those students who are expected to pass the university's threshold, conditioning on their score along each metric separately. 

\begin{proposition}
    Assume two admissions metrics, $A_1=t+\sigma_{Z,1}Z_1$ and $A_2=s+\sigma_{Z,2}Z_2$, and a single university that will accept a student with admissions metrics $(a_1,a_2)$ if $\mathbb{E}[\beta t+(1-\beta)s|A_1=a_1]\geq \tau_Q$ and $\mathbb{E}[\beta t+(1-\beta)s|A_2=a_2]\geq \tau_Q$.  Then, for fixed $\tau_Q/\sigma_{\beta}>0$, as $\beta$ increases, the average type of admitted students monotonically decreases and the average soft skill monotonically increases.
    \label{paradox}
\end{proposition}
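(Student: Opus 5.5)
Since $A_1=t+\sigma_{Z,1}Z_1$ and $A_2=s+\sigma_{Z,2}Z_2$ with $t\perp s$, the two admission conditions decouple, and the plan is to exploit that. Write $k_1=1/(1+\sigma_{Z,1}^2)$ and $k_2=1/(1+\sigma_{Z,2}^2)$. By the same Bayesian update as in Claim~\ref{clm:adm_threshold} (the case $\alpha=1$, resp.\ $\alpha=0$), we have $\mathbb{E}[t\mid A_1=a_1]=k_1a_1$ and $\mathbb{E}[s\mid A_1=a_1]=0$. Hence
\[
\mathbb{E}[\beta t+(1-\beta)s\mid A_1=a_1]=\beta k_1 a_1 ,
\]
and symmetrically $\mathbb{E}[\beta t+(1-\beta)s\mid A_2=a_2]=(1-\beta)k_2 a_2$. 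The admission region is therefore the product set
\[
\bigl\{a_1\geq \tau_{A_1}\bigr\}\times\bigl\{a_2\geq \tau_{A_2}\bigr\},
\qquad
\tau_{A_1}=\frac{\tau_Q}{\beta k_1},\quad
\tau_{A_2}=\frac{\tau_Q}{(1-\beta)k_2}.
\]

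The pairs $(t,A_1)$ and $(s,A_2)$ are independent, so conditioning on the product event does not mix them. The average type of admits is thus $\mathbb{E}[t\mid A_1\geq\tau_{A_1}]$, and the average soft skill is $\mathbb{E}[s\mid A_2\geq\tau_{A_2}]$. Each of these is a one-metric computation, exactly as in Proposition~\ref{prop:meantypesoft} with $\alpha=1$ (resp.\ $\alpha=0$). Writing $\sigma_1=\sqrt{1+\sigma_{Z,1}^2}$ and $\sigma_2=\sqrt{1+\sigma_{Z,2}^2}$, this gives
\[
\text{avg type}=\frac{1}{\sigma_1}H\!\left(\frac{\tau_{A_1}}{\sigma_1}\right),
\qquad
\text{avg soft skill}=\frac{1}{\sigma_2}H\!\left(\frac{\tau_{A_2}}{\sigma_2}\right).
\]
The same formulas follow directly from $\mathbb{E}[t\mid A_1]=k_1A_1$ and the normal hazard rate.

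Next, substitute the constraint that $c:=\tau_Q/\sigma_\beta>0$ is held fixed. Then
\[
\tau_{A_1}=\frac{c\,\sqrt{\beta^2+(1-\beta)^2}}{\beta k_1},
\qquad
\tau_{A_2}=\frac{c\,\sqrt{\beta^2+(1-\beta)^2}}{(1-\beta)k_2}.
\]
By the monotonicity of the hazard rate (as used in Corollary~\ref{corr:smsu_tau}), it suffices to show two things: $g(\beta)=\sqrt{\beta^2+(1-\beta)^2}/\beta$ is strictly decreasing on $(0,1)$, and $h(\beta)=\sqrt{\beta^2+(1-\beta)^2}/(1-\beta)$ is strictly increasing. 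Since $h(\beta)=g(1-\beta)$, only $g$ needs checking. We have
\[
g(\beta)^2=1+\left(\frac{1-\beta}{\beta}\right)^2 ,
\]
and $(1-\beta)/\beta$ is positive and decreasing in $\beta$, so $g$ is strictly decreasing. Consequently $\tau_{A_1}$ decreases and $\tau_{A_2}$ increases in $\beta$. Because $H$ is increasing, the average type decreases and the average soft skill increases, which gives the claim. The noise levels enter only through the fixed positive constants $k_i$ and $\sigma_i$, so they do not affect the direction of monotonicity.

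The only step that needs care is the decoupling. One must verify that the conditional expectation given $A_1$ alone places zero weight on $s$, and that the admitted set is a genuine product event in independent coordinates, so that the averages factor. Once that is established, the remainder is the elementary monotonicity of $g$ together with the known monotonicity of the standard normal hazard rate. That monotonicity follows from $H'(x)=H(x)(H(x)-x)>0$, using the standard fact that $H(x)>x$.
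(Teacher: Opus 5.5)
Your proof is correct and follows essentially the same route as the paper. Both arguments use the independence of $(t,A_1)$ and $(s,A_2)$ to decouple the two conditions, and both reduce each average to a one-metric truncated-normal mean with thresholds $\tau_Q(1+\sigma_{Z,1}^2)/\beta$ and $\tau_Q(1+\sigma_{Z,2}^2)/(1-\beta)$; they then conclude from the monotonicity of $\sigma_\beta/\beta$ (and $\sigma_\beta/(1-\beta)$), where your identity $g(\beta)^2=1+\bigl((1-\beta)/\beta\bigr)^2$ is a cleaner justification of that step than the paper's appeal to an argument ``similar to'' its earlier claim.
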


The proof is provided in Appendix~\ref{app:paradox}. This comes from an interesting effect: suppose a university has a high $\beta$, i.e. they mostly care about type. Then, $A_1$ and $Q$ are well-aligned, but $A_2$ and $Q$ are not. In particular, among all students that have a high type, most of them will induce a high value of $\mathbb{E}[\beta t+ (1-\beta)s~|~A_1 = a_1]$ and comfortably satisfy their requirement that $\mathbb{E}[\beta t+ (1-\beta)s~|~A_1 = a_1] \geq \tau_{A,1}$. However, among these students, those that are likely to satisfy the requirement that $\mathbb{E}[\beta t+ (1-\beta)s~|~A_2 = a_2] \geq \tau_{A,2}$ must also have very high soft skill. Indeed, among these students, since $A_2$ is independent of $t$, this is the only way for them to differentiate themselves from the average and ensure they pass $\tau_{A,2}$. Formally, since $A_2 \perp t$ and $t$, admitted students must have $\mathbb{E}[s~|~A_2 = a_2] \geq \frac{\tau_{A,2}}{1-\beta}$, effectively making their bar on soft skill for admissions \emph{higher} in $\beta$. Therefore, a university should not treat the signals across two different dimensions, $A_1$ and $A_2$, separately, but instead combine them into a single score with preference parameter $\beta$.

\begin{remark}
The picture differs when a university combines signals $A_1$ and $A_2$ and makes decisions based on $\mathbb{E}[\beta t+ (1-\beta)s~|~A_1 = a_1, A_2 = a_2] \geq \tau_Q$. In particular, because $(t,A_1)$ are independent of $(s,A_2)$, one can rewrite 
\begin{align*}
&\mathbb{E}[\beta t+ (1-\beta)s~|~A_1 = a_1, A_2 = a_2] 
\\& = \beta \cdot \mathbb{E}[t|~A_1 = a_1] + (1-\beta) \cdot \mathbb{E}[s~|~A_2 = a_2]. 
\end{align*}
If $\beta$ is large, the best way for a student to be admitted by the university is to have a very high type (which is indeed the university's preferred outcome), while the dependency on soft skill is largely not taken into account. This in fact highlights the benefit of evaluating both dimensions together through a single score that aligns with the university preferences. On the contrary, as argued earlier, if the university were to filter students separately across each dimension---for example, through having a selection process where a student must separately obtain a high enough SAT score and pass an interview aimed towards soft skill---many high-type students would get filtered out due to the soft skill requirement (which is clearly not ideal from the university's point of view). 
\end{remark}

\paragraph{The competing universities case:}

Finally, we show that these effects persist in the case of multiple universities.

\begin{proposition}
    Suppose we have two universities with different values $\beta_a, \beta_b$ and thresholds $\tau_{Q,a},\tau_{Q,b}$.  Furthermore, assume, as before, that $A_1=t+\sigma_{Z,1}Z_1$ and $A_2=s+\sigma_{Z,2}Z_2$.  Assume that students admitted to both universities pick one of the universities with a uniform probability $q$.  Then there are two possibilities:

    \begin{enumerate}[label=(\alph*)]
  \item One university has a higher cutoff in both admissions metrics, and a higher average type and average soft skill.
  \item The university with the lower value of $\beta$ (i.e. it weights soft skill more heavily in its preferences) has the higher average type and the lower average soft skill.
\end{enumerate}
\label{multiparadox}
\end{proposition}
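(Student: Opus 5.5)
The plan is to reduce everything to two independent one-dimensional threshold rules and then compare the two universities' rectangles in $(A_1,A_2)$-space.

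Since $t \perp (s, Z_2)$ and $s \perp (t, Z_1)$, the conditional expectations are
\[
\mathbb{E}[\beta t+(1-\beta)s\,|\,A_1=a_1]=\frac{\beta a_1}{1+\sigma_{Z,1}^2}
\quad\text{and}\quad
\mathbb{E}[\beta t+(1-\beta)s\,|\,A_2=a_2]=\frac{(1-\beta) a_2}{1+\sigma_{Z,2}^2}.
\]
For $\tau_Q>0$, university $j\in\{a,b\}$ therefore admits exactly the rectangle $\{A_1\ge \tau_{A_1,j},\,A_2\ge\tau_{A_2,j}\}$, with
\[
\tau_{A_1,j}=\frac{\tau_{Q,j}(1+\sigma_{Z,1}^2)}{\beta_j}
\quad\text{and}\quad
\tau_{A_2,j}=\frac{\tau_{Q,j}(1+\sigma_{Z,2}^2)}{1-\beta_j}.
\]
The pair $(t,A_1)$ is independent of $(s,A_2)$. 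So on any product region $[x_1,\infty)\times[x_2,\infty)$, the mean type is a function of $x_1$ alone, namely $m_t(x_1)=\frac{1}{\sqrt{1+\sigma_{Z,1}^2}}H\big(x_1/\sqrt{1+\sigma_{Z,1}^2}\big)$ by the same computation as in Proposition~\ref{prop:meantypesoft}. Likewise the mean soft skill $m_s(x_2)$ depends only on $x_2$. Both are strictly increasing by Corollary~\ref{corr:smsu_tau}.

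The first step is a case split on the relative order of the cutoffs. For case (a), suppose WLOG $\tau_{A_1,a}\ge\tau_{A_1,b}$ and $\tau_{A_2,a}\ge\tau_{A_2,b}$, with at least one strict. Then $a$'s rectangle is contained in $b$'s, and the argument of Proposition~\ref{identmorsel} and Corollary~\ref{corr:seldom} carries over verbatim. The students admitted to both form exactly $a$'s rectangle, and $a$ enrolls a uniform fraction $q$ of them, so $a$'s mean type and mean soft skill equal $m_t(\tau_{A_1,a})$ and $m_s(\tau_{A_2,a})$. University $b$ enrolls a mixture. One part is a fraction $1-q$ of $a$'s rectangle. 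The other part is the L-shaped region $R_b\setminus R_a$, whose mean type is at most that of $R_b$ and strictly below that of $R_a$; I would show this by splitting the L-region into pieces whose type means are $m_t(\tau_{A_1,b})$ or a conditional mean on $[\tau_{A_1,b},\tau_{A_1,a})$, and the symmetric statement holds for soft skill. A convex combination therefore gives $b$ strictly smaller means in both coordinates.

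The second step handles case (b), where the rectangles cross. Comparing cutoffs gives $\tau_{A_1,a}/\tau_{A_1,b}=(\tau_{Q,a}/\tau_{Q,b})(\beta_b/\beta_a)$ and $\tau_{A_2,a}/\tau_{A_2,b}=(\tau_{Q,a}/\tau_{Q,b})\big((1-\beta_b)/(1-\beta_a)\big)$. If $\beta_a<\beta_b$, the second ratio is smaller than the first. So whenever the orders disagree, we must have $\tau_{A_1,a}>\tau_{A_1,b}$ and $\tau_{A_2,a}<\tau_{A_2,b}$: the low-$\beta$ university has the higher type cutoff and the lower soft-skill cutoff. This mirrors Proposition~\ref{paradox} and Theorem~\ref{thm:more_selective}.

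Now decompose university $a$'s enrollment into two disjoint pieces:
\[
R_a\cap R_b=[\tau_{A_1,a},\infty)\times[\tau_{A_2,b},\infty),
\qquad
R_a\setminus R_b=[\tau_{A_1,a},\infty)\times[\tau_{A_2,a},\tau_{A_2,b}).
\]
University $a$ enrolls all of the second piece and a fraction $q_a$ of the first. Both pieces have $A_1$-range $[\tau_{A_1,a},\infty)$, so by independence every student enrolled at $a$, whether from the overlap or exclusive, has mean type $m_t(\tau_{A_1,a})$. Symmetrically, $b$'s mean type is $m_t(\tau_{A_1,b})<m_t(\tau_{A_1,a})$, and similarly $b$'s mean soft skill is $m_s(\tau_{A_2,b})>m_s(\tau_{A_2,a})$. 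This is exactly (b), and it holds for any $q$.

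This product-structure observation is the key simplification. The only mildly delicate step is the strict inequalities in case (a): I need the conditional mean of $t$ given $A_1\in[x,y)$ to lie strictly below $m_t(y)$ for $x<y$. This follows from monotonicity of $\mathbb{E}[t\,|\,A_1]$ in $A_1$. I would also note the degenerate equal-cutoffs case, where the universities have identical averages, and fold it into (a).
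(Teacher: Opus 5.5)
Your route is essentially the paper's: the same cutoffs from Proposition~\ref{paradox}, the same split into nested versus crossing rectangles, the random-sample argument in the nested case, and the cutoff-ratio comparison. The paper runs that comparison in the converse direction (crossing order implies $\beta_a<\beta_b$), which is equivalent. There is, however, a wrong step in the crossing case: two of the four averages you write down are incorrect.

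\textbf{The wrong step.} The product structure pins down exactly one average per university, not two.
\begin{itemize}
\item Everyone enrolled at $a$ has $A_1\ge\tau_{A_1,a}$, with $A_2$ restricted independently. So $a$'s mean type is exactly $m_t(\tau_{A_1,a})$.
\item Everyone enrolled at $b$ has $A_2\ge\tau_{A_2,b}$. So $b$'s mean soft skill is exactly $m_s(\tau_{A_2,b})$.
\end{itemize}
The ``symmetric'' claim that $b$'s mean type is $m_t(\tau_{A_1,b})$ does not follow. University $b$ enrolls only a fraction $1-q_a$ of the overlap $[\tau_{A_1,a},\infty)\times[\tau_{A_2,b},\infty)$, plus all of the strip $[\tau_{A_1,b},\tau_{A_1,a})\times[\tau_{A_2,b},\infty)$. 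Its mean type therefore equals $m_t(\tau_{A_1,b})$ only when $q_a=0$. For example, when $q_a=1$ it is $\mathbb{E}[t\mid \tau_{A_1,b}\le A_1<\tau_{A_1,a}]$, which is strictly smaller. Likewise, $a$'s mean soft skill equals $m_s(\tau_{A_2,a})$ only when $q_a=1$. Consequently your inequality $m_s(\tau_{A_2,b})>m_s(\tau_{A_2,a})$ compares $b$'s average with a number that is not $a$'s average.

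\textbf{The fix.} Use the paper's three-group argument, which needs only the strict inequality you already flagged for case (a).
\begin{itemize}
\item $b$'s mean type is a convex combination of $m_t(\tau_{A_1,a})$ (from the overlap) and $\mathbb{E}[t\mid \tau_{A_1,b}\le A_1<\tau_{A_1,a}]<m_t(\tau_{A_1,a})$ (from the strip). In a genuine crossing $\tau_{A_1,b}<\tau_{A_1,a}$ strictly, so the strip has positive weight. Hence $b$'s mean type is strictly below $m_t(\tau_{A_1,a})$, which is $a$'s mean type.
\item Symmetrically, $a$'s mean soft skill is a convex combination of $m_s(\tau_{A_2,b})$ and $\mathbb{E}[s\mid \tau_{A_2,a}\le A_2<\tau_{A_2,b}]$. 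It is therefore strictly below $m_s(\tau_{A_2,b})$, which is $b$'s mean soft skill.
\end{itemize}
This gives (b) for every $q$.

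\textbf{A smaller point in case (a).} ``At least one strict'' is not enough for strict inequalities in both coordinates. If, say, $\tau_{A_1,a}=\tau_{A_1,b}$, the L-shaped region has type mean exactly $m_t(\tau_{A_1,a})$, and the two mean types coincide. Strictness in both coordinates needs both cutoffs strictly ordered. The paper's proof glosses over such ties as well.
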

\noindent
The proof is provided in Appendix~\ref{app:multiparadox}. The insights for several universities fundamentally do not change: under competition, it is still the case that using two signals independently (one one type, one of soft skill) can lead either university to make the wrong decisions. Instead, the university should combine both signals and make joint decisions on both to recover the student quality they are expecting.

\section*{Acknowledgement}

Bentley, Sen, and Ziani were funded by NSF Awards CAREER-2336236 and IIS-2504990. This work was done in part while Ziani was visiting the Simons Institute for the Theory of Computing.

\bibliographystyle{plainnat}
\bibliography{arxiv/arxiv_ref.bib}

\appendix

\section{Proof of Claim~\ref{clm:adm_threshold}}\label{app:adm_threshold}

We only need to prove the following claim, that characterizes the expected type and soft skill of a student conditional on $A = a$.

\begin{claim}
    \label{condistri}
    Conditioned on admissions metric $A$, $\mathbb{E}[t|A = a]=\frac{\alpha a}{\sigma_{\alpha}^2}$ and $\mathbb{E}[s|A = a]=\frac{(1-\alpha) a}{\sigma_{\alpha}^2}$.
\end{claim}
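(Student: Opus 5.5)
The plan is to use the standard conditional-expectation formula for jointly Gaussian vectors. Under Assumption~\ref{aspt:soft_type_normal}, $t$, $s$ and $Z$ are independent standard normals. Hence $(t, s, A)$ is a linear image of the Gaussian vector $(t,s,Z)$ and is therefore jointly Gaussian with mean zero. For a zero-mean jointly Gaussian pair $(X, A)$ with $\mathrm{Var}(A) > 0$, we have
\[
\mathbb{E}[X \mid A = a] = \frac{\mathrm{Cov}(X, A)}{\mathrm{Var}(A)}\, a .
\]
So the whole proof reduces to computing one variance and two covariances.

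First, compute the variance of $A$. Because $t$, $s$ and $Z$ are independent with unit variance,
\[
\mathrm{Var}(A) = \alpha^2 + (1-\alpha)^2 + \sigma_Z^2 = \sigma_\alpha^2 ,
\]
which is strictly positive since $\alpha \in (0,1)$. Second, compute the covariances, again using independence and unit variances:
\[
\mathrm{Cov}(t, A) = \alpha\, \mathrm{Var}(t) = \alpha, \qquad \mathrm{Cov}(s, A) = (1-\alpha)\, \mathrm{Var}(s) = 1-\alpha .
\]
Substituting these into the formula gives $\mathbb{E}[t \mid A=a] = \alpha a/\sigma_\alpha^2$ and $\mathbb{E}[s\mid A=a] = (1-\alpha)a/\sigma_\alpha^2$, as claimed.

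The conditional variances stated in the main text come from the same formula, $\mathrm{Var}(X\mid A) = \mathrm{Var}(X) - \mathrm{Cov}(X,A)^2/\mathrm{Var}(A)$. For $t$ this gives $1 - \alpha^2/\sigma_\alpha^2 = \bigl((1-\alpha)^2+\sigma_Z^2\bigr)/\sigma_\alpha^2$, and the case of $s$ is symmetric.

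There is no genuine obstacle. The only step that needs justification is the Gaussian conditioning formula. It can be derived by writing $t = \frac{\alpha}{\sigma_\alpha^2}A + R$ with $R$ uncorrelated with $A$. Joint Gaussianity then makes $R$ independent of $A$, so $\mathbb{E}[t\mid A=a] = \frac{\alpha}{\sigma_\alpha^2}a + \mathbb{E}[R] = \frac{\alpha}{\sigma_\alpha^2}a$.

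To finish Claim~\ref{clm:adm_threshold}, apply linearity of conditional expectation:
\[
\mathbb{E}[Q_i \mid A=a] = \frac{\alpha\beta_i + (1-\alpha)(1-\beta_i)}{\sigma_\alpha^2}\, a .
\]
The coefficient of $a$ is strictly positive because $\alpha, \beta_i \in (0,1)$, so this expression is increasing in $a$. It therefore exceeds $\tau_{Q,i}$ exactly when $a \geq \tau_{A,i}$, with $\tau_{A,i}$ as given in Claim~\ref{clm:adm_threshold}.
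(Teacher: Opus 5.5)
Your proposal is correct and follows the paper's route: both apply the Gaussian conditioning formula and reduce the claim to computing $\mathrm{Var}(A)=\sigma_\alpha^2$, $\mathrm{Cov}(t,A)=\alpha$ and $\mathrm{Cov}(s,A)=1-\alpha$. Your extra remarks add no new idea but make explicit the joint-Gaussianity hypothesis that the paper's cited formula silently needs: $(t,s,A)$ is a linear image of $(t,s,Z)$, and $t=\frac{\alpha}{\sigma_\alpha^2}A+R$ with $R$ independent of $A$.
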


\begin{proof}
First, we will calculate $\mathbb{E}[t|A]$ and $\mathbb{E}[s|A]$.  In order to calculate this, we will use the following well-known result: If $X \sim \mathcal{N}(\mu_X, \sigma_X^2)$, $Y \sim \mathcal{N}(\mu_Y, \sigma_Y^2)$, and $Cov(X, Y) = \Sigma$, then $X~|~Y = y \sim \mathcal{N}\left( \mu_X + \frac{\Sigma}{\sigma_Y^2}(y-\mu_Y), \sigma_X^2 - \frac{\Sigma^2}{\sigma_Y^2} \right)$.
In our case, since $t \sim \mathcal{N}(0, 1)$,  $s \sim \mathcal{N}(0, 1)$,  $Z \sim \mathcal{N}(0, 1)$, and all three variables are mutually independent, we have that $A \sim \mathcal{N}\left(0, \sigma_{\alpha}^2 \right)$. Further, $\text{Cov}(t, A) = \text{Cov}(t, \alpha t + (1-\alpha)s + \sigma_Z Z) = \alpha \text{Var}(t) + (1-\alpha)\text{Cov}(t,s) + \sigma_Z \text{Cov}(t,Z) = \alpha$. Plugging everything in, we obtain:
\[
    t~\vert~(A = a) \sim \mathcal{N}\left(0 + \frac{\alpha}{\sigma_{\alpha}^2}(a-0), 1-\frac{\alpha^2+\sigma_Z^2}{\sigma_{\alpha}^2} \right) \iff t ~\vert~(A = a) \sim \mathcal{N}\left( \frac{\alpha a}{\sigma_{\alpha}^2}, \frac{(1-\alpha)^2+\sigma_Z^2}{\sigma_{\alpha}^2} \right).
\]
Using an identical argument, we can also derive the desired conditional distribution of $s$. 
\end{proof}

\noindent If the university's admission threshold is $\tau_A$, the average quality of a 'borderline' admitted student must be equal to $\tau_Q$, i.e.,  
\[
    \tau_Q = \mathbb{E}[\beta t + (1-\beta)s~|~A = \tau_A] = \beta \cdot  \mathbb{E}[t~|~A = \tau_A] + (1-\beta)\cdot \mathbb{E}[s~|~A = \tau_A] = \left(\frac{\alpha \beta + (1-\alpha)(1-\beta)}{\sigma_{\alpha}^2} \right)\cdot \tau_A,
\]
where the last step follows from Claim \ref{condistri}. Re-arranging, we obtain $\tau_A = \frac{\tau_Q \sigma_{\alpha}^2}{\alpha \beta + (1-\alpha)(1-\beta)}$. This concludes the proof.

\section{Proofs for Section \ref{sec:smsu}}

\subsection{Proof of \Cref{prop:meantypesoft}}\label{app:meantypesoft}

    First, note that, by \Cref{condistri}, conditional on admissions metric $A=a$, the expected value of $t$ is $\frac{a\alpha}{\sigma_{\alpha}^2}$, and the expected value of $s$ is $\frac{a(1-\alpha)}{\sigma_{\alpha}^2}$.  Thus, if $E=\mathbb{E}[A|A\geq \tau_A]$, then $\mathbb{E}[t|A\geq \tau_A]=\frac{E\alpha}{\sigma_{\alpha}^2}$, and $\mathbb{E}[s|A\geq \tau_A]=\frac{E(1-\alpha)}{\sigma_{\alpha}^2}$.
    \medskip

    $A$ is normally distributed with mean $0$ and standard deviation $\sigma_{\alpha}$.  By a standard result, this means that $$E=\mathbb{E}[A|A\geq\tau_A]=\sigma_AH\left(\frac{\tau_A}{\sigma_A}\right).$$
    
    Thus, $\mathbb{E}[t|A\geq\tau_A]=\sigma_AH\left(\frac{\tau_A}{\sigma_A}\right)\left(\frac{\alpha}{\sigma_{\alpha}^2}\right)=\frac{\alpha}{\sigma_{\alpha}}\cdot H\left(\frac{\tau_A}{\sigma_A}\right)$.  Similarly, $\mathbb{E}[s|A\geq\tau_A]=\sigma_AH\left(\frac{\tau_A}{\sigma_A}\right)\left(\frac{1-\alpha}{\sigma_{\alpha}^2}\right)=\frac{1-\alpha}{\sigma_{\alpha}}\cdot H\left(\frac{\tau_A}{\sigma_A}\right)$.
    \medskip
    
    By Linearity of Expectation, $\mathbb{E}[Q|A\geq\tau_A]=\beta\cdot \mathbb{E}[t|A\geq\tau_A]+(1-\beta)\cdot\mathbb{E}[s|A\geq\tau_A]=\frac{\beta \alpha + (1-\beta) (1-\alpha) }{\sigma_\alpha} \cdot H(\tau_A/\sigma_{\alpha})$.

\subsection{Proof of \Cref{prop:noiselim}}\label{app:noiselim}

We will start by proving a useful monotonicity lemma.

\begin{claim}
    Suppose $x \geq 0$, and let $X$ be a random normal variable with mean $0$ and standard deviation $\sigma$.  Then, as $\sigma \rightarrow 0$, $\mathbb{E}[X\sim \mathcal{N}(0,\sigma^2)|X\geq x]$ monotonically decreases towards $x$.
\label{limbell}
\end{claim}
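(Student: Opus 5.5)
We reduce the claim to a statement about the standard normal hazard rate. Write $X = \sigma Y$ with $Y \sim \mathcal{N}(0,1)$. Then
\[
\mathbb{E}[X \mid X \geq x] = \sigma\, \mathbb{E}[Y \mid Y \geq x/\sigma] = \sigma H(x/\sigma),
\]
by the same standard identity used in the proof of \Cref{prop:meantypesoft}. Set $u = x/\sigma \geq 0$, so that $\sigma = x/u$ when $x>0$. The quantity becomes
\[
g(u) \triangleq \frac{x}{u} H(u) = x \cdot \frac{H(u)}{u}.
\]
As $\sigma \downarrow 0$ we have $u \uparrow \infty$. It therefore suffices to show two things: that $H(u)/u$ is strictly decreasing on $(0,\infty)$, and that $H(u)/u \to 1$ as $u \to \infty$. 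The degenerate case $x = 0$ is handled separately. There $\mathbb{E}[X \mid X \geq 0] = \sigma\sqrt{2/\pi}$, which decreases monotonically to $0 = x$ as $\sigma \to 0$.

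For the limit, we use the Mills ratio asymptotics
\[
1 - \Phi(u) = \frac{\phi(u)}{u}\bigl(1 + O(u^{-2})\bigr).
\]
This gives $H(u) = u + 1/u + O(u^{-3})$, hence $H(u)/u \to 1$ and $g(u) \to x$. We also note the lower bound $H(u) > u$, which simply says that a conditional mean above a threshold exceeds the threshold. This confirms that $g$ approaches $x$ from above.

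For monotonicity, we differentiate:
\[
\frac{d}{du}\frac{H(u)}{u} = \frac{uH'(u) - H(u)}{u^2}.
\]
Using the identity $H'(u) = H(u)\bigl(H(u) - u\bigr)$, the numerator equals
\[
H(u)\bigl(uH(u) - u^2 - 1\bigr).
\]
So the claim reduces to the inequality
\[
H(u) < u + \frac{1}{u}, \qquad \text{equivalently} \qquad 1 - \Phi(u) > \frac{u\,\phi(u)}{u^2 + 1} \quad \text{for } u > 0.
\]
This is the classical Gordon lower bound on the Mills ratio, and it is the main obstacle in the argument. We prove it by setting
\[
k(u) = 1 - \Phi(u) - \frac{u\,\phi(u)}{1 + u^2}.
\]
A direct computation, using $\phi' = -u\phi$, gives
\[
k'(u) = -\frac{2\phi(u)}{(1+u^2)^2} < 0.
\]
Since $k(u) \to 0$ as $u \to \infty$ and $k$ is strictly decreasing, we get $k > 0$ on $(0,\infty)$. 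The main care needed is in carrying out this derivative computation correctly.

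Combining the two steps, $g$ decreases monotonically in $u$, hence decreases as $\sigma \to 0$, and converges to $x$. This proves the claim.
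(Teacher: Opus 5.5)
Your proof is correct and follows the paper's argument: the same identity $\mathbb{E}[X\mid X\geq x]=\sigma H(x/\sigma)$, the same separate treatment of $x=0$, and the same rewriting as $x\cdot H(u)/u$ with $u=x/\sigma\uparrow\infty$. The only difference is that you prove the fact the paper cites as well known, that $H(u)/u$ decreases to $1$. You do this via $H'=H(H-u)$ and the Gordon bound $1-\Phi(u)>u\phi(u)/(1+u^2)$, and your computation $k'(u)=-2\phi(u)/(1+u^2)^2$ checks out. One small slip: the sharper expansion $u+1/u+O(u^{-3})$ does not follow from the $1+O(u^{-2})$ form you wrote, but you don't need it, since $u<H(u)<u+1/u$ already gives the limit.
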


\begin{proof} Note that
\[
\mathbb{E}[X \mid X \geq x]
=
\sigma H\!\left(\frac{x}{\sigma}\right),
\]
where, as before,
$
H(z)=\frac{\phi(z)}{1-\Phi(z)}
$
is the Gaussian hazard rate. If $x=0$, this equals
$
\sigma H(0)=\sigma\sqrt{\frac{2}{\pi}},
$
which clearly monotonically decreases to $0$ as $\sigma\to 0$. Now suppose $x>0$. Rewriting the expression gives
\[
\mathbb{E}[X \mid X \geq x]
=
x \cdot \frac{H(x/\sigma)}{x/\sigma}.
\]
It is well-known that the function $H(z)/z$ is decreasing in $z > 0$ and converges to $1$ as $z\to\infty$. Since
$x/\sigma$ increases to $+\infty$ as $\sigma\to 0$, the result follows.
\end{proof}
As in the proof \Cref{prop:meantypesoft}, if $E=\mathbb{E}[A|A\geq\tau_A]$, and the average type is $\frac{E\alpha}{\sigma_{\alpha}^2}$.  Thus, we can set the average type as $\mathbb{E}\left[\frac{A\alpha}{\sigma_{\alpha}^2}\middle|\frac{A\alpha}{\sigma_{\alpha}^2}\geq \frac{\tau_A\alpha}{\sigma_{\alpha}^2}\right]$.  Note that, by \Cref{clm:adm_threshold}, $\tau_A = \frac{\tau_Q \sigma_{\alpha}^2 }{\alpha \beta + (1-\alpha)(1-\beta)}$, so this can be rewritten as $$\mathbb{E}\left[\frac{A\alpha}{\sigma_{\alpha}^2}\middle|\frac{A\alpha}{\sigma_{\alpha}^2}\geq \frac{\tau_Q \alpha}{\alpha \beta + (1-\alpha)(1-\beta)}\right].$$

Letting $X=\frac{A\alpha}{\sigma_{\alpha}^2}$ and $x=\frac{\tau_Q \alpha}{\alpha \beta + (1-\alpha)(1-\beta)}$, this is equivalent to $\mathbb{E}[X|X\geq x]$.  Since $A$ has mean $0$ and standard deviation $\sigma_{\alpha}$, $X$ has mean $0$ and standard deviation $\alpha/\sigma_{\alpha}$.  As $\sigma_Z$ increases, $\sigma_{\alpha}$ monotonically increases to $\infty$, so the standard deviation of $X$ monotonically decreases to $0$.  This means that, by \Cref{limbell}, the average type decreases to $x=\frac{\tau_Q \alpha}{\alpha \beta + (1-\alpha)(1-\beta)}$. 
\medskip

The proof for soft skills is nearly identical.
\subsection{Proof of \Cref{prop:alphamon}}\label{app:alphamon}

We will prove this Proposition using a quick claim.

\begin{claim}
    $\frac{\alpha}{\sigma_{\alpha}}=\frac{\alpha}{\sqrt{\alpha^2+(1-\alpha)^2+\sigma_Z^2}}$ is monotonically increasing in $\alpha$ over the interval $[0,1]$.
    \label{cosmon}
\end{claim}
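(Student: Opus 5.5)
The plan is to show directly that $f(\alpha) = \alpha/\sigma_\alpha$, with $\sigma_\alpha^2 = \alpha^2 + (1-\alpha)^2 + \sigma_Z^2$, has a nonnegative derivative on $[0,1]$. Since $\sigma_\alpha > 0$ on the whole interval (either $\sigma_Z>0$, or $\alpha^2+(1-\alpha)^2 \geq 1/2$), $f$ is smooth there and differentiation is legitimate. An equivalent and slightly cleaner route is to study $f(\alpha)^2 = \alpha^2/\sigma_\alpha^2$, because $f \geq 0$ on $[0,1]$ and squaring preserves monotonicity there. I will follow the direct derivative.

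The first step is the quotient rule. Using $\frac{d}{d\alpha}\sigma_\alpha = \frac{\alpha - (1-\alpha)}{\sigma_\alpha} = \frac{2\alpha - 1}{\sigma_\alpha}$, we get
\[
f'(\alpha) = \frac{\sigma_\alpha - \alpha\cdot\frac{2\alpha-1}{\sigma_\alpha}}{\sigma_\alpha^2} = \frac{\sigma_\alpha^2 - 2\alpha^2 + \alpha}{\sigma_\alpha^3}.
\]
The second step is to simplify the numerator. Expanding $\sigma_\alpha^2 = 2\alpha^2 - 2\alpha + 1 + \sigma_Z^2$ gives $\sigma_\alpha^2 - 2\alpha^2 + \alpha = 1 - \alpha + \sigma_Z^2$. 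Hence
\[
f'(\alpha) = \frac{1-\alpha+\sigma_Z^2}{\sigma_\alpha^3},
\]
which is strictly positive for $\alpha \in [0,1)$ and nonnegative at $\alpha = 1$ (strictly positive there as well whenever $\sigma_Z > 0$). Therefore $f$ is strictly increasing on $[0,1]$.

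There is no real obstacle; the only care needed is to verify $\sigma_\alpha>0$ so the derivative exists, and to get the algebra of the numerator right. The same computation with $\alpha$ replaced by $1-\alpha$ in the numerator of $f$ shows that $(1-\alpha)/\sigma_\alpha$ is decreasing. Combining this with Proposition~\ref{prop:meantypesoft} and the behaviour of the hazard-rate factor then yields Proposition~\ref{prop:alphamon}.
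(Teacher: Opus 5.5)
Your argument is correct. Since $\sigma_\alpha^2 = 2\alpha^2-2\alpha+1+\sigma_Z^2$, one does get $f'(\alpha) = (1-\alpha+\sigma_Z^2)/\sigma_\alpha^3$, which is positive on $[0,1)$, so $f$ is increasing on $[0,1]$.

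The paper takes the squaring route that you mention but do not use. It writes $\alpha^2/(\alpha^2+(1-\alpha)^2+\sigma_Z^2)$ as $x/(x+y+z)$, with $x=\alpha^2$ increasing, $y=(1-\alpha)^2$ decreasing and $z=\sigma_Z^2$ constant on $[0,1]$. That settles monotonicity in one line with no calculus, and it is the shorter argument. Your calculation costs a few lines of algebra but gives an explicit derivative, which makes strictness and the rate transparent. The same computation gives $\frac{d}{d\alpha}\bigl[(1-\alpha)/\sigma_\alpha\bigr] = -(\alpha+\sigma_Z^2)/\sigma_\alpha^3$ for the companion statement.

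Your closing remark goes beyond this claim, and it is not a simple product argument. In $\frac{\alpha}{\sigma_\alpha}H(\tau_A/\sigma_\alpha)$ the argument of $H$ is $\tau_Q\sigma_\alpha/(\alpha\beta+(1-\alpha)(1-\beta))$. This can decrease in $\alpha$, for instance for $\alpha<\beta$ when $\sigma_Z=0$, so the hazard factor can move against $\alpha/\sigma_\alpha$. The paper's proof of Proposition~\ref{prop:alphamon} handles this differently. It writes the average type as $\mathbb{E}[X\mid X\geq x]$ with $X\sim\mathcal{N}(0,\alpha^2/\sigma_\alpha^2)$ and $x=\tau_Q\alpha/(\alpha\beta+(1-\alpha)(1-\beta))$. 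It then uses that this expectation is increasing both in the cutoff $x$ and in the standard deviation $\alpha/\sigma_\alpha$, and both of these increase in $\alpha$.
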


\begin{proof}
    Because $\frac{\alpha}{\sigma_{\alpha}}$ is nonnegative, it suffices to show that its square, $\frac{\alpha^2}{\alpha^2+(1-\alpha)^2+\sigma_Z^2}$, is monotonically increasing.  This is easy to see, as this expression takes the form $\frac{x}{x+y+z}$, where $x$ is monotonically increasing, $y$ is monotonically decreasing, and $z$ is constant.
\end{proof}

As in the proof to \Cref{prop:noiselim}, the average type is $$\mathbb{E}\left[\frac{A\alpha}{\sigma_{\alpha}^2}\middle|\frac{A\alpha}{\sigma_{\alpha}^2}\geq \frac{\tau_Q \alpha}{\alpha \beta + (1-\alpha)(1-\beta)}\right].$$  As before, can be rewritten as $\mathbb{E}[X|X\geq x]$ letting $X=\frac{A\alpha}{\sigma_{\alpha}^2}$ and $x=\frac{\tau_Q \alpha}{\alpha \beta + (1-\alpha)(1-\beta)}$.  We will update $\mathbb{E}[X|X\geq x]$, substituting $\alpha$ for $\alpha+\epsilon$ for small $\epsilon$, in two steps, showing that neither step reduces the average type.

\begin{itemize}
    \item First, update $x$, replacing $\alpha$ with $\alpha+\epsilon$, while keeping $X$ constant.  Here, $x$ becomes $\frac{\tau_Q (\alpha+\epsilon)}{(\alpha+\epsilon)\beta_1+(1-(\alpha+\epsilon))(1-\beta_1)}$.  The numerator is linear in $\alpha$ (and positive when $\tau_Q>0$), while the denominator is a term that is linear in $\alpha$ plus a term decreasing in $\alpha$, so grows less than linearly.  Thus, $x$ increases, increasing the cutoff in the admissions metric and thus increasing the type.
    \item Second, update $X$.  Note that $X$ has mean $0$ and standard deviation $\alpha/\sigma_{\alpha}$, so with this update the standard deviation monotonically increases by \Cref{cosmon}.  Thus, the expectation $\mathbb{E}[X|X\geq x]$ increases by \Cref{limbell}.
\end{itemize}

The assertion for average soft skill follows from considering that, by symmetry, the same proof gives that soft skill monotonically increases in $1-\alpha$.

\subsection{Proof of \Cref{alphamax}}\label{app:alphamax}

Note that the average quality of admitted students can be given as $\mathbb{E}[Q|A\geq\tau_A]=\mathbb{E}\left[Q\middle|A\geq\frac{\tau_Q \sigma_{\alpha}^2 }{\alpha \beta + (1-\alpha)(1-\beta)}\right]$.  \Cref{condistri} and Linearity of Expectation give that $\mathbb{E}[Q|A = a]=a\cdot\left(\frac{\alpha \beta + (1-\alpha)(1-\beta)}{\sigma_{\alpha}^2} \right)$, so we then get that $\mathbb{E}[Q|A\geq\tau_A]=\left(\frac{\alpha \beta + (1-\alpha)(1-\beta)}{\sigma_{\alpha}^2} \right)\cdot\mathbb{E}\left[A\middle|A\geq\frac{\tau_Q \sigma_{\alpha}^2 }{\alpha \beta + (1-\alpha)(1-\beta)}\right]=\mathbb{E}\left[\frac{A(\alpha \beta + (1-\alpha)(1-\beta))}{\sigma_{\alpha}^2} \middle|\frac{A(\alpha \beta + (1-\alpha)(1-\beta))}{\sigma_{\alpha}^2}\geq\tau_Q\right]$.  Note that $\tau_Q$ is a constant, and $\frac{A(\alpha \beta + (1-\alpha)(1-\beta))}{\sigma_{\alpha}^2}$ is a Gaussian random variable with mean $0$ and standard deviation $\frac{\alpha \beta + (1-\alpha)(1-\beta)}{\sigma_{\alpha}}$.  Thus, by \Cref{limbell}, the average quality (as well as the number of students attending) is highest when this standard deviation is largest and lowest when this standard deviation is smallest.  Thus, we will analyze the derivative of this standard deviation.
\medskip

The derivative of the standard deviation can be expressed as $\frac{\sigma_{\alpha}\frac{d}{d\alpha}(\alpha\beta+(1-\alpha)(1-\beta))-(\alpha\beta+(1-\alpha)(1-\beta))\frac{d}{d\alpha}(\sigma_{\alpha})}{\sigma_{\alpha}^2}$.  The denominator is nonnegative, so it suffices to consider when the numerator is positive or negative.  Because $\sigma_{Z}=0,$ we have that $\sigma_{\alpha}=\sqrt{\alpha^2+(1-\alpha)^2}.$ Note that the numerator simplifies to

$$(2\beta-1)\sqrt{\alpha^2+(1-\alpha)^2}-\frac{(\alpha\beta+(1-\alpha)(1-\beta))(2\alpha-1)}{\sqrt{\alpha^2+(1-\alpha)^2}}.$$

We can multiply by $\sqrt{\alpha^2+(1-\alpha)^2}$, which is always positive, to get $$(2\beta-1)(\alpha^2+(1-\alpha)^2)-(\alpha\beta+(1-\alpha)(1-\beta))(2\alpha-1).$$  This in turn simplifies to $$(2\beta-1)(2\alpha^2-2\alpha+1)-(2\alpha\beta-\alpha-\beta+1)(2\alpha-1).$$  Expanding this out gives $(4\alpha^2\beta-4\alpha\beta+2\beta-2\alpha^2+2\alpha-1)-(4\alpha^2\beta-2\alpha\beta-2\alpha^2+2\alpha-2\alpha\beta+\alpha+\beta-1)=\beta-\alpha$. This derivative gives the desired conclusion.

\subsection{Proof of \Cref{prop:opdir}}\label{app:opdir}

The proof is very similar to the proof of \Cref{alphamax}.  The university will accept a student with $A=a$ iff $\mathbb{E}[Q|A=a]\geq\tau_Q$.  In order to ensure the right-hand side is constant in $\beta$, we can rewrite this as $\mathbb{E}[Q/\sigma_{\beta}|A=a]\geq\tau_Q/\sigma_{\beta}$.  Letting $E=\mathbb{E}[Q/\sigma_{\beta}|A=a]$, this means the admissions rule is thus $E\geq\tau_Q/\sigma_{\beta}$.
\medskip

Note that the average quality $\mathbb{E}[Q/\sigma_{\beta}|E\geq\tau_Q/\sigma_{\beta}]=\mathbb{E}[E|E\geq\tau_Q/\sigma_{\beta}]$.  $E$, as a linear function of $A$, is a random normal variable with mean zero, so by \Cref{limbell}, the smaller the variance of $E$, the lower the average renormalized quality.  As shown by the proof to \Cref{alphamax} (and dividing by $\sigma_\beta$ due to normalization), $E=A\cdot\left(\frac{\alpha \beta + (1-\alpha)(1-\beta)}{\sigma_{\alpha}^2\sigma_{\beta}} \right)$.  Since $A$ has a standard deviation of $\sigma_{\alpha}$, this means that the standard deviation of $E$ is $\frac{\alpha \beta + (1-\alpha)(1-\beta)}{\sigma_{\alpha}\sigma_{\beta}}$.  $\sigma_{\alpha}$ is a constant, and so it suffices to consider the behavior of the expression $\frac{\alpha \beta + (1-\alpha)(1-\beta)}{\sigma_{\beta}}$.  By similar logic to the proof of \Cref{alphamax}, this expression monotonically increases as $\beta$ increases from $0$ to $\alpha$, and monotonically decreases as $\beta$ increases from $\alpha$ to $1$. 

\subsection{Proof of \Cref{thm:betacurve}}\label{app:betacurve}

By Corollary \ref{monohaz}, it suffices to consider how
$\tau_A=\frac{\tau_Q \sigma_{\alpha}^2}{\alpha\beta+(1-\alpha)(1-\beta)}=\frac{\sigma_{\beta}(\tau_Q/\sigma_{\beta})\sigma_{\alpha}^2}{\alpha\beta+(1-\alpha)(1-\beta)}$ varies as a function of $\beta$.  Thus, we consider the derivative of $\frac{\sigma_{\beta}(\tau_Q/\sigma_{\beta})\sigma_{\alpha}^2}{\alpha\beta+(1-\alpha)(1-\beta)}$ with respect to $\beta$.  Because $\sigma_{\alpha}^2$ and $\tau_Q/\sigma_{\beta}$ are positive constants, it suffices to consider the derivative of $\frac{\sigma_{\beta}}{\alpha\beta+(1-\alpha)(1-\beta)}$.
By similar logic to the proof of \Cref{alphamax}, the derivative is zero, giving a local minimum, when $\beta=\alpha$.

\section{Proofs for Section \ref{sec:smmu}}

\subsection{Proof of \Cref{thm:more_selective}}\label{app:more_selective}

If $\tau_{Q,1}, \tau_{Q,2} > 0$, then we have by \Cref{clm:adm_threshold} that University 1 is more selective than University 2 iff $\frac{\tau_{Q,1} \sigma_{\alpha}^2}{\alpha \beta_1 + (1-\alpha)(1-\beta_1)}> \frac{\tau_{Q,2} \sigma_{\alpha}^2}{\alpha \beta_2 + (1-\alpha)(1-\beta_2)}$.  Because all constants are nonnegative and $\tau_{Q,i}$ is strictly positive, we can rearrange factors to get that University 1 is more selective iff $$\frac{\tau_{Q,1}}{\tau_{Q,2}}\cdot\frac{\alpha\beta_2+(1-\alpha)(1-\beta_2)}{\alpha\beta_1+(1-\alpha)(1-\beta_1)}>1.$$  Without loss of generality, $\beta_1>\beta_2$, so, as $\alpha$ increases from $0$ to $1$, $\frac{\alpha\beta_2+(1-\alpha)(1-\beta_2)}{\alpha\beta_1+(1-\alpha)(1-\beta_1)}$ monotonically decreases from $\frac{1-\beta_2}{1-\beta_1}$ to $\beta_2/\beta_1$.  This monotonic decrease means that, if there is any cutoff $\alpha_{th}$ at which which university is more selective changes, then University 1 is more selective whenever $0\leq \alpha<\alpha_{th}$, and University 2 is more selective whenever $\alpha_{th}<\alpha\leq 1$.

\begin{itemize}
    \item If $(\tau_{Q,1}/\tau_{Q,2})(\beta_2/\beta_1) >1\Leftrightarrow \tau_{Q,1}/\tau_{Q,2}>\beta_1/\beta_2,$ then the inequality is satisfied even at $\alpha=1$, so University 1 is more selective regardless of $\alpha$.
    \item Similarly, if $\frac{\tau_{Q,1}}{\tau_{Q,2}}\cdot\frac{1-\beta_2}{1-\beta_1}<1\Leftrightarrow\frac{\tau_{Q,1}}{\tau_{Q,2}}<\frac{1-\beta_1}{1-\beta_2}$, then the inequality is not satisfied even at $\alpha=0$, so University 2 is more selective regardless of $\alpha$.
    \item If $\frac{1-\beta_1}{1-\beta_2}\leq\frac{\tau_{Q,1}}{\tau_{Q,2}}\leq\frac{\beta_1}{\beta_2},$ then it suffices to calculate $\alpha_{th}$.  This occurs when 
    \begin{align*}
        &\frac{\tau_{Q,1}}{\tau_{Q,2}}\cdot\frac{\alpha\beta_2+(1-\alpha)(1-\beta_2)}{\alpha\beta_1+(1-\alpha)(1-\beta_1)} = 1
        \\
        &\Leftrightarrow \tau_{Q,1}(\alpha\beta_2+(1-\alpha)(1-\beta_2)) = \tau_{Q,2}(\alpha\beta_1+(1-\alpha)(1-\beta_1))
        \\
        &\Leftrightarrow 2\tau_{Q,1}\alpha\beta_2+\tau_{Q,1}-\tau_{Q,1}\beta_2-\tau_{Q,1}\alpha =2\tau_{Q,2}\alpha\beta_1+\tau_{Q,2}-\tau_{Q,2}\beta_1-\tau_{Q,2}\alpha
        \\
        &\Leftrightarrow \alpha = \frac{\tau_{Q,2}-\tau_{Q,1}+\tau_{Q,1}\beta_2-\tau_{Q,2}\beta_1}{\tau_{Q,2}-\tau_{Q,1}+2\tau_{Q,1}\beta_2-2\tau_{Q,2}\beta_1}.
    \end{align*}
\end{itemize}

\subsection{Proof of \Cref{identmorsel}}\label{app:identmorsel}

For the more selective university, this follows because the students attending the more selective university comprise a random sample of admits who each choose to attend with a constant probability $q$.
\medskip

For the less selective university, the admits are divided into two groups: the high-scoring group who are admitted to both universities, and the moderate-scoring group admitted to only one university.  The first group has a higher average type and soft skill, so when the university gets all students in the moderate-scoring group and only a proportion $1-q$ of the students in the high-scoring group, it has a lower average type and soft skill than the entire group of admits (all of whom would attend in the single university case).

\subsection{Proof of \Cref{kinv}}\label{app:kinv}

    Given University 1 having parameters $\beta_1$ and $\tau_{Q,1}$ and University 2 having parameters $\beta_2$ and $\tau_{Q,2}$, their admissions metric thresholds, per \Cref{clm:adm_threshold}, will be $\frac{\tau_{Q,1} \sigma_{\alpha}^2}{\alpha\beta_1+(1-\alpha)(1-\beta_1)}$ and $\frac{\tau_{Q,2} \sigma_{\alpha}^2}{\alpha\beta_2+(1-\alpha)(1-\beta_2)}$, respectively.  The only part of these expressions that depends on $\sigma_Z$ is $\sigma_{\alpha}$, so the ratio between the two thresholds is constant.

\subsection{Proof of \Cref{monotau}}\label{app:monotau}

Note that, as $\tau_Q$ increases by $\epsilon$, the only change in the student body is that the students with the lowest admissions metrics who did attend no longer attend.  Because expected type and soft skill are linear in admissions metric, as given by \Cref{condistri}, these students, on average, have the lowest type and soft skill.  The only exception is when a university changes from being less selective to more selective, in which case there is an instantaneous change in the number of students attending who are above both thresholds.  This change cannot reduce the average type or soft skill.

\subsection{Proof of \Cref{thm:pareto}}\label{app:pareto}

By \Cref{monotau}, average type and soft skill both increase if $\tau_{A,1}$ increases from $\tau_{A,2}-\epsilon$ to $\tau_{A,2}+\epsilon$.  The number of students attending is always continuous except at $\tau_{A,2}$, where there is a discontinuity in the proportion of students with admissions metrics greater than $\tau_{A,2}$ who attend University $1$, from $1-q$ at $\tau_{A,2}-\epsilon$ to $q$ at $\tau_{A,2}+\epsilon$.  If $q>0.5$, this leads to an increase in the number of students.  Due to this discontinuity in the number of students, and the average quality and number of students otherwise being continuous, the desired conclusion holds for any sufficiently small $\epsilon>0$.

\subsection{Proof of \Cref{prop:monorange}}\label{app:monorange}

First, note that, as follows from \Cref{corr:seldom}, the average type of the students accepted to both universities is precisely the maximum out of the average types accepted at each university.  Each average is monotonically increasing in $\alpha$ per \Cref{prop:alphamon}, so thus the average type is the maximum of two monotonically increasing functions.  This is clearly monotonically increasing.
\medskip

For the moderate-scoring group only admitted to one university, the result follows by similar logic to the proof of \Cref{prop:alphamon}, noting that the average type of this group is, similarly to \Cref{prop:alphamon},  $\mathbb{E}[X|x_2\leq X\leq x_1]$, letting $X=\frac{A\alpha}{\sigma_{\alpha}^2}$ and $x_i=\frac{\tau_{Q,i} \alpha}{\alpha \beta_i + (1-\alpha)(1-\beta_i)}$.  (We are assuming WLOG that University 1 is more selective.)
\medskip

We now show the following useful result: 

\begin{claim}
   Suppose $b>a\geq 0$, and let $Y_1$ be a random normal variable with mean $0$ and standard deviation $\sigma_1$, and let $Y_2$ be a random normal variable with mean $0$ and standard deviation $\sigma_2$.  Then, if $\sigma_2\geq\sigma_1$, $\mathbb{E}[Y_2|a\leq Y_2\leq b]\geq\mathbb{E}[Y_1|a\leq Y_1\leq b]$.
   \label{widened}
\end{claim}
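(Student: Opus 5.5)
Since $Y_i = \sigma_i W$ with $W\sim\mathcal{N}(0,1)$, the conditional law of $Y_i$ given $a\le Y_i\le b$ has density on $[a,b]$ proportional to $\exp(-y^2/(2\sigma_i^2))$. The plan is to compare these two truncated densities directly, via a likelihood-ratio (monotone likelihood ratio) argument, rather than through hazard-rate formulas. Write $f_i(y) = \exp(-y^2/(2\sigma_i^2))\mathbbm{1}\{a\le y\le b\}/C_i$, where $C_i$ is the normalizing constant. Both are well-defined probability densities on $[a,b]$ because $b>a$.

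The key step is to show that the ratio
\[
\frac{f_2(y)}{f_1(y)} \;\propto\; \exp\!\left(\frac{y^2}{2}\left(\frac{1}{\sigma_1^2}-\frac{1}{\sigma_2^2}\right)\right)
\]
is nondecreasing in $y$ on $[a,b]$. Since $\sigma_2\ge\sigma_1$, the coefficient $\frac{1}{\sigma_1^2}-\frac{1}{\sigma_2^2}$ is nonnegative. Since $a\ge 0$, every $y\in[a,b]$ satisfies $y\ge 0$, so $y^2$ is increasing there. This is exactly where the hypothesis $a\ge0$ enters: for an interval straddling $0$, the ratio would not be monotone.

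Next I invoke the standard fact that a monotone likelihood ratio implies first-order stochastic dominance, which in turn gives an ordering of means. To keep the argument self-contained, I would give a short direct proof. Let $r(y)=f_2(y)/f_1(y)$ be nondecreasing with $\int f_1 r = 1$. Then there is a point $c$ with $r\le 1$ on $[a,c)$ and $r\ge 1$ on $(c,b]$. It follows that
\[
\int (y-c)\bigl(f_2(y)-f_1(y)\bigr)\,dy=\int (y-c)(r(y)-1)f_1(y)\,dy\ge 0,
\]
because the integrand is pointwise nonnegative. Since $\int c\,(f_2-f_1)=0$, this yields $\mathbb{E}[Y_2\mid a\le Y_2\le b]\ge\mathbb{E}[Y_1\mid a\le Y_1\le b]$.

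The main obstacle is getting the crossing-point argument right, namely the existence of $c$ and the handling of the degenerate case $\sigma_1=\sigma_2$, where equality is immediate. An alternative I would keep in reserve is to differentiate $\mathbb{E}[\sigma W\mid a/\sigma\le W\le b/\sigma]$ in $\sigma$. Computing that derivative shows it equals a positive multiple of the conditional covariance of $Y$ and $Y^2$ on $[a,b]$, which is nonnegative since both are increasing functions of $Y$ when $Y\ge0$. This second route proves the same claim by a Chebyshev-type correlation inequality.
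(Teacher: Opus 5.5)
Your proposal is correct and follows the paper's own argument: the paper likewise shows that the density ratio $f_1/f_2 \propto e^{-y^2(\sigma_2^2-\sigma_1^2)/(2\sigma_1^2\sigma_2^2)}$ is monotone on $[a,b]$, and then moves probability mass to the right of the point where the two truncated densities cross. Your inequality $\int (y-c)(r(y)-1)f_1(y)\,dy \ge 0$ is a rigorous version of that mass-transfer step, and you make explicit that monotonicity of the ratio relies on $a \ge 0$, which the paper uses only implicitly.
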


\begin{proof}
   Note that it suffices to show that the ratio $f_1/f_2$ of the probability density functions $f_1$ and $f_2$ (of $Y_1$ and $Y_2$, respectively) monotonically decreases over the integral $[a,b]$.  Then we can consider the point where the two density functions, conditioned on $a\leq Y_i\leq b$, are equal, and transfer probability mass from the left of that point to the right of that point to go from $f_1$ to $f_2$, increasing the mean value in the process.  
    
   In this case, we have that $f_1(y)=C_1e^{-\frac{y^2}{2\sigma_1^2}}$, and $f_2(y)=C_2e^{-\frac{y^2}{2\sigma_2^2}}$ for some constants $C_1, C_2$.  Thus, the ratio $f_1(y)/f_2(y)$ is equal to, for some constant $C$, $Ce^{-y^2\frac{\sigma_2^2-\sigma_1^2}{2\sigma_1^2\sigma_2^2}}$.  If $\sigma_2\geq\sigma_1$, this ratio is monotonically decreasing in $y$ as desired.
\end{proof}

As in the proof of \Cref{prop:alphamon}, when increasing $\alpha$ to $\alpha+\epsilon$, we can first update $\alpha$ in the expression for $x_i$ and then update $\alpha$ in the expression for $X$.  Updating $\alpha$ in the expression for $x_i$ increases both cutoffs, following the same logic as \Cref{prop:alphamon}.  Updating $\alpha$ in the expression for $X$ increases the standard deviation of $X$ (by \Cref{cosmon}), so increases the average type by \Cref{widened}.
\medskip

Note that this proof does not necessarily work properly if which university is more selective flips.  In this case, if $\alpha_{th}$ is the point at which both universities are equally selective, the above gives that the average type in the moderate-scoring group monotonically increases for $\alpha\in(0,\alpha_{th})$ and for $\alpha\in(\alpha_{th},1)$, giving the desired conclusion.

\subsection{Proof of \Cref{discontalpha}}\label{app:discontalpha}

By \Cref{clm:adm_threshold}, the students admitted to University 1 are those with admissions metric $A\geq\frac{\tau_Q \sigma_{\alpha}^2}{\alpha\beta+(1-\alpha)(1-\beta)}$.  When $\alpha < \alpha_{th}$, then, by assumption, all such students pick University 1.  Thus, as $\alpha\rightarrow \alpha_{th}$ from the left, the limit of the average type is $\mathbb{E}\left[t\middle|A \geq \frac{\tau_{Q,1} \sigma_{\alpha}^2}{\alpha_{th} \beta_1+(1-\alpha_{th})(1-\beta_1)}\right]$.  However, as $\alpha\rightarrow \alpha_{th}$ from the right, the only students who go to University 1 are those whose admissions metric are is in the band between the two thresholds, whose width goes to $0$.  Thus, the average type converges towards $\mathbb{E}\left[t\middle|A = \frac{\tau_{Q,1} \sigma_{\alpha}^2}{\alpha_{th} \beta_1+(1-\alpha_{th})(1-\beta_1)}\right]$, which is lower.  Thus, there is a discontinuous drop.

\subsection{Proof of \Cref{prop:nonmonalpha}}\label{app:nonmonalpha}

As in the proof of \Cref{discontalpha}, when $\alpha < \alpha_{th}$, the students who pick University 1 are a random subset of the high-scoring group admitted to both universities.  Thus, as before, as $\alpha\rightarrow \alpha_{th}$ from the left, the limit of the average type is $\mathbb{E}\left[t\middle|A \geq \frac{\tau_{Q,1} \sigma_{\alpha}^2}{\alpha_{th} \beta_1 + (1-\alpha_{th})(1-\beta_1)}\right]$.  However, if $\alpha> \alpha_{th}$, the students who go to University 1 consist  of the entire moderate-scoring group admitted to one university and a proportion $\epsilon$ of the high-scoring group admitted to both universities.  As $\alpha\rightarrow \alpha_{th}$ from the right, the size of the moderate-scoring group goes to $0$ as the two admission cutoffs converge.  Thus, the proportion of students at University 1 who are in the high-scoring group goes to $1$ as $\alpha\rightarrow \alpha_{th}$, and the average type remains continuous.  However, for arbitrarily small $\delta$, we can make $\epsilon$ sufficiently small such that students in the moderate-scoring group dominate University $1$ when $\alpha= \alpha_{th} + \delta$.  This implies that the average type, as $\alpha$ increases from $\alpha_{th}$, becomes lower than when $\alpha= \alpha_{th}-\delta$, implying non-monotonicity.

\section{Proofs for Section \ref{sec:mm}}
\subsection{Proof of \Cref{paradox}}
\label{app:paradox}
    In this case, $A_1$ is independent of soft skill and $A_2$ is independent of type.  This implies three things: first, the university can determine expected type based solely on $A_1$ and expected soft skill based solely on $A_2$; second, the average soft skill conditioned on $A_1$ is zero (and vice versa); and, lastly, $A_1$ and $A_2$ are independent of each other.  
    \medskip
    
    Fixing $A_1=a_1$ and $A_2=a_2$, we can therefore apply \Cref{condistri} to get that a student's expected type is $\frac{a_1}{1+\sigma_{Z,1}^2}$, and further get that $\mathbb{E}[\beta t+(1-\beta)s|A_1=a_1]=\frac{a_1\beta}{1+\sigma_{Z,1}^2}$.  This implies that, for $A_1$, the admissions requirement is $A_1\geq \frac{\tau_Q(1+\sigma_{Z,1}^2)}{\beta}$.  Thus, because type is independent of $A_2$, the average type of admits is equal to the average type of all students with admissions metric exceeding the requirement.  Let $E=\mathbb{E}\left[A_1|A_1\geq \frac{\tau_Q(1+\sigma_{Z,1}^2)}{\beta}\right]=\mathbb{E}\left[A_1|A_1\geq \frac{(\tau_Q/\sigma_{\beta})(\sigma_{\beta})(1+\sigma_{Z,1}^2)}{\beta}\right]$.  Similarly to our argument in \Cref{prop:meantypesoft}, we have that, since expected type is linear in one's admissions metric, the expected type of those meeting the admissions requirement is $\frac{E}{1+\sigma_{Z,1}^2}$.  It is clear that increasing $\tau_{A,1}$ means that precisely the students with the lowest values for $A_1$ are eliminated, so the average type increases.  Thus, we need to show that, as $\beta$ decreases, $\tau_{A,1}$ increases.  This holds because $\tau_{A,1}=\frac{(\tau_Q/\sigma_{\beta})(\sigma_{\beta})(1+\sigma_{Z,1}^2)}{\beta}$, $\tau_Q/\sigma_{\beta}$ and $\sigma_Z$ are constants, and a similar argument to \Cref{cosmon} gives that $\sigma_\beta/\beta$ increases as $\beta$ decreases.  This means that average type increases if the university focuses more on soft skill rather than type.
    \medskip

    A similar argument examining $A_2$ shows that average soft skill monotonically increase as $\beta$ increases.

\subsection{Proof of \Cref{multiparadox}}
\label{app:multiparadox}
    \Cref{paradox} gives that the students admitted to University $a$ are those with $A_1\geq \frac{\tau_{Q,a}(1+\sigma_{Z,1}^2)}{\beta_a}$ and $A_2\geq \frac{\tau_{Q,a}(1+\sigma_{Z,2}^2)}{1-\beta_a}$.  Similarly, the students admitted to University $b$ are those with $A_1\geq \frac{\tau_{Q,b}(1+\sigma_{Z,1}^2)}{\beta_b}$ and $A_2\geq \frac{\tau_{Q,b}(1+\sigma_{Z,2}^2)}{1-\beta_b}$.  WLOG, assume that University $a$ has a higher cutoff for $A_1$; i.e.,  $\frac{\tau_{Q,a}(1+\sigma_{Z,1}^2)}{\beta_a}\geq \frac{\tau_{Q,b}(1+\sigma_{Z,1}^2)}{\beta_b}$.  There are two cases:

    \begin{itemize}
        \item If University $a$ also has a higher cutoff for $A_2$, all students accepted to University $a$ must also meet the cutoffs for University $b$.  Thus, the set of students attending University $a$ is a random sample of the students accepted to both universities.  Because the students at University $a$ meet higher cutoffs on both metrics, their average type and average soft skill must be higher than the group admitted to University $b$.  The remaining admits to University $b$, who attend University $b$, must then have a lower average type and soft skills.  This corresponds to case (a) in the proposition statement.
        \item If University $b$ has a lower cutoff for $A_2$, there are three groups we must consider:
        \begin{itemize}
            \item Students admitted to both universities choose a university at random.
            \item After considering students in the first group, students admitted only to University $a$ were not admitted to University $b$ because their predicted soft skill based on $A_2$ was too low.  These students lower the average soft skill at University $a$ compared to those in the first group, but do not change the average type.
            \item Students admitted only to University $b$ were not admitted to University $a$ because their predicted type based on $A_1$ was too low.  These students lower the average type at University $b$ compared to those in the first group, but do not change the average soft skill.
        \end{itemize}
        Moreover, in this case, we have that $\frac{\tau_{Q,a}(1+\sigma_{Z,1}^2)}{\beta_a}\geq \frac{\tau_{Q,b}(1+\sigma_{Z,1}^2)}{\beta_b}$, but $\frac{\tau_{Q,a}(1+\sigma_{Z,2}^2)}{1-\beta_a}<\frac{\tau_{Q,b}(1+\sigma_{Z,2}^2)}{1-\beta_b}$.  This implies that $\left(\frac{\tau_{Q,a}(1+\sigma_{Z,1}^2)}{\beta_a}\right)\div\left(\frac{\tau_{Q,a}(1+\sigma_{Z,2}^2)}{1-\beta_a}\right)>\left(\frac{\tau_{Q,b}(1+\sigma_{Z,1}^2)}{\beta_b}\right)\div\left(\frac{\tau_{Q,b}(1+\sigma_{Z,2}^2)}{1-\beta_b}\right)\Rightarrow \frac{1+\sigma_{Z,1}^2}{1+\sigma_{Z,2}^2}\frac{1-\beta_a}{\beta_a}> \frac{1+\sigma_{Z,1}^2}{1+\sigma_{Z,2}^2}\frac{1-\beta_b}{\beta_b}\Rightarrow\beta_a<\beta_b$.  This is case (b) of the proposition statement.
    \end{itemize}

\end{document}